\documentclass[aps,pra,10pt,letterpaper,twocolumn,tightenlines,superscriptaddress,notitlepage]{revtex4-2}
\usepackage{qcircuit}
\usepackage[dvips]{graphicx}
\usepackage{amsmath,amssymb,amsthm,mathrsfs,amsfonts,dsfont}
\usepackage{subfigure, epsfig}
\usepackage{braket}
\usepackage{bm}
\usepackage{enumerate}
\usepackage{xcolor}
\usepackage{comment}
\usepackage{scalefnt}
\usepackage[colorlinks = true, 
citecolor = blue,
linkcolor = blue,
urlcolor = blue]{hyperref}

\usepackage{pagecolor}
\usepackage[T1]{fontenc}
\usepackage{lmodern}
\usepackage[utf8]{inputenc}
\usepackage{microtype}

\usepackage{tikzscale}
\usepackage{pgfplots}
\pgfplotsset{compat=newest}
\usetikzlibrary{plotmarks}
\usetikzlibrary{arrows.meta}
\usetikzlibrary{external}
\tikzexternalize[prefix=TikZ/]
\usepgfplotslibrary{patchplots}
\usepackage{grffile}


\usepackage{algorithm}  
\usepackage{algpseudocode}  

%

\newcommand{\rV}{\rVert}
\newcommand{\lV}{\lVert}

\newcommand{\tV}{\vert\kern-0.25ex\vert\kern-0.25ex\vert}

\newcommand{\laa}{\langle\!\langle}
\newcommand{\raa}{\rangle\!\rangle}

\newcommand{\ct}{\ensuremath{^{\dagger}}}

\DeclareMathOperator{\tr}{Tr}
\DeclareMathOperator{\Tr}{Tr}

\definecolor{applegreen}{rgb}{0.55, 0.71, 0.0}
\newcommand{\wy}[1]{{\color{applegreen}\textbf{WY:} #1}}
\newcommand{\pz}[1]{{\color{red}\textbf{PZ:} #1}}

\newcommand{\comments}[1]{}

\newenvironment{lmbis}[1]
{%
	\addtocounter{lemma}{-1}%
	\begin{lemma}}
	{\end{lemma}}

\graphicspath{{./figure/}}


\newcommand{\cD}{\ensuremath{\mathcal{D}}}
\newcommand{\cE}{\ensuremath{\mathcal{E}}}

\newcommand{\cG}{\ensuremath{\mathcal{G}}}
\newcommand{\cH}{\ensuremath{\mathcal{H}}}

\newcommand{\cL}{\ensuremath{\mathcal{L}}}
\newcommand{\cM}{\ensuremath{\mathcal{M}}}

\newcommand{\cO}{\ensuremath{\mathcal{O}}}
\newcommand{\cP}{\ensuremath{\mathcal{P}}}

\newcommand{\cR}{\ensuremath{\mathcal{R}}}

\newcommand{\cT}{\ensuremath{\mathcal{T}}}
\newcommand{\cU}{\ensuremath{\mathcal{U}}}
\newcommand{\cV}{\ensuremath{\mathcal{V}}}

\newcommand{\bbE}{\ensuremath{\mathbb{E}}}

\newcommand{\bbG}{\ensuremath{\mathbb{G}}}

\newcommand{\coleq}{\mathrel{\mathop:}\nobreak\mkern-1.2mu=}

\newcommand{\renyi}{R$\mathrm{\acute{e}}$nyi}
\newcommand{\mc}{\mathcal}

\newcommand{\mr}{\mathrm}
\newcommand{\mbb}{\mathbb}

\newcommand{\id}{\mathrm{id}}
\newcommand{\ketbra}[2]{\vert #1 \rangle \langle #2 \vert}
\newcommand{\lket}[1]{\vert #1 \rangle\!\rangle}
\newcommand{\lbra}[1]{\langle\!\langle #1 \vert}
\newcommand{\lbraket}[2]{\langle\!\langle #1 \vert #2 \rangle\!\rangle}
\newcommand{\lketbra}[2]{\vert #1 \rangle\!\rangle\langle\!\langle #2 \vert}

\definecolor{almond}{rgb}{0.98, 0.81, 0.69}

\newtheorem{theorem}{Theorem}
\newtheorem{proposition}{Proposition}

\newtheorem{lemma}{Lemma}
\newtheorem{assumption}{Assumptions}

\newtheorem{definition}{Definition}
\newtheorem{protocol}{Protocol}

\definecolor{mycolor1}{rgb}{0.98431,0.41569,0.29020}%
\definecolor{mycolor2}{rgb}{0.45490,0.66275,0.81176}%
\definecolor{mycolor3}{rgb}{0.87059,0.17647,0.14902}%
\definecolor{mycolor4}{rgb}{0.16863,0.54902,0.74510}%
\definecolor{mycolor5}{rgb}{0.64706,0.05882,0.08235}%
\definecolor{mycolor6}{rgb}{0.01569,0.35294,0.55294}%

\definecolor{divergent1}{rgb}{0.1680, 0.5117, 0.7266}%
\definecolor{divergent2}{rgb}{0.6680, 0.8633, 0.6406}%
\definecolor{divergent3}{rgb}{0.9883, 0.6797, 0.3789}%
\definecolor{divergent4}{rgb}{0.8398, 0.0977, 0.1094}%

\newcommand{\bvec}{\boldsymbol}


\begin{document}

\title{Robust shadow estimation}

\author{Senrui Chen}
\thanks{These authors contributed equally to this work.}
\affiliation{Center for Quantum Information, Institute for Interdisciplinary Information Sciences, Tsinghua University, Beijing 100084, China}
\affiliation{Department of Electronic Engineering, Tsinghua University, Beijing 100084, China}
\affiliation{Pritzker School of Molecular Engineering, The University of Chicago, Illinois 60637, USA}
\author{Wenjun Yu}
\thanks{These authors contributed equally to this work.}
\affiliation{Center for Quantum Information, Institute for Interdisciplinary Information Sciences, Tsinghua University, Beijing 100084, China}
\author{Pei Zeng}
\email{qubitpei@gmail.com}
\affiliation{Center for Quantum Information, Institute for Interdisciplinary Information Sciences, Tsinghua University, Beijing 100084, China}
\affiliation{Pritzker School of Molecular Engineering, The University of Chicago, Illinois 60637, USA}
\author{Steven T. Flammia}
\affiliation{AWS Center for Quantum Computing, Pasadena, CA 91125, USA}

\begin{abstract}
Efficiently estimating properties of large and strongly coupled quantum systems is a central focus in many-body physics and quantum information theory. While quantum computers promise speedups for many of these tasks, near-term devices are prone to noise that will generally reduce the accuracy of such estimates.
Here, we propose a sample-efficient and noise-resilient protocol for learning properties of quantum states building on the shadow estimation scheme [Huang et. al., Nature Physics 16, 1050–1057 (2020)]. 
By introducing an experimentally-friendly calibration procedure, our protocol can efficiently characterize and mitigate noises in the shadow estimation scheme, given only minimal assumptions on the experimental conditions. 
When the strength of noises can be bounded, our protocol approximately retains the same order of sample-efficiency as the standard shadow estimation scheme, while also possesses a provable noise resilience. 
We give rigorous bounds on the sample complexity of our protocol and demonstrate its performance with several numerical experiments, including estimations of quantum fidelity, correlation functions and energy expectations, etc., which highlight a wide spectrum of potential applications of our protocol on near-term devices.
\end{abstract}
\date{\today}
\maketitle


\section{Introduction} \label{sec:Intro}



We are in the process of building large-scale and controllable quantum systems. 
This not only provides new insights and tool kits for fundamental research in quantum many-body systems~\cite{amico2008Entanglement} and the quantum nature of spacetime~\cite{qi2018gravity}, but also yields fruitful applications in computing~\cite{shor1994algorithms,grover1997quantum,lloyd1996Simulators,nielsen2011Quantum}, communication~\cite{bennett84cryptography,ekert1991cryptography,bennett1993Teleporting}, and sensing~\cite{wineland1992squeezing,giovannetti2006Metrology}. 
Learning the properties, e.g., fidelity~\cite{da2011practical,Flammia2011direct}, entanglement~\cite{brandao2005quantifying,abanin2012measuring}, and energy~\cite{peruzzo2014variational} of generated quantum states is usually a major step in many quantum benchmarking protocols and quantum algorithms. 
Among various figure of merits, robustness and efficiency are two key factors to assess the practicality of any property learning protocol.

In the Noisy Intermediate-Scale Quantum (NISQ) era~\cite{preskill2018quantum}, quantum circuits inevitably suffer from noise. 
The robustness of a property learning protocol then refers to the ability to tolerate such noise. 
In a typical property estimation process, we generate several identical copies of the target quantum states, and then measure them using some devices which might be noisy and uncharacterized.
To verify the property estimates, one has to introduce new benchmarking devices, which (in the NISQ era) will also be noisy. 
Consequently, we will be trapped into a loop of benchmarking. 
To get rid of this, at least two approaches have been proposed: One is to introduce extra assumptions on the noise model, in which case we might be able to mitigate the error~\cite{temme2017error,endo2018practical,maciejewski2020mitigationofreadout,bravyi2020mitigating}, but such assumptions may not be verifiable. 
The other is to use device-independent protocols~\cite{mckague2012robust,brunner2014Bell,guhne2009entanglement} which do not have any assumptions on the devices, but such protocols are mostly designed for some specific property learning tasks (\textit{e.g.} entanglement detection), and their requirements on devices and computational/sample complexity can be too strict to produce anything informative in practice. 

Thus, while property learning and testing leads to large efficiency gains in sample and computational complexity, one must in general have a well-characterized device for these methods to be applicable. 
Quantum tomography~\cite{ariano2007optimal,guta2020fast} is a standard method to extract complete characterization information, but it requires exponentially many samples with respect to the number of qubits. 
Several efficient tomographic schemes were proposed based on some properties of the prepared states, such as the low rank property~\cite{gross2010tomography,Flammia2012compressed}, permutation symmetry~\cite{toth2010permutationally,moroder2012permutationally}, and the locality of Schmidt decomposition~\cite{cramer2010efficient,baumgratz2013scalable}. 
Nevertheless, such assumptions are restrictive and not applicable in many cases. 
Another line of research focuses on efficiently extracting partial information of a quantum state without any prior knowledge. 
An example is the quantum overlapping tomography~\cite{cotler2020quantum, bonet2019nearly} which can simultaneously estimate all $k$-qubit reduced density matrix of an arbitrary quantum state in a sample-efficient manner for small $k$.
The simplest version of this idea is to measure uniformly random Pauli strings~\cite{evans2019scalable}, which leads to a sample complexity of $O(k 3^k \log n)$ for estimating all $k$-body Pauli observables to fixed precision.
Machine-learning based approaches are also proposed~\cite{carrasquilla2019reconstructing} in this direction.

Recently, a new paradigm for efficient and universal quantum property estimation has been proposed named quantum shadow estimation.
Shadow estimation was first put forward in Ref.~\cite{aaronson2018shadow}. 
Roughly speaking, this scheme can simultaneously estimate the expectation values with respect to $N$ observables of an unknown $d$-dimensional quantum state with order $\log d\log N$ number of samples, which is usually more efficient than either conducting full tomography or measuring the $N$ observables one by one. 
Later on, a more experiment-friendly shadow estimation scheme was proposed~\cite{huang2020predicting}, which is able to estimate many useful properties of a quantum system with a small number of samples (see also~\cite{paini2021estimating}). 
This protocol is also proven to be worst-case sample-optimal in the sense that any other protocol that is able to accurately estimate any collection of arbitrary observables must consume a number of samples at least comparable to this one.
Although promising for a broad spectrum of applications, the shadow estimation scheme in Ref.~\cite{huang2020predicting} (as well as the random Pauli scheme from~\cite{evans2019scalable}) assume perfect implementation of a group of unitary gates as well as ideal projective measurement on the computational basis. 
It remains unclear how experimental noise can affect the performance of this scheme.

In this work, we reexamine the shadow estimation scheme and regard it as a twirling and retrieval procedure of the measurement channel. 
In this way, we extend shadow estimation to the case when the unitaries and measurements are noisy. 
With similar techniques used in the study of randomized benchmarking~\cite{Emerson2005,knill2008randomized,chow2009randomized,magesan2011scalable,helsen2019new}, we propose a modified shadow estimation strategy which is noise-resilient. 
When the noise in the unitary operations and measurements is small, the robust shadow tomography scheme is able to faithfully estimate the required properties with a small additional cost, subject only to the assumption that one can prepare the initial ground state $\ket{0}^{\otimes n}$ with high fidelity. 
The proposed scheme is both robust and efficient, and hence highly practical for property estimation of a quantum system.

\section{Preliminaries}

We first introduce the Pauli-transfer-matrix (PTM) representation (or Liouville representation) to simplify the notation. 
Note that all the linear operators $\mc{L}(\mc{H}_d)$ on the underlying $n$-qubit Hilbert space $\mc{H}_d$ with $d=2^n$ can be vectorized using the $n$-qubit (normalized) Pauli operator basis $\{\sigma_a\coleq P_a/\sqrt{d}\}_a$, where $P_a$ are the usual Pauli matrices. 
For a linear operator $A\in\mc{L}(\mc{H}_d)$, we define a column vector $\lket{A} \in \mc{H}_{d^2}$ with the $a$-th entry to be $\lket{A}_a = \tr(P_a A) / \sqrt{d}$. 
The inner product on the vector space $\mc{H}_{d^2}$ is defined by the \emph{Hilbert-Schmidt inner product} as $\lbraket{A}{B}\coleq\tr(A^\dag B)$.
The normalized Pauli basis $\{\sigma_a\}_a$ is then an orthonormal basis in $\mc{H}_{d^2}$.
Superoperators on $\mc{H}_d$ are linear maps taking operators to operators $\mc{L}(\mc{H}_d) \to \mc{L}(\mc{H}_d)$.
In the vector space $\mc{H}_{d^2}$, a superoperator $\mc{E}$ can be represented by a matrix in the Pauli basis, with the entries given by $\mc{E}_{ab}=\lbraket{\sigma_a}{\mc{E}(\sigma_b)}=\lbra{\sigma_a}\cE\lket{\sigma_b}$. 
Choosing the Pauli basis for the superoperator is sometimes called the Pauli transfer matrix. 
With a slight abuse of notation, we sometimes denote a superoperator and its PTM using the same notation.
A detailed introduction to the PTM is given in Appendix~\ref{Liouville}.

In this work, we focus on the task of estimating the expectation values $\{\tr(O_i\rho)\}_i$ of a set of observables $\{O_i\}_i$ on an underlying unknown quantum state $\rho$,
\begin{equation}
	\tr(O_i\rho) = \lbraket{O_i}{\rho}, \quad 1\leq i\leq N.
\end{equation}
When the number of observables $N$ is large, a direct exhaustive measurement of the (generally incompatible) observables $\{O_i\}$ on $\rho$ is expensive.
Besides, in many cases we may want to perform tomographic experiments on $\rho$ before deciding which observables $\{O_i\}$ should be estimated. 
To realize this, a natural idea is to insert an extra prepare-and-measure superoperator between $\lbra{O_i}$ and $\lket{\rho}$,
\begin{equation} \label{eq:insertmap}
	\lbraket{O_i}{\rho} \to \sum_{x} \lbraket{O_i}{A_x}\lbraket{E_x}{\rho}.
\end{equation}
In an experiment, we first apply a POVM measurement $\{E_x\}_x$ at $\rho$.
Then, conditioned on the outcome $x$, we calculate $\lbraket{O_i}{A_x}$ via classical post-processing. 
If we repeat this procedure, then the sample average over these experiments gives an estimator for $\lbraket{O_i}{\rho}$. 
As long as the inserted superoperator $\sum_x \lketbra{A_x}{E_x}$ equals to $\mc I$, this estimator will be unbiased. 

To construct a realization of such a superoperator, we consider the dephasing channel in the computational basis ($Z$-basis) $\mc{M}_Z \coleq \sum_{z} \lketbra{z}{z}$, where $\lket{z}$ is the vectorization of the $Z$-basis eigenstate $\ket{z}\bra{z}$, with $z\in\{0,1\}^{\otimes n}$. 
Expanding $\mc{M}_Z$ in the Pauli operator basis $\{\lket{\sigma_0}, \lket{\sigma_x}, \lket{\sigma_y}, \lket{\sigma_z}\}^{\otimes n}$, we have
\begin{equation}
\begin{aligned}
	\mc{M}_Z &= \left( \lketbra{\sigma_0}{\sigma_0} + \lketbra{\sigma_z}{\sigma_z} \right)^{\otimes n} \\
	&= \left[\text{diag}(1,0,0,1)\right]^{\otimes n},
\end{aligned}
\end{equation}
where $\text{diag}(a_1,a_2,\cdots)$ is a diagonal matrix with the diagonal elements $a_1, a_2, \cdots$. 
If $\mc{M}_Z$ were invertible, we could insert the superoperator $\mc{M}_Z^{-1} \mc{M}_Z = \sum_z \lketbra{\mc{M}_Z^{-1}(z)}{z} = \mc{I}$. 
However, $\mc{M}_Z$ is not invertible due to the lack of $X,~Y$-basis information in a $Z$-basis measurement. 
To make $\mc{M}_Z$ invertible, we can introduce an extra unitary twirling~\cite{huang2020predicting},
\begin{equation}\label{eq:general_noiseless}
	\mc M = \mathop{\mathbb{E}}_{U \in\mathbb{G}} \cU^\dagger \mc{M}_Z \cU.
\end{equation}
Here, $\mathbb{G}$ is a subset of the unitaries $\{U\}$ in $U(d)$ to be specified later, and $\mc{U}$ is the PTM representation of $U$. 

When $\mbb{G}$ forms a group, the PTMs $\{\mc{U}\}$ forms a representation of $\mbb{G}$. 
A direct application of Schur's Lemma~\cite{fulton2013representation} (see Appendix \ref{App:rep}) allows us to calculate the explicit form of $\mc{M}$,
\begin{equation}\label{linearcombine}
	\mc{M}=\sum_{\lambda\in R_{\mbb{G}}}\frac{\tr[\cM_Z \Pi_{\lambda}]}{\tr[\Pi_{\lambda}]}\Pi_{\lambda},
\end{equation}
where $R_{\mbb{G}}$ is the set of irreducible sub-representations of the group $\mbb{G}$, and $\Pi_{\lambda}$ is the corresponding projector onto the invariant subspace. 
Since the projectors are complete and orthogonal to each other, $\cM$ is invertible if and only if all the coefficients are non-zero. 
Therefore the twirling group $\mbb{G}$ needs to satisfy 
\begin{equation} \label{eq:nonzeroCoff}
	\tr[\mc{M}_Z \Pi_{\lambda}]\neq 0,\ \forall\,\lambda\in R_{\mbb G}.
\end{equation}
Once Eq.~\eqref{eq:nonzeroCoff} is satisfied, we can construct a shadow estimation protocol based on the equation
\begin{equation} \label{eq:shadowM}
	\lbraket{O_i}{\rho} = \mathop \mbb{E}_{U\in \mbb{G}} \sum_{z\in\{0,1\}^{\otimes n}} \lbra{O_i} \mc{M}^{-1} \mc{U}^\dag \lket{z}\lbra{z} \mc{U} \lket{\rho}.
\end{equation}
To implement shadow estimation, one can repeat the following experiment: generate a single copy of $\rho$, act via a randomly sampled unitary $U$, and then perform a $Z$-basis measurement to return an output bit string $b$. 
Then $\lbra{O_i}\mc{M}^{-1} \mc{U}^\dag \lket{b}$ is calculated on a classical computer. 
Thanks to this decoupled processing of $\rho$ with respect to $O_i$, the estimation of different observables can be done in parallel with a relatively small increase in sample complexity.

The quantum shadow estimation procedure can be summarized as in Algorithm~\ref{Proto:QShE}.
\begin{figure}
\begin{algorithm}[H]
		\caption{Shadow Estimation (\textbf{Shadow})~\cite{huang2020predicting}}
		\label{Proto:QShE}
		\begin{algorithmic}[1]
			\Require
			Unknown $n$-qubit quantum state $\rho$, observables $\{O_i\}_{i=1}^M$, $\mbb{G}\subseteq U(2^n)$, quantum channel $\mc M$, and $N,K\in\mbb N_+$.
			\Ensure
			A set of estimates $\{\hat o_i\}$ of $\{\Tr(\rho O_i)\}$.
			\State $R\coleq NK$.
			\For{$r= 1~\text{\textbf{to}}~R$}
			\State Prepare $\rho$, uniformly sample $U\in\mbb G$ and apply to $\rho$.
			\State Measure in the computational basis, outcome $\ket{b}$.
			\State $\hat o_i^{(r)}\coleq\lbra{O_i}\mc M^{-1}\cU^\dagger\lket b,~\forall\,i$.
			\EndFor
			\State $\hat o_i\coleq \text{\textbf{MedianOfMeans}}(\{\hat o_i^{(r)}\}_{r=1}^R,N,K) ,~\forall\,i$.
			\State \Return{$\{\hat{o}_i\}$}.
		\end{algorithmic}
\end{algorithm}
\end{figure}
We refer to $\hat o_i^{(r)}$ as the single-round estimator. 
The subroutine \textbf{MedianOfMeans} divides the $R=NK$ single-round estimators into $K$ groups, calculates the mean value of each group, and takes the median of these mean values as the final estimator. 
As a formula:
\begin{equation} \label{eq:meanmedianestimator}
	\begin{aligned}
	\bar{o}_i^{(k)} &\coleq \cfrac1N\sum_{r=(k-1)N+1}^{kN}\hat o^{(r)}_i,\quad k=1,2,...,K. \\
	\hat{o}_i &\coleq \mathrm{median}\left\{ \bar{o}_i^{(1)},~\bar{o}_i^{(2)},~...,~\bar{o}_i^{(K)} \right\}.
	\end{aligned}
\end{equation}
For the standard shadow estimation algorithm~\cite{huang2020predicting}, the input quantum channel $\mc M$ is decided by Eq.~\eqref{eq:general_noiseless}.

\section{Robust Shadow Estimation} \label{sec:robust_est}

In practice, the unitary operations
and measurements used in the standard shadow estimation algorithm will be noisy. 
We want to mitigate the effect of this noise on the output estimate of the shadow. 
Our strategy to do this is simple: we first learn the noise as a simple stochastic model and then compensate for these errors using robust classical post-processing. 

In general, noise in quantum devices is not stochastic, and coherent errors must be addressed. 
However, thanks to the unitary twirling in shadow estimation, the stochastic nature of the noise is inherent to the protocol itself. 
For example, any noise map that is twirled over a Clifford group that contains the Pauli group as a subgroup will reduce the noise to a purely stochastic Pauli channel~\cite{Knill_2005}. 
The complete characterization of such noise channels can be efficiently and accurately performed~\cite{Harper2020efficient,flammia2019efficient,harper2020fast}. 
It is then straightforward to compensate for such errors by modifying the classical post-processing, although a lengthy analysis is required to show the efficacy of this strategy. 

In order to pursue a rigorous analysis of this strategy, we make the following two assumptions on the noise in the experimental device implementing the shadow estimation. 
\begin{assumption}\label{asm:noise}(Simplifying noise assumptions)
	\begin{itemize}
	    \item[$\bf A1$] The noise in the circuit is gate-independent, time-stationary, Markovian noise. 
	    \item[$\bf A2$] The experimental device can generate the computational basis state $\ket{\bvec 0}\equiv\ket{0}^{\otimes n}$ with sufficiently high fidelity.
	\end{itemize}
\end{assumption}

Our first assumption is used throughout to ensure that there exists a completely positive trace-preserving (CPTP) map such that the noisy gate $\tilde{\mc{U}}$ can be decomposed into $\Lambda \mc{U}$, where $\mc{U}$ is the ideal gate while $\Lambda$ is the noise channel. 
The noise map $\Lambda$ is independent of the unitary $\mc{U}$ and the time $t$. 
It also implies that the noise map occurring in the measurement is fixed independent of time and hence can be absorbed into $\Lambda$.
We remark that assumption $\bf A1$ is widely used in the analysis of randomized benchmarking protocols. 
The gate-independent part of the assumption is especially appropriate when the experimental unitaries are single-qubit gates, but it has been shown that the effect of weak gate dependence (a form of non-Markovianity) generally leads to weak perturbations~\cite{wallman2018randomized,merkel2018randomized}. We also provide numerical evidences in Sec.~\ref{sec:gatedependent} showing that our scheme is still quite robust against realistic gate-dependent noise models in experiment.

For our second assumption \textbf{A2}, from Sec.~\ref{sec:robust_est} to Sec.~\ref{sec:local} we initially make the stronger assumption that the experimental device can prepare the $\ket{\bvec 0}$ state \textit{exactly}. 
In Sec.~\ref{sec:prepare_noise} we relax this to show that when $\ket{\bvec 0}$ is not precisely prepared, but is prepared with sufficiently high fidelity, our protocol still gives a good estimation. 
Fortunately, the computational basis state $\ket{\bvec 0}$ is relatively easy to generate faithfully in many experimental platforms. 

To see how unitary twirling helps to reduce the number of noise parameters, we calculate the noisy version of the random measurement channel $\widetilde{\cM}$, 
\begin{equation} \label{eq:noisytwirl}
\begin{aligned}
	\widetilde{\cM} &= \mathop{\mathbb{E}}_{U \in \mbb{G}} \mc{U}^\dag \mc{M}_Z \Lambda \mc{U} \\
	&= \sum_{\lambda\in R_{\mbb{G}}}\frac{\tr[\cM_Z \Lambda \Pi_{\lambda}]}{\tr[\Pi_{\lambda}]}\Pi_{\lambda} = \sum_{\lambda} f_\lambda \Pi_\lambda,
\end{aligned}
\end{equation}
where the $\{f_\lambda\}$ are expansion coefficients of the twirled channel. 
Note that the channel $\Lambda$ describes both the noise in the gate $\cU$ and in the measurement $\mc M_Z$, which is always possible under our assumption $\bf{A1}$.
The number of $\{f_\lambda\}$ is related to the number of irreducible representations in the PTM representation of the twirling group~$\mbb{G}$. 
Later we will show that the coefficients $\{f_\lambda\}$ can be estimated in parallel, similar to the normal shadow estimation procedure (referred to as the \textit{calibration procedure}). 

\begin{figure}[!htbp]
	\centering
	\includegraphics[width = 0.9\columnwidth]{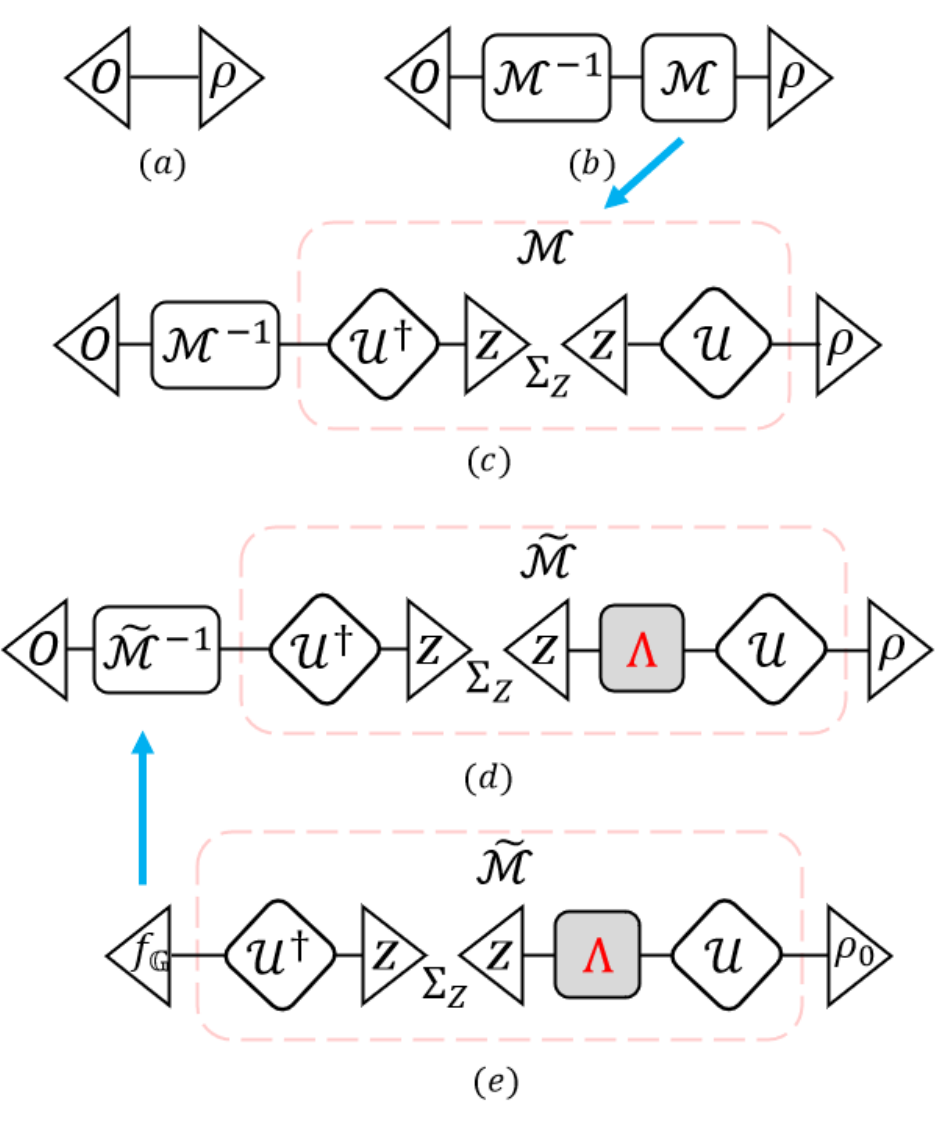}
	\caption{Diagram of the shadow estimation protocol.
	(a) We want to estimate the expectation value $\tr(O\rho)=\lbraket{O}{\rho}$ for a set of observables $\{O_i\}$ and an unknown state $\rho$. 
	(b) To do this, we insert a channel $\mc{M}$ and its corresponding inverse map $\mc{M}^{-1}$ in the middle, which will not change the expectation value. 
	(c) The channel $\mc{M}$ can be realized as a random unitary twirling $\mbb{E}_{U}\,\mc{U}^\dag \cdot \mc{U}$ acting on the $Z$-basis dephasing map $\mc{M}_Z = \sum_{z} \lketbra{z}{z}$. 
	(d) In practice, the implemented unitary $\mc{U}$ and the measurement $\lbra{z}$ are noisy, causing an extra uncharacterized noise channel $\Lambda$.
	(d) In practice, the unitary $\mc{U}$ and the measurement $\lbra{z}$ suffers from a noise channel $\Lambda$, causing an uncharacterized channel $\widetilde\cM$ that needs to be inverted. 
	(e) The calibration procedure of \textbf{RShadow}. By experimenting on some well-characterized state $\rho_0$, we can estimated the channel $\widetilde{\mc{M}}$ and its inverse, hence mitigate the noise in the shadow estimation procedure. Here $f_{\mbb G}$ is the \textbf{NoiseEst}${}_{\mbb G}$ subroutine described in Algorithm~\ref{Proto:robust}.}
	\label{fig:robustshadow}
\end{figure}

Based on the observations above, we propose our robust quantum shadow estimation (\textbf{\textrm{RShadow}}) protocol to faithfully estimate $\{\tr(O_i\rho)\}_i$ even with noise. 
The algorithm is depicted by Fig.~\ref{fig:robustshadow} and it works as follows. 
We first estimate the noise channel $\widetilde{\cM}$ of Eq.~\eqref{eq:noisytwirl} with the \emph{calibration procedure}, and then use the estimator $\widetilde{\cM}$ as the input parameter $\mc M$ of Algorithm~\ref{Proto:QShE} to predict any properties of interest (referred to as the \emph{estimation procedure}). 
The procedure is shown in Algorithm~\ref{Proto:robust}, where the subroutine 
$\text{\textbf{NoiseEst}}$ is decided by $\mbb G$ and is given later. 
\begin{figure}
\begin{algorithm}[H]
	\caption{Robust Shadow Estimation (\textbf{\textrm{RShadow}})}
	\label{Proto:robust}
	\begin{algorithmic}[1]
		\Require
		Unknown $n$-qubit quantum state $\rho$, observables $\{O_i\}_{i=1}^M$, $\mbb{G}\subseteq U(2^n)$ and $N_1,N_2,K_1,K_2\in\mbb N_+$.
		\Ensure
		A set of estimations $\{\hat o_i\}$ of $\{\Tr(\rho O_i)\}$.
		\State $R\coleq N_1K_1$.\Comment{\textit{Calibration}}
		\For{$r= 1~\text{\textbf{to}}~R$}
		\State Prepare $\ket{\bvec 0}$, sample (noisy) $U\in\mbb G$ and apply to $\ket{\bvec 0}$.
		\State Measure in the computational basis, return $\ket{b}$.
		\State $\hat f_\lambda^{(r)}\coleq\text{\textbf{NoiseEst}}_{\mbb G}(\lambda,U,b),~\forall~\lambda\in R_{\mbb G}$.
		\EndFor
		\State $\hat f_\lambda \coleq \text{\textbf{MedianOfMeans}}(\{\hat f_\lambda^{(r)}\}_{r=1}^R,N_1,K_1) ,~\forall~\lambda\in R_{\mbb G}$.\
		\State $\widehat{\cM}\coleq \sum_{\lambda\in R_{\mbb G}}\hat f_\lambda\Pi_\lambda$.
		
		\State $\{\hat{o}_i\} = \textbf{Shadow}(\rho,\{O_i\},\mbb{G},\widehat{\cM},N_2,K_2)$.\Comment{\textit{Estimation}}
		\State \Return{$\{\hat{o}_i\}$}.
	\end{algorithmic}
\end{algorithm}
\end{figure}

In the following discussion, we will focus on two specific groups $\mbb G$: the $n$-qubit Clifford group ${\sf Cl}(2^n)$ and the $n$-fold tensor product of the single-qubit Clifford group ${\sf Cl}_2^{\otimes n}$. 
We will give a specific construction of the \textbf{NoiseEst} subroutine and show the correctness and efficiency of our \textbf{\textbf{\textrm{RShadow}}} algorithm with these two groups.

\section{Robust shadow estimation using global Clifford group}
	
	

We first present a robust shadow estimation protocol using the $n$-qubit global Clifford group, ${\sf Cl}(2^n)$. 
The $n$-qubit Clifford group has many useful properties such as being a unitary 3-design~\cite{webb2015clifford,zhu2017multiqubit,kueng2015qubit}, which is widely used in many tasks of quantum information and quantum computation. 
It is a standard result that the $n$-qubit Clifford group has two irreducible representations in the Liouville representation whose projectors are given by $\lket {\sigma_{\bvec 0}}\lbra {\sigma_{\bvec 0}}$ and $I - \lket{\sigma_{\bvec 0}}\lbra{\sigma_{\bvec 0}}$. 
Assuming the $\widetilde\cM$ channel defined in Eq.~\eqref{eq:noisytwirl} is trace preserving, it can be written as 
\begin{equation}
	\widetilde\cM = \mathop{\mathbb{E}}_{U \sim {\sf Cl}(2^n)} \cU^\dagger \cM_Z\Lambda \cU = \lketbra{\sigma_{\bvec 0}}{\sigma_{\bvec 0}} + f(I-\lketbra{\sigma_{\bvec 0}}{\sigma_{\bvec 0}})
\end{equation}
for some $f\in\mbb R$, i.e.\ as a depolarizing channel. 
It is easy to obtain $f = 1/(2^n+1)$ for the noiseless case using Eq.~\eqref{eq:noisytwirl}.
The noise characterization subroutine with ${\sf Cl}(2^n)$ is defined as follows,
\begin{equation}
	\begin{aligned}
    \text{\textbf{NoiseEst}}_{{\sf Cl}(2^n)}(U,b)\coleq\frac{2^n\lbra b \cU \lket{\bvec 0}-1}{2^n-1},
	\end{aligned}
\end{equation}
where $\lket b$ is the Liouville representation of the computational basis state $\ket b\bra b$ and similar for $\lket{\bvec 0}$.

	Next, define the Z-basis average fidelity of a noise channel $\Lambda$ as $F_{Z}(\Lambda) = \frac{1}{2^n}\sum_{b\in\{0,1\}^n}\lbra b\Lambda\lket b$.
	The following theorem demonstrates the correctness and sample efficiency of our protocol. We remark that the validity of this theorem relies on Assumptions~\ref{asm:noise}.
	
	\begin{theorem}[Informal]\label{th:info1}
		For {\textbf{\textrm{RShadow}}} with $\mathbb{G} = {\sf Cl}(2^n)$, if the number of samples for the calibration procedure satisfies
		\begin{equation}
		R=\tilde{\mc O}(\varepsilon^{-2}F_{Z}^{-2}),
		\end{equation}
		where $F_{Z}\equiv F_{Z}(\Lambda)$ 
		and we assume $F_{Z} \gg 2^{-n}$, 
		then the subsequent estimation procedure with high probability satisfies
		\begin{align}
		\label{eq:unbiased}
		&\left| \mbb E(\hat o^{(r)}) - \Tr(O\rho)  \right| \le \varepsilon \|O\|_\infty,
		\end{align}
		for any observable $O$ and quantum state $\rho$, where $\hat o^{(r)}$ is the single-round estimator defined as in Algorithm~\ref{Proto:QShE}.
	\end{theorem}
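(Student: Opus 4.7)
The plan mirrors the two stages of the algorithm: first show the calibration procedure yields an accurate estimator $\hat f$ of the single depolarizing coefficient $f$ defining $\widetilde{\mathcal M}$, and then show that any $\hat f$ close to $f$ in relative error produces a single-round shadow estimator whose bias is within $\varepsilon\|O\|_\infty$ of $\tr(O\rho)$. For the first step, under Assumption~A1 the probability of outcome $b$ for Clifford $U$ is $\lbra{b}\Lambda\mathcal{U}\lket{\bvec 0}$, so taking the $(U,b)$-expectation of $2^n\lbra{b}\mathcal{U}\lket{\bvec 0}$ yields $2^n\lbra{\bvec 0}\widetilde{\mathcal M}\lket{\bvec 0}$; substituting the depolarizing form of $\widetilde{\mathcal M}$ together with $\lbraket{\sigma_{\bvec 0}}{\bvec 0}^2 = 2^{-n}$ gives $1 + f(2^n-1)$, so after the shift and rescaling $\mathbb E[\hat f^{(r)}] = f$.

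Next I would bound the variance. Writing $\hat f^{(r)} = (2^n X - 1)/(2^n-1)$ with $X = \lbra{b}\mathcal{U}\lket{\bvec 0}\in[0,1]$, only $\mathrm{Var}(X)$ matters. The second moment $\mathbb E[X^2] = \mathbb E_U\sum_b \lbra{b}\Lambda\mathcal{U}\lket{\bvec 0}\,(\lbra{b}\mathcal{U}\lket{\bvec 0})^2$ involves three factors of $\mathcal U$, so the 3-design property of $\mathsf{Cl}(2^n)$ collapses the $U$-average to $\tr\bigl[\Sigma(\Lambda\otimes\cI\otimes\cI)(\Pi^{(3)}_{\mathrm{sym}})\bigr]/\binom{2^n+2}{3}$ with $\Sigma = \sum_b(|b\rangle\langle b|)^{\otimes 3}$ and $\Pi^{(3)}_{\mathrm{sym}}$ the projector onto the symmetric subspace of three copies. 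Expanding $\Pi^{(3)}_{\mathrm{sym}}$ as a sum over $S_3$ and exploiting trace-preservation $\lbra{\sigma_{\bvec 0}}\Lambda = \lbra{\sigma_{\bvec 0}}$ in each permutation term gives $\mathrm{Var}(\hat f^{(r)}) = \mathcal O(4^{-n})$. Combined with the closed form $f = (2^n F_Z - 1)/(4^n-1) = \Theta(F_Z/2^n)$ valid when $F_Z\gg 2^{-n}$, a median-of-means argument with $K_1 = \Theta(\log(1/\delta))$ and $N_1 = \Theta((\varepsilon F_Z)^{-2})$ yields $|\hat f - f|\le (\varepsilon/4)|f|$ with probability at least $1-\delta$, so $R = N_1K_1 = \tilde{\mathcal O}(\varepsilon^{-2}F_Z^{-2})$.

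I would then translate this accuracy into the bias of the estimation round. Conditional on $\hat f$, the shadow identity gives $\mathbb E[\hat o^{(r)}\mid \hat f] = \lbra{O}\widehat{\mathcal M}^{-1}\widetilde{\mathcal M}\lket{\rho}$. Because both channels are diagonal in $\{\lket{\sigma_{\bvec 0}}\lbra{\sigma_{\bvec 0}},\,I - \lket{\sigma_{\bvec 0}}\lbra{\sigma_{\bvec 0}}\}$, their product reduces to $\lket{\sigma_{\bvec 0}}\lbra{\sigma_{\bvec 0}} + (f/\hat f)(I - \lket{\sigma_{\bvec 0}}\lbra{\sigma_{\bvec 0}})$, and the bias equals $|1 - f/\hat f|\cdot|\tr(O\rho) - \tr(O)/2^n|$. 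Bounding the second factor by $2\|O\|_\infty$ via the traceless part of $O$ against a density matrix, and combining with the calibration bound, closes the argument.

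The main obstacle is the variance estimate. The trivial bound $|\hat f^{(r)}|\le 1$ with Hoeffding already gives $R = \tilde{\mathcal O}(4^n\varepsilon^{-2}F_Z^{-2})$, exponentially worse than claimed. Recovering the correct $\mathcal O(4^{-n})$ decay of $\mathrm{Var}(\hat f^{(r)})$ requires the full 3-design identity for $\mathbb E_U\mathcal U^{\otimes 3}$ together with a careful bookkeeping of how the $S_3$ permutations interact with the CPTP constraint on $\Lambda$ so that no spurious $d$-dependent prefactor enters.
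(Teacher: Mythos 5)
Your proposal is correct and takes essentially the same route as the paper's own proof (Proposition~\ref{prop:gl_main} and Theorem~\ref{th:gl_stat}): unbiasedness from the twirled-channel structure, an $\mathcal{O}(4^{-n})$ variance bound via the Clifford 3-design property and the symmetric-subspace projector (with trace preservation of $\Lambda$ controlling the would-be-dominant terms), a median-of-means argument giving relative accuracy on $f=\Theta(F_Z/2^n)$, and the diagonal-channel identity reducing the estimation bias to $|1-f/\hat f|\,\lvert\lbraket{O_0}{\rho}\rvert$. The only cosmetic difference is that you derive $\mathbb{E}[\hat f^{(r)}]=f$ directly from $\lbra{\bvec 0}\widetilde{\cM}\lket{\bvec 0}$ and the Schur-lemma depolarizing form, whereas the paper routes through the average fidelity $F_{\mathrm{avg}}(\cM)$ and the 2-design property.
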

	
	Here and throughout the paper, we use $\tilde{\mc O}$ to {represent the Big-O notation with poly-logarithmic factors suppressed}. 
	The more rigorous version of Theorem~\ref{th:info1} is Theorem~\ref{th:gl_stat} in Appendix~\ref{sec:app_gl}.
	We see that our protocol indeed eliminates the \emph{systematic error} of shadow estimation in a sample-efficient manner, since without the calibration step the empirical expectation value would converge to a value that conflated the noise map $\Lambda$ into the estimate, whereas $\Lambda$ does not appear in Eq.~\eqref{eq:unbiased}. 
	More specifically, if the $Z$-basis average fidelity of the noise channel $\Lambda$ is lower bounded by some constant (\textit{e.g.} constant-strength depolarizing noise), then the sample complexity of our calibration stage is approximately independent of the system size $n$. 
	
	A more realistic noise model to consider is that of local noise with fixed strength, where $\Lambda \coleq \bigotimes_{i=1}^n\Lambda_i$ and each single-qubit noise channel $\Lambda_i$ satisfies $F_{Z}(\Lambda_i)\ge 1-\xi$. 
	In that case, we have $F_{Z}(\Lambda)^{-2}\approx \exp(2n\xi)$ for small $\xi$, so we can efficiently deal with a system size $n$ that is comparable to $\xi^{-1}$. 
	
	Next, we consider the sample complexity of the estimation procedure. 
	Following a similar methodology of bounding the sample complexity in the noise-free standard shadow estimation scheme~\cite{huang2020predicting}, we bound the sample complexity of our \textbf{\textrm{RShadow}} estimation procedure as follows.
	
	\begin{theorem}[Informal]\label{th:gl_all}
	For \textbf{\textrm{RShadow}} with ${\sf Cl}(2^n)$, if the number of calibration samples $R_C$ and the number of estimation samples $R_E$ satisfies
	\begin{equation}
	\begin{aligned}
	R_C &= \tilde{\mc O}(\varepsilon_1^{-2}F_{Z}^{-2}),\\R_E &=\tilde{\mc O}(\varepsilon_2^{-2} F_{Z}^{-2}\log M),
	\end{aligned}
	\end{equation}
	respectively, then the protocol can estimate $M$ arbitrary linear functions $\Tr(O_1\rho),...\Tr(O_M\rho)$ such that $\max_i\Tr(O_i^2)\le 1$, up to accuracy $\varepsilon_1+\varepsilon_2$ with high success probability. 
	\end{theorem}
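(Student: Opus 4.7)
The plan is to decompose the total error of $\hat o_i$ into the bias introduced by using the imperfect calibrated inverse channel $\widehat{\cM}^{-1}$ and the statistical fluctuation of the median-of-means estimator, then bound each separately. Fixing any observable $O_i$ with $\Tr(O_i^2)\le 1$, write
\begin{equation}
\hat o_i - \Tr(O_i\rho) = \bigl(\hat o_i - \mathbb{E}[\hat o_i^{(r)}\mid\widehat{\cM}]\bigr) + \bigl(\mathbb{E}[\hat o_i^{(r)}\mid\widehat{\cM}] - \Tr(O_i\rho)\bigr).
\end{equation}
The second term is exactly the bias treated in Theorem~\ref{th:info1}: with $R_C = \tilde{\cO}(\varepsilon_1^{-2}F_Z^{-2})$ calibration rounds, this bias is at most $\varepsilon_1\|O_i\|_\infty\le \varepsilon_1\sqrt{\Tr(O_i^2)}\le \varepsilon_1$ with high probability over the calibration randomness.

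For the statistical term, I would bound the variance of a single-round estimator $\hat o_i^{(r)} = \lbra{O_i}\widehat{\cM}^{-1}\cU^\dagger\lket{b}$. Condition on a high-probability event that $\hat f$ is within a constant factor of the true $f = \Tr(\cM_Z\Lambda(I-\lketbra{\sigma_{\bvec 0}}{\sigma_{\bvec 0}}))/(2^n-1)$, so that $\widehat{\cM}^{-1}$ acts as a depolarizing-inverse with coefficient $\Theta(F_Z^{-1})$ on the traceless sector. Then follow the Huang--Kueng--Preskill argument: use the fact that ${\sf Cl}(2^n)$ is a unitary 3-design to evaluate the second moment
\begin{equation}
\mathop{\mbb E}_{U\in{\sf Cl}(2^n)}\sum_{b}\lbra{O_i}\widehat{\cM}^{-1}\cU^\dagger\lket{b}\lbra{b}\Lambda\cU\widehat{\cM}^{-1}\lket{O_i}
\end{equation}
in closed form via Weingarten/Schur--Weyl calculations, yielding a bound of $\cO(F_Z^{-2}\Tr(O_i^2))$ after subtracting the trace part, in analogy with the Clifford shadow norm of the noiseless protocol.

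With this variance bound, a standard median-of-means analysis (Hoeffding inside each group, followed by a Chernoff bound on the median) shows that taking $N_2 = \cO(\varepsilon_2^{-2}F_Z^{-2})$ samples per group and $K_2 = \cO(\log(M/\delta))$ groups guarantees $|\hat o_i - \mathbb{E}[\hat o_i^{(r)}\mid\widehat{\cM}]|\le \varepsilon_2$ for a single $O_i$ with failure probability $\delta/M$. A union bound over the $M$ observables fixes the overall failure probability at $\delta$ and gives $R_E = N_2K_2 = \tilde{\cO}(\varepsilon_2^{-2}F_Z^{-2}\log M)$. Combining the two bounds through the triangle inequality yields total error $\varepsilon_1+\varepsilon_2$ uniformly over $\{O_i\}$.

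The main obstacle is the variance calculation with the \emph{random} inverse channel $\widehat{\cM}^{-1}$, which is itself a function of the calibration data and therefore correlated with the estimation randomness only through the conditioning event. Technically one must separate the two sources of randomness, show that the high-probability event from the calibration phase of Theorem~\ref{th:info1} implies $\|\widehat{\cM}^{-1}\|$ is uniformly controlled by $\cO(F_Z^{-1}2^n)$, and then verify that the 3-design variance bound remains valid with this perturbed inverse, producing only an additional multiplicative constant. A secondary subtlety is that the statement uses the weaker norm $\Tr(O_i^2)\le 1$ rather than the noise-modified shadow norm; this is convenient because $\Tr(O_i^2)$ simultaneously upper-bounds $\|O_i\|_\infty^2$ (for the bias) and the traceless-part shadow norm (for the variance), unifying the two estimates.
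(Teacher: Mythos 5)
Your proposal is correct and follows essentially the same route as the paper's own proof: bound the bias via the calibration theorem, bound the single-round variance through the Clifford 3-design/Weingarten calculation conditioned on the high-probability calibration event $\hat f \ge (1+\varepsilon_1)^{-1}f$ (the paper's noisy shadow-norm lemmas), then apply median-of-means with $K=\cO(\log(M/\delta))$ groups, a union bound over the $M$ observables, and the triangle inequality. One cosmetic slip: the true depolarizing coefficient is $f = \Tr\bigl(\cM_Z\Lambda(I-\lketbra{\sigma_{\bvec 0}}{\sigma_{\bvec 0}})\bigr)/(4^n-1) = (2^nF_{Z}-1)/(4^n-1)$, so $\hat f^{-1}$ scales as $\Theta(2^n F_{Z}^{-1})$ on the traceless sector (consistent with your later operator-norm statement, not with the earlier $\Theta(F_{Z}^{-1})$ claim); this normalization cancels in the Weingarten computation and does not affect your final $\cO(F_{Z}^{-2}\Tr(O_i^2))$ variance bound.
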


	The rigorous version of Theorem~\ref{th:gl_all} is Theorem~\ref{th:gl_allr} in Appendix~\ref{sec:app_gl}. 
	Compared with results in Ref.~\cite{huang2020predicting}, one can see that the \textbf{\textrm{RShadow}} scheme has nearly the same sample complexity order as the noise-free standard shadow estimation methods in a low-noise regime.
	
	
	Finally, we comment on the computational complexity of \textbf{\textrm{RShadow}}. 
	The computational complexity of our calibration procedure is favorable since the single-round fidelity estimator can be calculated efficiently with the Gottesman-Knill theorem~\cite{gottesmanPhDthesis,aaronson2004improved}. 
	However, a efficient computation using the Gottesman-Knill theorem for the estimation procedure would require the observable $O$ to have additional structure, such as being a stabilizer state or being a Pauli operator. 
	The standard shadow estimation scheme of Ref.~\cite{huang2020predicting} or the fast Pauli expectation estimation method of Ref.~\cite{evans2019scalable} also have such a requirement.

\section{Robust shadow estimation using local Clifford group} \label{sec:local}
	
Despite the useful properties the global Clifford group possesses, it is often challenging to implement the full $n$-qubit Clifford group under current experimental conditions. 
The local Clifford group ${\sf Cl}_2^{\otimes n}$, which is the $n$-fold tensor product of the single-qubit Clifford group, is an experimentally friendly alternative. 
We now present an robust shadow estimation protocol based on the local Clifford group which can efficiently calibrate and mitigate the error in estimating any \emph{local property}.
	
It is known that the $n$-qubit local Clifford group has $2^n$ irreducible representations~\cite{gambetta2012benchmarking}. 
Being twirled by the local Clifford group, the channel $\widetilde{\cM}$ becomes a Pauli channel that is symmetric among the $x, y, z$ indices, and the Pauli-Liouville representation is
\begin{equation}\label{eq:Mtilde}
	\widetilde{\cM} = \mathop{\mathbb{E}}_{U \sim {\sf Cl}_2^{\otimes n}} \cU^\dagger \cM_Z\Lambda \cU = \sum_{z\in\{0,1\}^n}f_z \Pi_z, 
\end{equation}
where $\Pi_z = \bigotimes_{i=1}^n \Pi_{z_i}$, 
$$
    \Pi_{z_i} = 
	\begin{cases}
	\ \lketbra{\sigma_0}{\sigma_0}, & z_i=0, \\ 
	\ I - \lketbra{\sigma_0}{\sigma_0}, & z_i=1,
	\end{cases}
$$
for $f_z\in\mbb R$ which is called the Pauli fidelity. 
Here, for any string $m\in\{0,1\}^n$, we define $\lket m$ to be the Liouville representation of the computational basis state $\ket m\bra m$, and define $P_m\coleq\bigotimes_{i=1}^{n}P_Z^{m_i}$ and $\sigma_m$ to be the corresponding normalized Pauli operators. 
In the noiseless case, one can obtain $f_z = 3^{-|z|}$ using Eq.~\eqref{eq:noisytwirl}, where $|z|$ is the number of $1$s in $z$. 
	
	
The noise characterization subroutine with ${\sf Cl}_2^{\otimes n}$ is defined as follows
\begin{equation}
	\text{\textbf{NoiseEst}}_{{\sf Cl}_2^{\otimes n}}(z,U,b)\coleq\lbra{b}\mc{U}\lket{P_z},~\forall z\in\{0,1\}^n.
\end{equation}

In the standard shadow estimation using ${\sf Cl}_2^{\otimes n}$~\cite{huang2020predicting} (and in the earlier work~\cite{evans2019scalable}), one can only efficiently estimate observables with small Pauli weight. 
An $n$-qubit observable $O$ is called $k$-local if it can be written as $O=\tilde O_S\otimes I_{[n]\backslash S}$ for some $k$-element index set $S\subset [n]$ and a $k$-qubit observable $\tilde O$. 
Similarly, our \textbf{\textrm{RShadow}} protocol with ${\sf Cl}_2^{\otimes n}$ is also designed for predicting $k$-local observables. 
The correctness and efficiency is given by the following theorem.

\begin{theorem}[Informal]\label{th:info2}
	For \textbf{\textrm{RShadow}} with ${\sf Cl}_2^{\otimes n}$, if the number of samples for the calibration procedure satisfies
	\begin{equation}
	    R=\tilde{\mc O}\left({3^k} {\varepsilon^{-2}F_{Z}^{-2}}\right),
	\end{equation}
		then the subsequent estimation procedure with high probability satisfies
		\begin{equation}
		\begin{aligned}
		\left| \mbb E(\hat o^{(r)}) - \Tr(O\rho)  \right| \le \varepsilon 2^k\|O\|_\infty,
		\end{aligned}
		\end{equation}
		for any $k$-local observable $O$ and quantum state $\rho$, where $\hat o^{(r)}$ is the single-round estimator defined as in Algorithm~\ref{Proto:QShE}.
%
%
	\end{theorem}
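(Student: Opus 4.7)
My plan is to mirror the calibration/estimation analysis of Theorem~\ref{th:info1}, adapting it to the $2^n$ irreducible representations appearing for the local Clifford group. It proceeds in three stages: (i) establish unbiasedness of the calibration estimator, (ii) concentrate the sample mean via median-of-means together with a union bound over the relevant irreps, and (iii) propagate the calibration error to the bias of the single-round property estimator.

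For (i), I would show $\text{\textbf{NoiseEst}}_{{\sf Cl}_2^{\otimes n}}(z,U,b)=\lbra b \mathcal{U}\lket{P_z}$ is unbiased. Under Assumption~$\mathbf{A2}$, the outcome $b$ is drawn with probability $\lbra b\Lambda\mathcal{U}\lket{\bvec 0}$, so
\begin{equation*}
\mathbb{E}_{U,b}\!\left[\lbra b \mathcal{U}\lket{P_z}\right]=\lbra{P_z}\mathbb{E}_U\!\left[\mathcal{U}^\dagger \mathcal{M}_Z \Lambda \mathcal{U}\right]\lket{\bvec 0}=\lbra{P_z}\widetilde{\mathcal{M}}\lket{\bvec 0}.
\end{equation*}
Expanding $\lket{\bvec 0}=\bigotimes_i(\lket{\sigma_0}+\lket{\sigma_z})/\sqrt{2}$ and using $\widetilde{\mathcal{M}}=\sum_{z'}f_{z'}\Pi_{z'}$, the overlap $\lbra{P_z}\Pi_{z'}\lket{\bvec 0}=\delta_{z,z'}$ collapses the expectation to $f_z$. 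For (ii), the single-round quantity equals $\bra b U P_z U^\dagger \ket b\in[-1,1]$, so a standard median-of-means argument gives $|\hat f_z-f_z|\le\delta$ with probability $\ge 1-\eta$ whenever $R\gtrsim\delta^{-2}\log(1/\eta)$. For a $k$-local observable supported on $S$, only the $2^k$ indices $z\subseteq S$ are relevant, so a union bound costs just a logarithmic overhead.

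For (iii), I would unfold the residual bias as
\begin{equation*}
\mathbb{E}[\hat o^{(r)}]-\Tr(O\rho)=\lbra O\bigl(\widehat{\mathcal{M}}^{-1}\widetilde{\mathcal{M}}-\mathcal{I}\bigr)\lket{\rho}=\sum_{z\subseteq S}\frac{f_z-\hat f_z}{\hat f_z}\,\lbra O \Pi_z\lket{\rho},
\end{equation*}
bound $|\lbra O\Pi_z\lket{\rho}|$ via the Pauli expansion of $O_S$ (using that Pauli coefficients of a unit-norm $k$-local observable are controlled by $\|O\|_\infty$), and combine this with a lower bound of the form $|\hat f_z|\gtrsim F_Z\cdot 3^{-|z|}$---which says that the relevant Pauli fidelities do not drop below the $Z$-basis average fidelity times the appropriate noiseless weight. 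Summing the $2^k$ contributions and choosing $\delta$ of order $\varepsilon F_Z/3^{k/2}$ then reproduces both the stated bias bound $\varepsilon 2^k\|O\|_\infty$ and the sample complexity $R=\tilde{\mathcal{O}}(3^k\varepsilon^{-2}F_Z^{-2})$.

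The main obstacle is the fine-grained arithmetic of stage (iii). Without further assumptions on $\Lambda$, an individual $f_z$ can in principle fall below the naïve $F_Z\cdot 3^{-|z|}$ benchmark, and a completely uniform per-term bound on $|\lbra O\Pi_z\lket\rho|$ risks inflating the prefactor from $2^k$ up to $4^k=\sum_{z\subseteq S}3^{|z|}$. The rigorous version (deferred to the appendix) therefore needs either a noise-structure hypothesis or a careful redistribution of the $3^{|z|}$ factors between the numerator and the denominator, so that the calibration precision scales as $\delta^{-2}\sim 3^kF_Z^{-2}$ rather than $9^k$ or worse. Handling this dependence cleanly, while keeping the union-bound overhead polylogarithmic, is the technical crux of the argument.
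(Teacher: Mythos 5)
Your stages (i) and (ii) are correct as far as they go, and your stage (iii) has the same overall shape as the paper's proof (Theorem~\ref{th:lc_stat}), but there is a genuine gap precisely at the point you flag as the "technical crux," and the parameters you propose are arithmetically inconsistent with the claim. With only the crude bound $\hat f_z^{(r)}=\bra{b}UP_zU^\dagger\ket{b}\in[-1,1]$ (variance $\le 1$) and a \emph{uniform additive} precision $\delta\sim\varepsilon F_Z 3^{-k/2}$, the relative error on a weight-$|z|$ fidelity is of order $\delta/f_z\sim\varepsilon\,3^{|z|-k/2}$, which for $|z|=k$ is $\varepsilon\,3^{k/2}$, not $\varepsilon$. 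Feeding this into your bias sum and optimizing with Cauchy--Schwarz gives, for $O$ supported on $k$ qubits with Pauli coefficients $\alpha_{\tilde a}$, a bound of order $\varepsilon\,3^{-k/2}\sum_{\tilde a}3^{|z(\tilde a)|}|\alpha_{\tilde a}|\le\varepsilon\,3^{-k/2}\bigl(\sum_{\tilde a}9^{|z(\tilde a)|}\bigr)^{1/2}\|O\|_\infty=\varepsilon\,(28/3)^{k/2}\|O\|_\infty\approx\varepsilon\,(3.06)^{k}\|O\|_\infty$, which exceeds $\varepsilon\,2^k\|O\|_\infty$ exponentially; conversely, insisting on the stated bias forces $\delta\sim\varepsilon F_Z 7^{-k/2}$ and hence $R\sim 7^k\varepsilon^{-2}F_Z^{-2}$ (or $9^k$ if you instead demand uniform relative precision with variance $1$). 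So no redistribution of the $3^{|z|}$ factors rescues the scheme as long as the only variance information is boundedness.

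The missing idea---and the actual content of the paper's Proposition~\ref{prop:lc_var}---is a \emph{weight-dependent} variance bound: $\mathrm{Var}(\hat f_z^{(r)})\le\mbb E\bigl[(\hat f_z^{(r)})^2\bigr]=3^{-|z|}$. With this, demanding relative precision $\varepsilon$ on each $f_z$ (additive $\gamma_z\sim\varepsilon f_z=\varepsilon\,3^{-|z|}\Gamma_\Lambda(z)$) costs per $z$ only $\mathrm{Var}(\hat f_z^{(r)})/\gamma_z^2\le 3^{|z|}\varepsilon^{-2}\Gamma_\Lambda(z)^{-2}\le 3^{k}\varepsilon^{-2}\Gamma^{-2}$ samples, which is exactly the $3^k$ scaling; the bias then follows directly as $\max_{|z|\le k}|\hat f_z^{-1}f_z-1|\cdot\sum_a 2^{-n}|\lbraket{O}{P_a}|\le\varepsilon\cdot 2^k\|O\|_\infty$, with no per-term balancing needed. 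The paper proves the second moment by a local 3-design/Weingarten computation, but there is an elementary route you could have used: for $U\in{\sf Cl}_2^{\otimes n}$ the estimator takes values in $\{0,\pm1\}$ and is nonzero precisely when every single-qubit Clifford on the support of $z$ maps $Z$ to $\pm Z$, an event of probability exactly $3^{-|z|}$ independent of the noise and of the outcome $b$, whence $\mbb E[(\hat f_z^{(r)})^2]=3^{-|z|}$. Two smaller corrections: your worry that a noise-structure hypothesis may be needed to lower bound $f_z$ is unfounded, since the paper's Lemma~\ref{le:lc_weak} gives $\Gamma_\Lambda(z)\ge 2F_Z(\Lambda)-1$ for \emph{arbitrary} trace-preserving $\Lambda$ (note this is $2F_Z-1$, not $F_Z$, which matters outside the high-fidelity regime); and since the theorem quantifies over all $k$-local observables, the union bound must run over all $z$ with $|z|\le k$ (at most $n^k$ strings, cost $k\ln n$), not merely the $2^k$ sub-patterns of one fixed support.
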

	
	The rigorous version of Theorem~\ref{th:info2} is Theorem~\ref{th:lc_stat} in Appendix~\ref{sec:app_lc}. 
	Indeed, this protocol can calibrate the shadow estimation process for all $k$-local observables using a small number of samples that only depends on $k$ (but basically not on the system size $n$). 
	Note that, Theorem~\ref{th:info2} holds for any gate-independent noise model, even for \emph{global unitary noise}.
	
	Now we investigate the sample complexity of the estimation procedure. 
	We are currently unable to bound the sample complexity against the most general noise channel, but we do have a bound for a local noise model, as shown in the following theorem:
	
	\begin{theorem}[Informal]\label{th:lc_all}
		For \textbf{\textrm{RShadow}} with ${\sf Cl}_2^{\otimes n}$, 
		suppose the noise is local, \textit{i.e.} $\Lambda\coleq\bigotimes_{i=1}^n\Lambda_i$, and satisfies $F_{Z}(\Lambda_i)\ge 1-\xi$ for all $i$ and some $\xi\ll\frac12$.
		If the number of calibration samples $R_C$ and the number of estimation samples $R_E$ satisfy
		\begin{equation}
		\begin{aligned}
		R_C &= \tilde{\mc O}(12^k e^{4k\xi}\varepsilon_1^{-2}),\\ R_E &=\tilde{\mc O}(4^ke^{4k\xi}\varepsilon_2^{-2}\log M),
		\end{aligned}
		\end{equation}
		respectively, 
		then the protocol can estimate $M$ arbitrary linear functions $\Tr(O_1\rho),...\Tr(O_M\rho)$ such that every $O_i$ is $k$-local and $\|O_i\|_\infty \le 1$, up to accuracy $\varepsilon_1+\varepsilon_2$ with high success probability.
	\end{theorem}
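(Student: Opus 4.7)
The plan is to decompose the total error $|\hat o_i - \Tr(O_i\rho)|$ into a \emph{systematic} part coming from the finite-sample calibration channel $\widehat{\mc M}$ and a \emph{statistical} part coming from the variance of the single-round estimator used in the estimation phase. The two parts will each be pushed below $\varepsilon_1$ and $\varepsilon_2$ respectively by choosing $R_C$ and $R_E$, after which a triangle inequality and a union bound over the $M$ observables give the claimed accuracy with high probability; \textbf{MedianOfMeans} is used to boost each per-observable success probability to $1-\delta/M$.

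For the calibration part, I would invoke Theorem~\ref{th:info2} applied separately to each $k$-local observable. That theorem already guarantees that with $R = \tilde{\mc O}(3^k \varepsilon^{-2} F_Z^{-2})$ calibration samples the resulting estimator has systematic error at most $\varepsilon\,2^k\|O\|_\infty$. The only new ingredient here is the observation that for a $k$-local observable $O$ the only Pauli fidelities $f_z$ appearing in $\widehat{\mc M}^{-1}\lket O$ are those with $\mathrm{supp}(z)\subseteq\mathrm{supp}(O)$, so the effective $F_Z$ factor in the sample complexity is a product of at most $k$ single-qubit $Z$-fidelities. Under the local noise assumption $F_Z(\Lambda_i)\ge 1-\xi$ this gives $F_Z^{-2}\le (1-\xi)^{-2k}\le e^{4k\xi}$ for $\xi\ll 1/2$. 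Setting $\varepsilon=\varepsilon_1/2^k$ (using $\|O_i\|_\infty\le 1$) yields $R_C=\tilde{\mc O}(3^k\cdot 4^k\cdot e^{4k\xi}\varepsilon_1^{-2})=\tilde{\mc O}(12^k e^{4k\xi}\varepsilon_1^{-2})$, as required.

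For the estimation part, once $\widehat{\mc M}$ is sufficiently close to $\widetilde{\mc M}$ on the $k$-local sector, the single-round estimator $\hat o^{(r)}=\lbra{O}\widehat{\mc M}^{-1}\mc U^\dagger\lket{b}$ is approximately unbiased, so only its variance needs to be controlled. Following the noise-free analysis of Huang--Kueng--Preskill for the local Clifford ensemble, I would expand $O$ into its Pauli components and use the locally-scrambled second-moment formula to obtain a noise-free variance bound of order $4^k\|O\|_\infty^2$. The presence of noise contributes an additional factor $\max_{|z|\le k}(f_z/3^{-|z|})^{-2}\le F_Z^{-2}\le e^{4k\xi}$ coming from the rescaling of each Pauli fidelity in $\widehat{\mc M}^{-1}$. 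Combining this with \textbf{MedianOfMeans} concentration (which takes $K_2=\mc O(\log(M/\delta))$ groups and $N_2=\mc O(\mathrm{Var}/\varepsilon_2^2)$ samples per group) gives $R_E=\tilde{\mc O}(4^k e^{4k\xi}\varepsilon_2^{-2}\log M)$, matching the statement.

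The main obstacle I anticipate is the noisy variance calculation. In the noise-free case Huang--Kueng--Preskill exploit the fact that the unitary ensemble is locally scrambling, reducing the second moment of $\hat o^{(r)}$ to a clean combinatorial formula in terms of the Pauli weight of $O$; in our noisy setting the analogous computation must (i) replace the ideal Pauli fidelities $3^{-|z|}$ by the true $f_z$, (ii) absorb the finite-sample perturbation $\widehat{\mc M}^{-1}-\widetilde{\mc M}^{-1}$ via a stability estimate that shows it multiplies the variance by a $1+o(1)$ factor once $R_C$ is large enough, and (iii) verify that a \emph{single} calibration run is simultaneously good enough for all $M$ downstream observables, so that $R_C$ does not pick up a further $\log M$ factor. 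I expect the crux of the formal argument in Appendix~\ref{sec:app_lc} to be precisely this second-moment bound, which must simultaneously realize the claimed $4^k e^{4k\xi}$ scaling and re-usability of the calibration data across all $M$ observables.
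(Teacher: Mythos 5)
Your proposal follows essentially the same route as the paper's proof (Theorem~\ref{th:lc_allr} in Appendix~\ref{sec:app_lc}): calibration error is controlled by Theorem~\ref{th:lc_stat} with target accuracy $\varepsilon=\varepsilon_1/2^k$ (which is exactly where the extra $4^k$ enters, giving $3^k\cdot 4^k=12^k$); the estimation variance is bounded by a shadow-norm that factorizes into the Huang--Kueng--Preskill combinatorial factor $4^k\|O\|_\infty^2$, a Pauli-fidelity rescaling factor, and a $(1+\varepsilon_1)^2$ calibration-stability factor (Lemmas~\ref{le:snorm} and~\ref{le:snorm_lc}); and the conclusion follows from median-of-means (Lemma~\ref{le:median}), a union bound over the $M$ observables, and the triangle inequality. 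Your anticipated points (i)--(iii) are precisely what Lemma~\ref{le:snorm_lc} and the final proof resolve; in particular, re-usability of a single calibration run is automatic because the calibration union bound runs over the at most $n^k$ strings $|z|\le k$ (contributing the $k\ln n$ factor to $R_C$), so $\log M$ enters only through $R_E$.

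One quantitative slip should be fixed. The per-qubit noise penalty is \emph{not} a product of single-qubit $Z$-basis fidelities $F_Z(\Lambda_i)\ge 1-\xi$. The quantity that actually governs both the calibration complexity and the variance is the Pauli-fidelity ratio $f_z\,3^{|z|}=\Gamma_\Lambda(z)=\prod_{i:z_i=1}\bigl(2F_Z(\Lambda_i)-1\bigr)\ge(1-2\xi)^{|z|}$ (the paper's Lemma~\ref{le:lc_weak_local}): each active qubit contributes $2F_{Z,i}-1$, not $F_{Z,i}$. Your chain $F_Z^{-2}\le(1-\xi)^{-2k}\le e^{4k\xi}$ lands on the correct exponent only because the second inequality is loose by exactly the factor dropped in the first step; a literal reading of your intermediate bound (which would permit $R_C\propto e^{2k\xi}$) is false, e.g.\ for single-qubit depolarizing noise with $F_{Z,i}=1-\xi$, where $\Gamma_\Lambda(z)=(1-2\xi)^{|z|}$ exactly. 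With the correct quantity one gets $(1-2\xi)^{-2k}\approx e^{4k\xi}$ tightly, which is why the theorem's exponent is $4k\xi$ rather than $2k\xi$.
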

	
	The rigorous version of Theorem~\ref{th:lc_all} is Theorem~\ref{th:lc_allr} in Appendix~\ref{sec:app_lc}. 
	Again, we see \textbf{\textrm{RShadow}} using ${\sf Cl}_2^{\otimes n}$ has a sample complexity similar to the noiseless standard shadow estimation protocol when the noise is local and not too strong. 
	We also remark that, although we do not have a sample complexity bound against a more general noise model, our numerical results show that \textbf{\textrm{RShadow}} can still perform well in that case (see Appendix~\ref{sec:app_numer}). 
	Furthermore, in realistic experiments, one can monitor the standard deviation of estimators in real time, which means they can still suppress statistical fluctuations to an acceptable level even without a theoretical sample complexity bound.
	
	%
		
	Regarding the computational complexity, it is obviously impractical to calibrate all $2^n$ parameters $f_z$. 
	However, since we only care about $k$-local observables, only $\hat f_z^{(r)} $ such that $|z|\le k$ needs to be computed, the number of which is no greater than $n^k$. 
	Further note that $\hat f_z^{(r)}$ can be decomposed as $\prod_{i=1}^n \lbra{b_i}U_i\lket{P_Z^{z_i}}$, so all these $\hat f_z^{(r)}$ can be computed within $O(n^k)$ time using dynamic programming. 
	If there is extra structure of the observables to be predicted (\textit{e.g.} spatially local), the number of necessary $\hat f_z^{(r)}$ can be further reduced.
	In practice, one may store the raw data of the calibration procedure and see what observables are to be predicted, before deciding which set of $f_z$ need to be calculated.
	An example is given below in our numerical experiments.
	The computational complexity for the estimation procedure is therefore low when the observables are $k$-local for reasonably small $k$.

\section{Robustness against state preparation noise}\label{sec:prepare_noise}

In the last two sections, we prove the performance of the \textbf{\textrm{RShadow}} protocol based on the assumption of perfect $\ket{\bvec 0}$ preparation. 
Although $\ket{\bvec 0}$ is relatively easy to prepare on most current quantum computing platforms, state preparation (SP) noise is still inevitable. 
In this section, we show that the \textbf{\textrm{RShadow}} protocol is also robust against small SP noise in the following sense: when $\ket{\bvec 0}$ can be prepared with high fidelity during the calibration procedure, the estimators for the estimation procedure will not be too biased, and the sample complexity will not increase drastically.  
	
Formally, in a realistic calibration procedure, one prepares some $\rho_{\bvec 0}$ instead of $\ket{\bvec 0}\bra{\bvec 0}$ for each round.
We assume $\rho_{\bvec 0}$ is time-independent, which is reasonable if the experimental conditions do not change much during this process. 
We have the following theorems:
	
	
	\begin{theorem}\label{th:sp_g}
	For \textbf{\textrm{RShadow}} using $\sf{Cl}(2^n)$, if the state-preparation fidelity satisfies
	\begin{equation}
	    F(\ket{\bm 0}\bra{\bm 0},\rho_{\bm 0})\ge 1-\varepsilon_{\mr{SP}},
	\end{equation}
	then with 
	the same number of calibration samples as in Theorem~\ref{th:info1}, the subsequent estimation procedure with high probability satisfies
	\begin{equation}
    \left| \mbb E(\hat o^{(r)}) - \Tr(O\rho)  \right| \le (\varepsilon+2\varepsilon_\mr{SP}) \|O\|_\infty.
	\end{equation}
	up to the first order of $\varepsilon$ and $\varepsilon_\mr{SP}$.
	\end{theorem}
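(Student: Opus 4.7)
The plan is to treat imperfect state preparation as an additive perturbation of the calibration target and then propagate that perturbation linearly through the rest of the Theorem~\ref{th:info1} analysis. First I would decompose $\rho_{\bm 0}=\ket{\bm 0}\bra{\bm 0}+\Delta$ with $\Delta$ a traceless Hermitian operator; the fidelity hypothesis gives $\langle\bm 0|\Delta|\bm 0\rangle\in[-\varepsilon_{\mr{SP}},0]$, which will turn out to be the only component of $\Delta$ that contributes, because the orthogonal projectors associated with the global-Clifford twirl annihilate the rest.

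The first computational step is to identify what the calibration subroutine now estimates in expectation. Re-running the derivation of Eq.~\eqref{eq:noisytwirl} with $\lket{\bm 0}$ replaced by $\lket{\rho_{\bm 0}}$ and using $\sum_b\lketbra{b}{b}=\cM_Z$, one obtains $\mbb E[\hat f^{(r)}]=(2^n\lbra{\bm 0}\widetilde{\cM}\lket{\rho_{\bm 0}}-1)/(2^n-1)$. Inserting the global-Clifford depolarizing form of $\widetilde{\cM}$ and using $\lbraket{\sigma_{\bm 0}}{\Delta}=\Tr(\Delta)/\sqrt{2^n}=0$, only the traceless-sector coefficient survives and the calibration bias collapses to
\begin{equation}
    \mbb E[\hat f^{(r)}]-f=\frac{f\cdot 2^n\langle\bm 0|\Delta|\bm 0\rangle}{2^n-1},
\end{equation}
which is bounded in magnitude by $f\varepsilon_{\mr{SP}}$ to leading order in $1/2^n$. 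I would then invoke the concentration / median-of-means estimate underlying Theorem~\ref{th:info1}, but applied to the recentered variable $\hat f-\mbb E[\hat f^{(r)}]$ instead of $\hat f-f$; since that estimate only uses the single-round variance and boundedness, the same calibration sample count keeps $|\hat f-\mbb E[\hat f^{(r)}]|\le f\varepsilon/2$ with high probability, so a triangle inequality yields $|\hat f-f|\le f(\varepsilon/2+\varepsilon_{\mr{SP}})$ to first order.

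Finally I would propagate this through the estimation stage exactly as in the proof of Theorem~\ref{th:info1}. Conditioning on $\hat f$, the single-round shadow estimator has expectation $\mbb E[\hat o^{(r)}\mid\hat f]=\Tr(O)/2^n+(f/\hat f)(\Tr(O\rho)-\Tr(O)/2^n)$. Subtracting $\Tr(O\rho)$, expanding $f/\hat f=1/(1-\eta)\approx 1+\eta$ with $|\eta|\le\varepsilon/2+\varepsilon_{\mr{SP}}$, and using $|\Tr(O\rho)-\Tr(O)/2^n|\le 2\|O\|_\infty$ delivers $(\varepsilon+2\varepsilon_{\mr{SP}})\|O\|_\infty$ to first order, with the $\varepsilon$ term reproducing Theorem~\ref{th:info1} and the $2\varepsilon_{\mr{SP}}$ term accounting for the new SP contribution (the factor of two comes entirely from the $2\|O\|_\infty$ bound on the shifted observable).

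The main obstacle I anticipate is the recentering step: one has to verify that the single-round variance of $\hat f^{(r)}$, and hence the sample complexity prescribed by Theorem~\ref{th:info1}, is only perturbed by an $O(\varepsilon_{\mr{SP}})$ amount when $\lket{\bm 0}$ is replaced by $\lket{\rho_{\bm 0}}$, so that the same tail probability governs deviations from the shifted mean. Granting this bookkeeping, the remaining steps are a routine first-order Taylor expansion of the biased inverse $1/\hat f$.
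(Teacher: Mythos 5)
Your proposal is correct and follows essentially the same route as the paper's proof: compute the multiplicative bias of the calibration estimator, $\mbb E(\hat f^{(r)}_\mr{SP}) = \frac{d\bra{\bm 0}\rho_{\bm 0}\ket{\bm 0}-1}{d-1}\,f$, recenter the median-of-means concentration around this shifted mean, and combine bias plus statistical error through a first-order expansion of $f/\hat f$ in the estimation stage (your bookkeeping splits the budget as $\varepsilon/2$ statistical together with the honest factor $|\Tr(O\rho)-\Tr(O)/2^n|\le 2\|O\|_\infty$, whereas the paper takes $\varepsilon$ statistical, the bias bound $2\varepsilon_\mr{SP}f$ from $\tfrac{d}{d-1}\le 2$, and a unit observable factor — both land on $(\varepsilon+2\varepsilon_\mr{SP})\|O\|_\infty$ to first order). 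The single step you defer — that the SP-noisy single-round variance stays $O(1/d^2)$ so the same sample count suffices — is exactly what the paper's Lemma~\ref{le:sp_g} supplies via a third-moment Weingarten computation, giving $\mr{Var}(\hat f^{(r)}_\mr{SP})\le 6d/(d-1)^3$ and confirming your expectation that the replacement of $\lket{\bm 0}$ by $\lket{\rho_{\bm 0}}$ perturbs the variance only harmlessly.
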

	
	\begin{theorem}\label{th:sp_l}
	For \textbf{\textrm{RShadow}} using $\sf{Cl}_2^{\otimes n}$, if the state is prepared as a {product state} $\rho_{\bm 0}=\bigotimes_{i=1}^n\rho_{0,i}$ and the single-qubit state-preparation fidelity satisfies
	\begin{equation}
	    F(\ket{0}\bra{0},\rho_{0,i})\ge 1-\xi_\mr{SP},\quad\forall i\in[n],
	\end{equation}
	then with 
	the same number of calibration samples as in Theorem~\ref{th:info2}, the subsequent estimation procedure with high probability satisfies
	\begin{equation}
    \left| \mbb E(\hat o^{(r)}) - \Tr(O\rho)  \right| \le (\varepsilon+2 k\xi_{SP})2^k \|O\|_\infty.
	\end{equation}
	up to the first order of $\varepsilon$ and $k\xi_{SP}$. $k$ is the locality of observable $O$.
	\end{theorem}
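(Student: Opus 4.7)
The plan is to quantify how imperfect product preparation shifts the expectation of the calibration estimator $\hat f_z^{(r)}$, propagate that shift through $\widehat{\mc M}^{-1}$ into the single-round estimator $\hat o^{(r)}$, and bound the resulting bias by a first-order Taylor expansion in $\xi_{\mr{SP}}$. Since each $2F_i-1 \geq 1-2\xi_{\mr{SP}}$ is bounded away from zero, the MedianOfMeans concentration and estimation-stage variance bounds inherited from Theorem~\ref{th:info2} are perturbed only at order $\xi_{\mr{SP}}$, so the sample-complexity budgets carry over unchanged and the only new content is the bias computation.

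The key step is to evaluate the expectation of the calibration estimator under product preparation $\rho_{\bm 0}=\bigotimes_i\rho_{0,i}$. Under assumption $\mathbf{A1}$,
\begin{equation*}
    \mbb E[\hat f_z^{(r)}] = \lbra{P_z}\widetilde{\mc M}\lket{\rho_{\bm 0}} = \sum_w f_w\lbra{P_z}\Pi_w\lket{\rho_{\bm 0}}.
\end{equation*}
A single-qubit calculation shows that $\lbra{P_Z^{z_i}}\Pi_{w_i}^{(i)}$ vanishes unless $w_i = z_i$, so only $w=z$ contributes, and the surviving factor equals $1$ when $z_i=0$ and $\Tr(Z\rho_{0,i})=2F_i-1$ when $z_i=1$. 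Thus
\begin{equation*}
    \mbb E[\hat f_z^{(r)}] = \alpha_z f_z,\qquad \alpha_z \coleq \prod_{i:\,z_i=1}(2F_i - 1),
\end{equation*}
so $\widehat{\mc M}^{-1}$ converges in expectation to $\sum_z(\alpha_z f_z)^{-1}\Pi_z$ rather than $\sum_z f_z^{-1}\Pi_z$.

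Plugging the biased inverse into the estimation stage yields
\begin{equation*}
    \mbb E[\hat o^{(r)}] - \Tr(O\rho) = \sum_z (\alpha_z^{-1}-1)\,\lbra O\Pi_z\lket{\rho}.
\end{equation*}
For a $k$-local $O$ supported on $S$, only $z$ with $\mathrm{supp}(z)\subseteq S$ contribute (because $\Pi_1^{(i)}$ applied to the identity on $i\notin S$ vanishes), so $|z|\leq k$ and $\alpha_z^{-1}-1 \leq (1-2\xi_{\mr{SP}})^{-k}-1 \leq 2k\xi_{\mr{SP}}$ to first order in $\xi_{\mr{SP}}$. Bounding $|\lbra O\Pi_z\lket\rho|\leq \|\Pi_z O\|_\infty \leq \|O\|_\infty$ via the Hermitian contraction $\|A - \Tr(A)/2\cdot I\|_\infty = |\lambda_{\max}-\lambda_{\min}|/2 \leq \|A\|_\infty$ iterated at each qubit, and summing over the at most $2^k$ surviving $z$, gives a bias of at most $2k\xi_{\mr{SP}}\cdot 2^k\|O\|_\infty$ to leading order. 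Combining with the statistical error $\varepsilon\cdot 2^k\|O\|_\infty$ from Theorem~\ref{th:info2} yields the claim.

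The main obstacle is establishing the multi-qubit $\infty$-norm contraction $\|\Pi_z O\|_\infty \leq \|O\|_\infty$ without spurious exponential factors: the naive per-qubit estimate $\|\Pi_{z_i=1}^{(i)} A\|_\infty \leq \|A\|_\infty + \|\Tr_i(A)\|_\infty/2 \leq 2\|A\|_\infty$ would produce an extra $2^{|z|}$ factor that spoils the $2^k$ scaling, so one must either iterate the sharper Hermitian single-qubit identity or, equivalently, use the Taylor decomposition $\widehat{\mc M}^{-1}\widetilde{\mc M} - \mc I = \sum_{T\neq\emptyset}\prod_{i\in T}(\alpha_i^{-1}-1)\,\Pi_1^{(i)}$ and retain only the linear-in-$\xi_{\mr{SP}}$ ($|T|=1$) contributions. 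A secondary bookkeeping item is verifying that the variance of $\hat f_z^{(r)}$ under $\rho_{\bm 0}$ differs from the ideal-prep value by $O(\xi_{\mr{SP}})$, so that the calibration sample complexity of Theorem~\ref{th:info2} is preserved; this follows from expanding second moments and using the product structure of $\rho_{\bm 0}$.
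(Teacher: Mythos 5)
Your calibration-stage analysis is correct and matches the paper's Lemma~\ref{le:sp_l}: under product-state preparation noise one indeed gets $\mbb E[\hat f^{(r)}_{z,\mr{SP}}] = \alpha_z f_z$ with $\alpha_z = \prod_{i:z_i=1}(2F_i-1)$, and the second moment (hence the variance bound $3^{-|z|}$) is insensitive to $\rho_{\bm 0}$, so the sample budget carries over. The genuine gap is in converting the per-coefficient error into the final $2^k\|O\|_\infty$ bound. The inequality you lean on, $|\lbra O\Pi_z\lket\rho|\le\|\Pi_z(O)\|_\infty\le\|O\|_\infty$, is false for multi-qubit operators, and no iteration of the single-qubit Hermitian identity can repair it, because applying $\Pi_1$ to a single tensor factor of a Hermitian operator is not an operator-norm contraction. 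Concretely, take $k=2$ and $O=\mathrm{SWAP}=\tfrac12(I\otimes I+X\otimes X+Y\otimes Y+Z\otimes Z)$, so $\|O\|_\infty=1$; then $\Pi_z(O)=\mathrm{SWAP}-\tfrac12 I$ for $z=11$, which has eigenvalue $-\tfrac32$ on the singlet, so $\|\Pi_z(O)\|_\infty=\tfrac32$, and choosing $\rho$ to be the singlet state gives $|\lbra O\Pi_z\lket\rho|=\tfrac32>\|O\|_\infty$. Hence your per-term bound, and with it the ``sum over at most $2^k$ surviving $z$'' step, fails.

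What is true—and what the paper actually proves—is the \emph{aggregate} bound: expanding in the Pauli basis, $\bigl|\sum_a(\hat f^{-1}_{z(a)}f_{z(a)}-1)\lbraket{O}{\sigma_a}\lbraket{\sigma_a}{\rho}\bigr|\le\max_{|z|\le k}\bigl|\hat f_z^{-1}f_z-1\bigr|\cdot\sum_a\tfrac1{2^n}|\Tr(P_aO)|$, and then $\sum_a\tfrac1{2^n}|\Tr(P_aO)|\le 2^k\|O\|_\infty$ by Cauchy--Schwarz over the $4^k$ Pauli coefficients of the $k$-local $O$ (Eq.~\eqref{eq:c23} in the proof of Theorem~\ref{th:lc_stat}). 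Into this the paper feeds the uniform relative error $\max_{|z|\le k}|\hat f_{z,\mr{SP}}^{-1}f_z-1|\le(\varepsilon+2k\xi_{\mr{SP}})/(1-\varepsilon-2k\xi_{\mr{SP}})$, obtained by the triangle inequality from (i) median-of-means concentration of $\hat f_{z,\mr{SP}}$ around its \emph{biased} mean $\alpha_z f_z$ and (ii) your bias estimate $|\alpha_z f_z - f_z|\le 2|z|\xi_{\mr{SP}}f_z$. This also exposes a second soft spot in your plan: you cannot cite Theorem~\ref{th:info2} as a black box for the statistical part, since that theorem assumes perfect preparation and here the estimator concentrates around $\alpha_z f_z$, not $f_z$; one must rerun the proof of Theorem~\ref{th:lc_stat} with the shifted center, at which point the Cauchy--Schwarz step is needed anyway. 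Your alternative Taylor decomposition (keeping the $|T|=1$ terms, each bounded via $\|\Pi_1^{(i)}(O)\|_\infty\le 2\|O\|_\infty$) does yield a valid first-order bias bound $4k\xi_{\mr{SP}}\|O\|_\infty\le(2k\xi_{\mr{SP}})2^k\|O\|_\infty$, but it is not ``equivalent'' to the contraction route, and since the statistical fluctuation does not factorize over qubits it cannot substitute for the missing aggregate lemma.
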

	
	The proof is given in Appendix~\ref{sec:app_spn}.
	The above two theorems show that the effect of state-preparation noise can indeed be bounded for \textbf{\textrm{RShadow}}. They also enable an experimentalist to decide a practical sample number according to how well his device can prepare $\ketbra{\bvec 0}{\bvec 0}$.
	
	
\section{Numerical results}\label{sec:numer}
	
	
	
	
	
	Here, we design several numerical experiments to demonstrate the practicality of the robust shadow estimation (\textbf{\textrm{RShadow}}) protocol. 
	We first benchmark the robustness of the \textbf{\textrm{RShadow}} protocol under various types of noise models in the task of estimating the fidelity of the GHZ state. 
	After that, we show the application of \textbf{\textrm{RShadow}} in estimating the $2$-point correlation as well as the energy of the ground state of the anti-ferromagnetic transverse-field Ising model (TFIM). 
	These tasks frequently appear in the field of quantum computational chemistry~\cite{mcardle2020quantum}. 
	In all the numerical experiments, we assume that the states to be tested are perfectly prepared while the shadow estimation circuits are noisy. 
	We compare the performance of \textbf{\textrm{RShadow}} protocol with the standard quantum shadow estimation scheme (standard \textbf{\textrm{Shadow}})~\cite{huang2020predicting} in all the tasks. 
	Our numerical simulation makes use of Qiskit~\cite{qiskit}, an open-source python-based quantum information toolkit. 
	
	For all the plots in this section, the error bars represent the standard deviation of the estimation procedure (which means we ran the calibration procedure of \textbf{RShadow} only once for each data point), and are calculated via the empirical bootstrapping method~\cite{efron1992bootstrap}, where we randomly samples the same size of data points \emph{with replacement} from the original data and calculate the estimator as a bootstrap sample. 
	Repeat this for $B=200$ times, and take the standard deviation among these bootstrap samples as an approximation to the standard deviation of our \textbf{RShadow} estimator. 
	
	In the first experiment, we numerically prepare a $10$-qubit GHZ state, and use the shadow estimation protocol to estimate its fidelity with the ideal GHZ state. 
	Each protocol use $R=10^5$ ($N=10^4,~K=10$) samples for the estimation stage, while our \textbf{\textrm{RShadow}} uses an extra $R=10^5$ ($N=10^4,~K=10$) samples for its calibration stage. 
	We simulate the following three noise model: depolarizing, amplitude damping, and measurement bit-flip, each with several different levels of strength. 
	The random circuits are set to be global Clifford gates. 
	Fig.~\ref{fig:global_noise} shows the results. 
	One can see that, for all these three noise models, when the noise level increases, the standard shadow estimation deviates from the true value, while the robust shadow estimation remains faithful.
	\begin{figure}[!htb]
		\centering
		\includegraphics[width = \columnwidth]{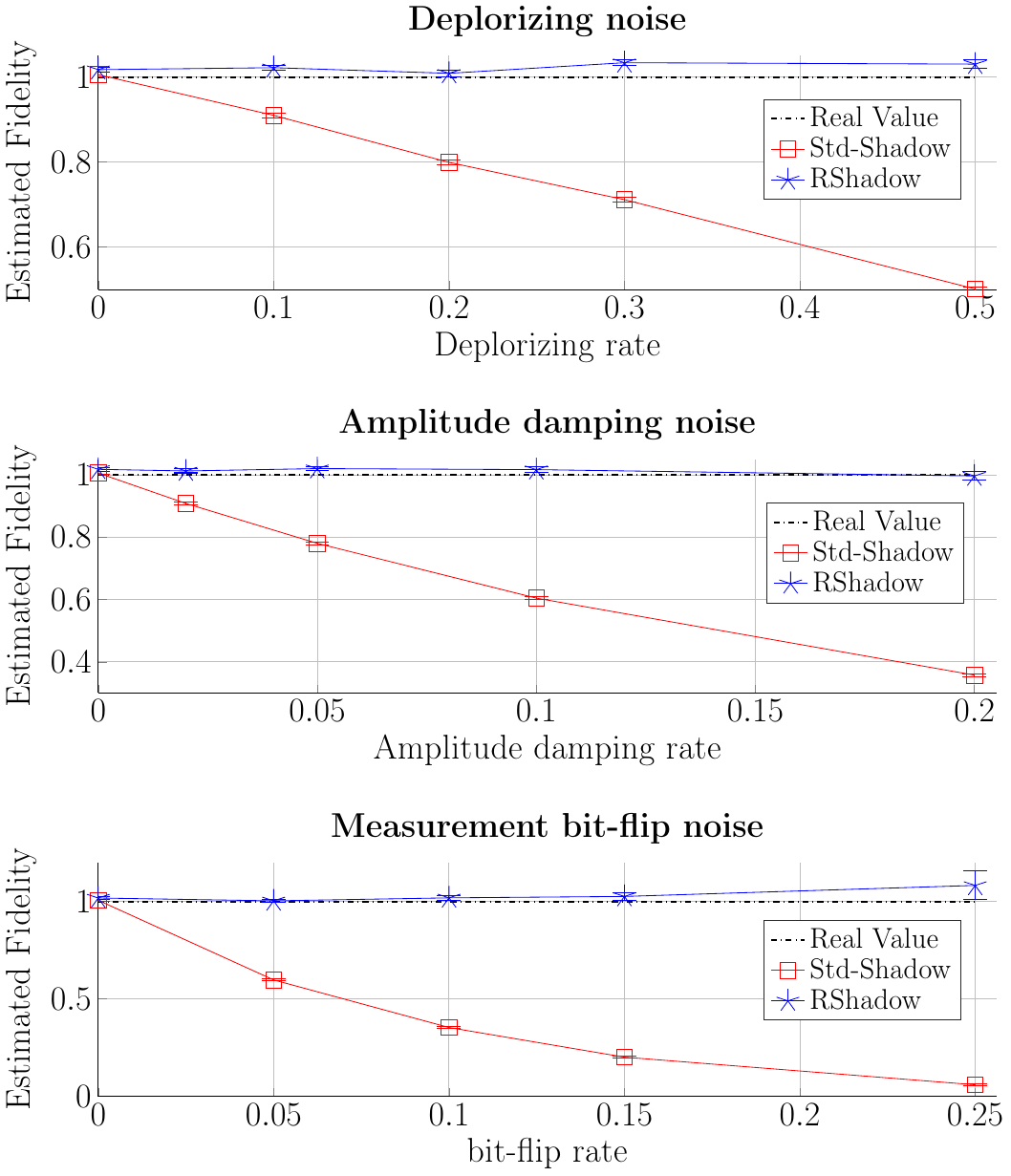}
		\caption{Comparison of the GHZ-state fidelity estimation using standard \textbf{\textrm{Shadow}} and the \textbf{\textrm{RShadow}} with respect to different noise models and noise levels. 
		The black dashed line represents the true value. 
		The red crosses and the blue stars represent the estimated values by the standard \textbf{\textrm{Shadow}} and \textbf{\textrm{RShadow}}, respectively.}
		\label{fig:global_noise}
	\end{figure}
	
	Note that there exist some numerical results from our robust procedure exceeding the ground truth in the above figure.
	That is due to the nature of the shadow protocol to eliminate the effect of noisy fidelity parameters $\{\hat{f}_\lambda\}$.
	To eliminate this fidelity parameters and extract the estimation of a desired observable, the protocol will use a ratio estimator, which tends to have results with systematically biased errors.
	Moreover, the statistical fluctuation will affect the estimation of our procedure.
	Fortunately, Theorem~\ref{th:gl_all} and \ref{th:lc_all} allow us to bound the size of fluctuation errors along with the systematic biases.
	From practical consideration, the estimation results of the observables $\{\hat{o}^{(r)}\}$ are allowed to be truncated given some physical ranges from prior knowledge, and this can help to improve the accuracy and circumvent some non-physical estimation. A caution is that we cannot

	On the same task of estimating the GHZ-state fidelity, we further test the performance of our \textbf{\textrm{RShadow}} method when the size of system increasing from $4$ qubits to $12$ qubits. 
	During the measurement procedure, we set a noise model where all the qubits undergo a local $X$-rotation $U_{X}(\theta) = e^{-i\theta X}$. 
	We remark that such kind of coherent noise can not be modeled as a classical error occurred in the measurement results. 
	We fix the number of trials to be $R=10^5$ ($N=2500, K=40$) for both the calibration and estimation stages. 
	Meanwhile, we choose the rotation angle to be $\theta=\frac{\pi}{25}$, $\frac{2\pi}{25}$, and $\frac{3\pi}{25}$. 
	In Fig.~\ref{fig:global_size}, we compare the fidelity estimation result of standard \textbf{\textrm{Shadow}} and \textbf{\textrm{RShadow}}. 
	When local noises occur, the performance of standard \textbf{\textrm{Shadow}} decreases when the system size increases. 
	In contrast, the estimation of \textbf{\textrm{RShadow}} is still accurate. 
	This highlights the necessity of noise suppression especially when the system size gets larger.


\begin{figure}[!htb]
\centering

\includegraphics[width = \columnwidth]{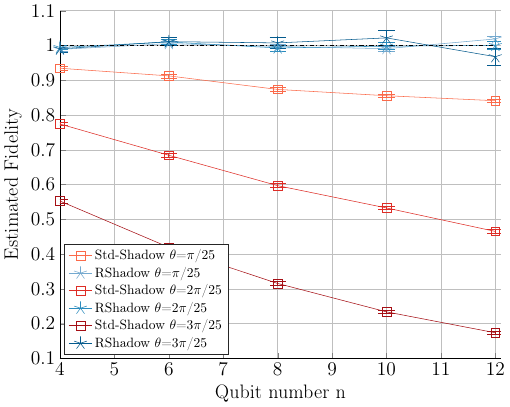}
\caption{Comparison of the GHZ fidelity estimation using standard \textbf{\textrm{Shadow}} and the \textbf{\textrm{RShadow}} with respect to different qubit numbers $n$. 
Here, we assume that all the qubits will experience a local $X$-rotation error $U_{X}(\theta) = e^{-i\theta X}$ with $\theta=\pi/25$, $2\pi/25$, and $3\pi/25$. 
In the experiment, we set the number of trials $R=10^5$ ($N=2500,K=40$) for both calibration and estimation stages.}
\label{fig:global_size}
\end{figure}

	
	
	\medskip
	The next experiment is designed for shadow estimation with local Clifford group. 
	We estimate the $2$-point $ZZ$-correlation functions and energy expectation of the ground state of an anti-ferromagnetic transverse field Ising model (TFIM) in one dimension with open boundary, whose Hamiltonian is $H = J\sum_iZ_iZ_{i+1} + h\sum_i X_i$ and we focus on the case $J=h=1$. 
	The ground state is approximated using density matrix re-normalization group method, represented by a matrix-product state (MPS). 
	Codes from~\cite{carrasquilla2019reconstructing} are modified here to sample random Pauli measurements on the MPS.
	We compare the performance of \textbf{\textrm{RShadow}} and the standard shadow estimation~\cite{huang2020predicting} scheme in the presence of measurement bit-flip noise, which means each qubit measurement outcome has an independent probability $p$ to be flipped. 
	Our \textbf{\textrm{RShadow}} uses $R=500000$ ($N=20000,~K=25$) calibration samples and $R=500000$ ($N=10000,~K=50$) estimation samples, while standard shadow estimation uses $R=500000$ ($N=10000,~K=50$) samples. 

	We first generate a 50-spin TFIM ground state, and estimate the $ZZ$-correlation functions between the leftmost spin and all other spins $\langle Z_0Z_i\rangle$, where the bit flip probability is set to be $5\%$. 
	Fig.~\ref{fig:TFIM_Corr} shows the estimation values and absolute errors of both \textbf{\textrm{RShadow}} and standard \textbf{\textrm{Shadow}}. 
	It can been seen that \textbf{\textrm{RShadow}} in general gives a much more precise estimation than standard \textbf{\textrm{Shadow}}.
	\begin{figure}[!htbp]
		\centering
	   \includegraphics[width = \columnwidth]{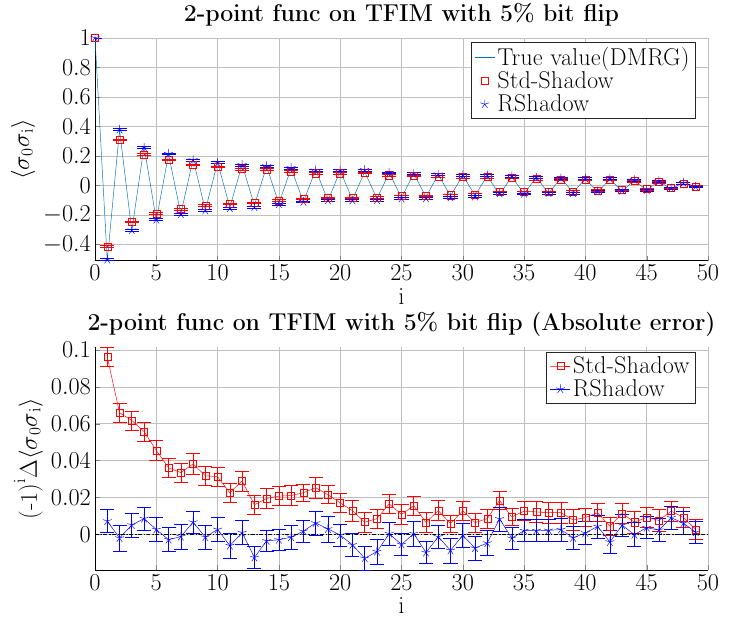}
	    \caption{2-point correlation function estimation on 50-spin 1-D TFIM ground state.}
		\label{fig:TFIM_Corr}
	\end{figure}

	We then estimate the energy expectation. 
	In Fig.~\ref{fig:TFIM_E} we plot the energy estimation results on a 50-spin TFIM ground state under three different noise models (similar as above numerical experiments of GHZ fidelity estimation). 
	One can see that the estimation error of standard \textbf{\textrm{Shadow}} increases when the noise level increases, while \textbf{\textrm{RShadow}} remains giving precise results. 
	\begin{figure}[!htbp]
		\centering
		\includegraphics[width = \columnwidth]{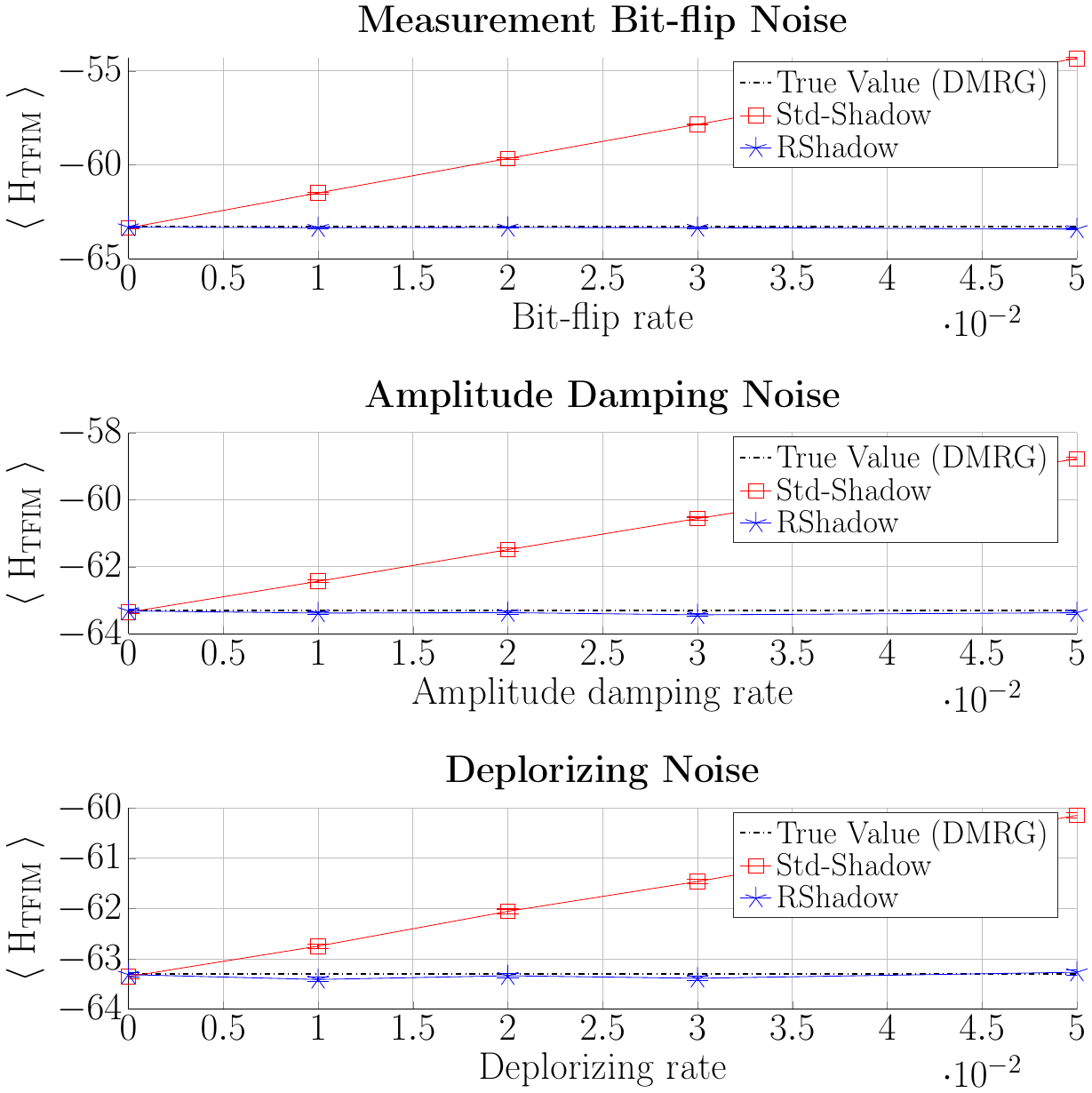}
		\caption{Energy expectation estimation on 50-spin 1-D TFIM ground state.}
		\label{fig:TFIM_E}
	\end{figure}
	Then we fix the noise model to be $5\%$ measurement bit flip and conduct estimation on different sizes of systems. 
	In Fig.~\ref{fig:TFIM_E_SD} we plot the absolute estimation error. 
	This error increases when the system size grows for the standard \textbf{\textrm{Shadow}}, but it remains close to zero for \textbf{\textrm{RShadow}} scheme. 
	This provides a strong reason why the \textbf{\textrm{RShadow}} scheme should be applied as the size of quantum system becomes increasingly large.
	\begin{figure}[!htbp]
		\includegraphics[width = \columnwidth]{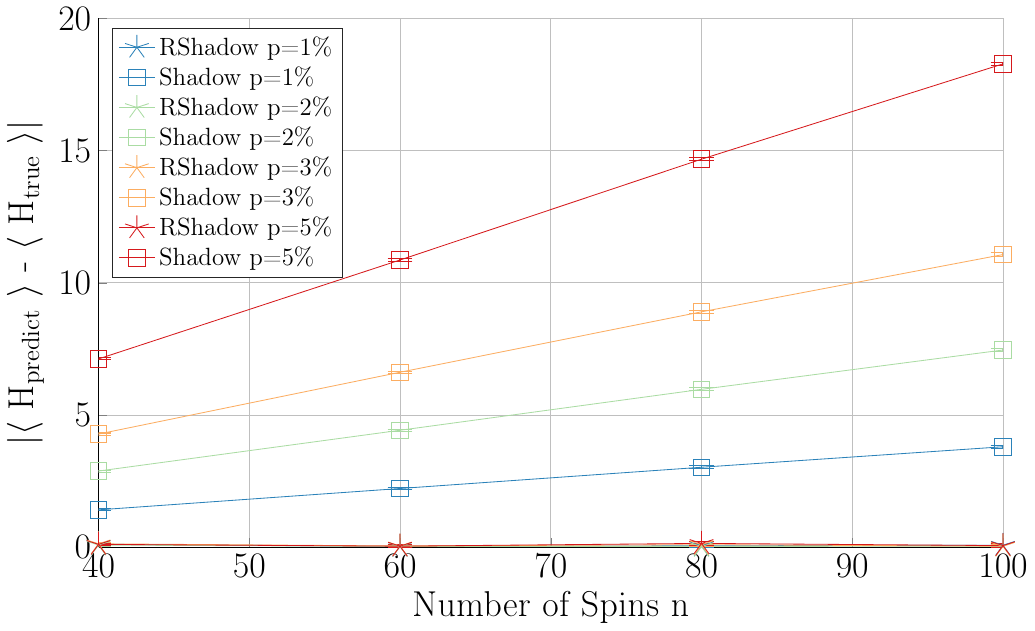}
		\caption{Energy expectation estimation on 1-D TFIM ground state of different spin number with different probabilities of measurement bit flip.}
		\label{fig:TFIM_E_SD}
	\end{figure}

As a remark regarding the computational complexity, we do not calibrate all $f_z$ such that $|z|\le 2$, the number of which scales as $\mc O(n^2)$. 
Instead, we only calibrate the nearest-neighbor terms of $f_z$ for the energy expectation estimation, and the $f_z$ terms such that acts on the first qubit and any other qubit for the correlation function estimation. 
In both case, there are only $\mc O(n)$ parameters to be calibrated. 
Therefore, when the system size gets large, the \textbf{\textrm{RShadow}} protocol remains efficient.
	
	
To demonstrate the noise-resilience of \textbf{\textrm{RShadow}} scheme against $2$-qubit correlated noise, we present more numerical results in Appendix~\ref{sec:app_numer} in the task of estimating the 2-point correlation function of the $n$-qubit GHZ state.
These numerical experiments justify that the \textbf{\textrm{RShadow}} scheme can indeed mitigate the experimental errors and reproduce faithful estimation with a small number of benchmarking trials.

\section{Gate-dependent noise}\label{sec:gatedependent}
	Perhaps the strongest assumption we made is the gate-independence of the noise channel $\Lambda$ with respect to the unitary gate $U$ being sampled. 
	In this section, we present numerical evidence showing that even with an experimentally realistic gate-dependent noise model, \textrm{\textbf{RShadow}} can still greatly reduce noise bias.
	Throughout this section, we focus on \textbf{RShadow} with the local Clifford group, which is experimentally implementable on most near-term platforms.
	The task we consider here is the \emph{electronic structure problem}: decide the ground state energy of a molecule with an unknown electronic structure. 
	This is a important problem in quantum chemistry, and is viewed as one of the most promising applications of near-term quantum algorithms, see \textit{e.g.}~\cite{mcardle2020quantum}. 
	Several recent works have already applied shadow estimation related methods to study this problem~\cite{hadfield2020measurements, huang2021efficient}. 
	
	Specifically, we choose a benchmark molecule and use a certain encoding scheme to map the molecular Hamiltonian into a qubit Hamiltonian. 
	Then, given the ground state of this Hamiltonian, we numerically run the (standard and robust) shadow estimation protocols to estimate its energy, and compare the estimation with the classically computed true value, in the presence of noise. 
	In our setting, we choose $\mr{H}_2$ and apply the Bravyi-Kitaev encoding~\cite{bravyi2002fermionic} to map it to a $4$-qubit Hamiltonian. 
	
	To come up with a realistic gate-dependent noise model, we first need to decide how the local Clifford group is implemented on real experimental platforms. 
	One common approach is to decompose all unitary gates into a small set of generators. 
	Here, we consider the generating set consisting of the following three single-qubit generators
	\begin{equation}
	    \left\{R_P\left(\frac\pi 2\right) = \exp(-i\frac{\pi}{4}P),~P=X,Y,Z\right\}
	\end{equation}
	which can be understood as $\pi/2$ rotations along the X,Y,Z axes respectively. 
	Every single-qubit Clifford gate can be decomposed into two subsequent rotations along two out of these three axes. 
	For example, the Hadamard gate can be implemented by first applying a $\pi/2$ rotation pulse along the Y axis, and then a $\pi$ rotation pulse along the X axis, which is in turn implemented by concatenating two $\pi/2$ X pulses. 
	(See App.~\ref{sec:app_gd} for more details.)
	This generating set is wildly used in real experiments. 
	
	Our numerical simulations will deal with the following two kinds of errors that naturally appear in experiments:
	\begin{enumerate}
	    \item \emph{Pulse mis-calibration}: 
	    The $\pi/2$-pulses have some fixed error due to \textit{e.g.} an uncharacterized constant magnetic field. 
	    These noisy generators would look like
	    $$
	    \widetilde{R}_P = \exp\left(-i\frac12\left(\frac{\pi}{2}P+\Delta_0\right)\right).
	    $$
	    for $P=X,Y,Z$ and some traceless Hermitian operator $\Delta_0$ representing the uncalibrated Hamiltonian.
	    Although $\Delta_0$ is the same for all three generators, the commutator $[P,\Delta_0]$ is in general different for different $P$. 
	    Thus, one can verify that this is a gate-dependent noise model by expanding $\widetilde{R}_P$ using the Baker-Campbell-Hausdorff formula.
	    
	    \item \emph{Random over-rotation}: Due to imperfect pulse control, the actual rotation angle for each generator could be modeled as $\pi/2+\delta$ for some zero-mean Gaussian random variable $\delta$. 
	    The noisy generators look like
	    $$
	    \widetilde{R}_P = \exp\left(-i\frac12\left(\frac{\pi}{2}+\delta\right)P\right).
	    $$
	    We assume that the value of $\delta$ is re-sampled every time a generator is applied. 
	    One can verify that this noise model is equivalent to a dephasing noise on the eigenbasis of Pauli P following the noiseless generator $R_P$, thus it is a gate-dependent noise model. 
	    (See App.~\ref{sec:app_gd}.)
	\end{enumerate}
	
	Our numerical results are presented in Fig.~\ref{fig:gd_main}, where we plot the energy estimation outcome of both standard \textbf{Shadow} and \textbf{RShadow} in the presence of different levels of noise strength. 
	The noise model is \emph{Pulse mis-calibration} for the upper figure and \emph{Random over-rotation} for the lower one. 
	For both noise models, we use $R = 30000~(N=3000,~K=10)$ calibration samples and $R = 10000~(N=1000,~K=10)$ estimation samples for \textbf{RShadow}, and $R = 10000~(N=1000,~K=10)$ samples for standard \textbf{Shadow}~%
	\footnote{One might object that we have taken 4 times as many total samples using \textbf{RShadow} compared with \textbf{Shadow}. While increasing the number of samples in \textbf{Shadow} would indeed improve the precision, it would not impact the \textit{accuracy}, which is where \textbf{RShadow} outperforms \textbf{Shadow} in these numerical simulations}. 
	The data points and the error bars are the average values and the standard deviations over $30$ independent runs~%
	\footnote{Note that, the error bars shown here are obtained in a different manner from those in Sec.~\ref{sec:numer}. 
	Here, we also take into account the deviation for the calibration procedure of \textbf{RShadow}, so the error bars here look longer than those of Sec.~\ref{sec:numer}.}.

	\begin{figure}[!htb]
		\centering
		\includegraphics[width = \columnwidth]{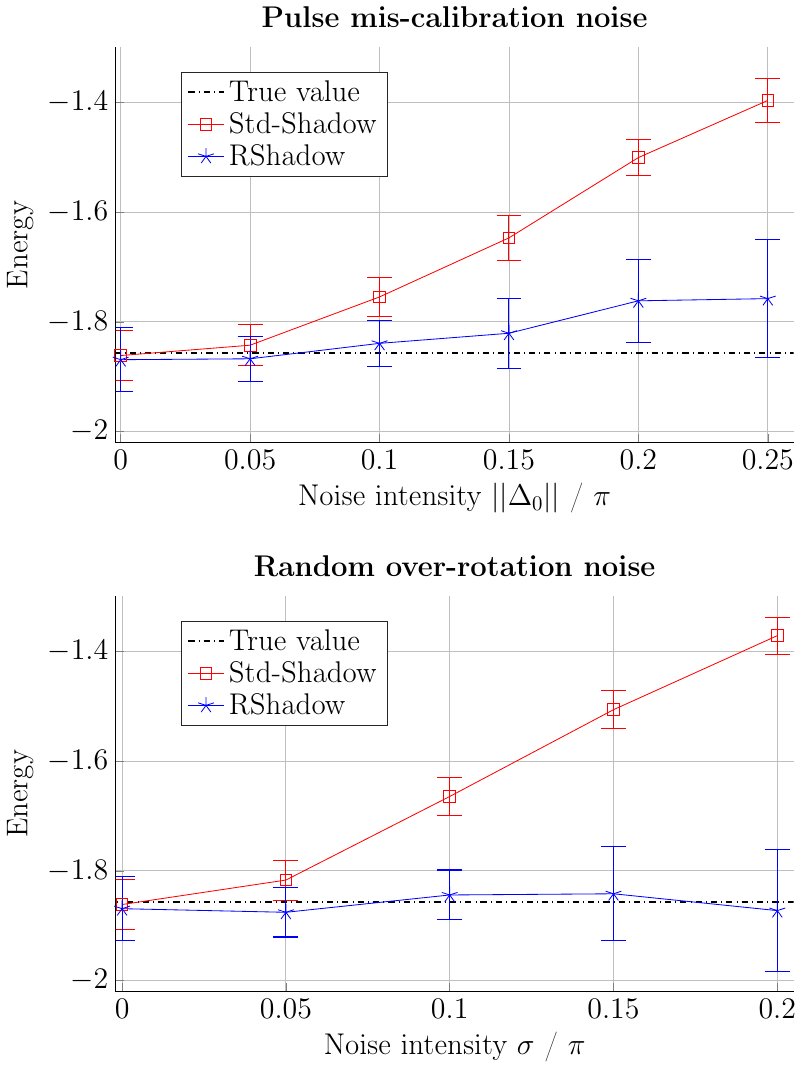}
		\caption{(Upper) Ground-state energy estimation of $\text{H}_2$ with pulse mis-calibration noise. We choose the uncalibrated Hamiltonian $\Delta_0 \coleq \|\Delta_0\| * (aX+bY+cZ)$ where $[a,b,c]= [-0.5500,0.2878,0.7840]$ is a fixed randomly-generated unit vector. See App.~\ref{sec:app_gd} for evidence that this choice of vector is not special; (Lower) Ground-state energy estimation of $\text{H}_2$ with random over-rotation noise. $\sigma$ is the standard deviation of the random over-rotation angle $\delta$.} 
		\label{fig:gd_main}
	\end{figure}
    
    These numerical results provide evidence for the advantage of \textbf{RShadow} over standard shadow estimation even with realistic gate-dependent noise. 
    Specifically, for \emph{Pulse mis-calibration} noises, it seems that \textbf{RShadow} cannot eliminate all biases when the noise strength becomes very large. 
    (Note that, noise intensity $\|\Delta_0\|=0.1\pi$ is already fairly high in practice.) 
    Yet, even in this regime \textbf{RShadow} still significantly outperforms standard shadow estimation and greatly suppresses the bias in the estimated ground state energy. 
    For \emph{Random over-rotation} noise, \textbf{RShadow} completely eliminates all bias even at large noise strengths.
    
    So why does \textbf{RShadow} work for these gate-dependent noise models?  
    One possible explanation is as follows. 
    The key subroutine of \textbf{RShadow} is to learn a Pauli channel. 
    Even though the noise has strong gate-dependence, if the $\widetilde{\mc M}$ channel defined in Eq.~\eqref{eq:Mtilde} is approximately a Pauli channel, and the random unitary gates are ``good enough'' to twirl the input probe state $\lket{\bf 0}$ into an approximate complex projective 2-designs, then we expect \textbf{RShadow} to still perform well.  
    A more rigorous and comprehensive analysis of \textbf{RShadow}'s noise-resilience against general gate-dependent noises is left for future work.

\section{Concluding remarks}
	
We have analyzed the shadow estimation protocol proposed in Ref.~\cite{huang2020predicting} by considering the gate and measurement errors occurring during the process, and have proposed a modified protocol that is robust against such noise. 
We have proven that, in both the global and the local random Clifford group version of the robust shadow protocol, we can efficiently benchmark and suppress the effects caused by the noise. 
On account of the broad application prospects of the shadow estimation protocol in predicting various important properties of quantum states, e.g., entanglement witness, fidelity estimation, correlation functions, \textit{etc.}, we expect that our robust protocol is practical and feasible for current experiments.

While we only focus on estimating linear properties in this work, \textbf{RShadow} can also be used to calibrate the estimation of higher-order properties such as the subsystem \renyi-2 entropy with similar methods shown in~\cite{huang2020predicting}. 
An exploration into the corresponding sample complexity bound is left for future studies. 
It is also interesting to explore how \textbf{\textrm{RShadow}} can be incorporated with other variants of shadow estimation such as the locally-biased classical shadow~\cite{hadfield2020measurements} and derandomized classical shadow~\cite{huang2021efficient}. These two methods can greatly improve the sample efficiency of shadow estimation when one has prior knowledge about which properties are to be predicted, like in the \emph{electronic structure problem}. We believe these techniques can also be applied to \textbf{RShadow} and have been actively developing these methods.


The idea of using additional calibration processes and classical post-processing to eliminate noise effects also appears in the field of error mitigation~\cite{temme2017error,endo2018practical,maciejewski2020mitigationofreadout,bravyi2020mitigating}. 
Among them, it is particularly interesting to compare our work with Ref.~\cite{maciejewski2020mitigationofreadout} and Ref.~\cite{bravyi2020mitigating}, which mitigate the measurement readout error on multi-qubit devices using a measurement calibration (or detector tomography) process. 
The spirit of these works is quite similar to ours, but their assumptions on the noise model are much stronger, and their calibration algorithm is more like a heuristic one without an explicit bound on the sample complexity. 
One reason why we are able to obtain a useful sample complexity bound against a more general noise model is that the random twirling in \textbf{\textrm{RShadow}} greatly simplifies our analysis of the noise estimation.
For future research, it is interesting to explore the relationship between \textbf{\textrm{RShadow}} and other error mitigation schemes, and see if any general results for error mitigation~\cite{takagi2020optimal} can be applied to our scenario. 
Very recently, error mitigation has been shown to be helpful even for fault-tolerant quantum computing~\cite{suzuki2020quantum}. 
We expect \textbf{RShadow} to be a useful protocol in the fault-tolerant regime as well.
	
For our performance guarantee of the robust shadow estimation protocol, the noise in the random gates is allowed to be coherent and highly correlated, but cannot depend on the unitary gate to be implemented. 
This assumption is reasonable in many cases, especially in the protocol with local Clifford gates, where the noise is mainly caused by amplitude damping and decoherence of the system to the environment~\cite{blais2020circuit}. 
Nevertheless, it is important to analyze how gate-dependency and non-Markovianity of the noise can affect the performance of \textbf{RShadow}. 
We have provide some numerical evidences for \textbf{RShadow}'s resilience against gate-dependent noise in Sec.~\ref{sec:gatedependent}, and left more rigorous analysis for future research.

In the PTM representation, the picture of quantum state shadow estimation can be easily extended to the shadow estimation of quantum measurements and quantum channels. 
For example, in order to estimate $\lbra{O_i}\cE\lket{\rho_j}$ for some unknown $n$-qubit quantum channel $\cE$ and a set of given observables $\{O_i\}$ and states $\{\rho_j\}$, one may insert two random measurement channels into the expression,
\begin{equation}
\begin{aligned}
	&\lbra{O_i}\cE\lket{\rho_j} = \lbra{O_i}\mc{M}^{-1}\mc{M}\cE\mc{M}\mc{M}^{-1}\lket{\rho_j} \\&= \mathop{\mathbb{E}}_{U,V\in\mbb G}\sum_{x,y} \lbra{O_i}\mc{M}^{-1}\cU^\dagger\lket{x}\lbra{x}\cU\cE\cV\lket{y}\lbra{y}\cV^\dagger\mc{M}^{-1}\lket{\rho_j}.
\end{aligned}
\end{equation}
In the experiment, one can randomly prepare a computational basis state $\ket{y}$, apply a random unitary $V$ and send to the channel $\cE$, then apply another random unitary $U$ and measure in the computational basis, getting outcome $\ket{x}$. 
Then $2^{-n}\lbra{O_i}\mc{M}^{-1}\cU^\dagger\lket{x}\lbra{y}\cV^\dagger\mc{M}^{-1}\lket{\rho_j}$ is an unbiased estimator of $\lbra{O_i}\cE\lket{\rho_j}$.
This is only the most straightforward way to extend robust shadow estimation to quantum channels; there may exist other schemes that have even better performance. 
We believe a complete analysis of the channel version of shadow estimation will be an interesting direction for further study.

Finally, one can also consider applying (standard or robust) shadow estimation to qudit systems, Boson/Fermion systems and other continuous-variable systems using the techniques developed in this work.

\medskip

\textit{Note added.}~-~ After posting this paper to arXiv, two related but independent works subsequently appeared. 
The independent work by Koh and Grewal~\cite{koh2020shadow} also studies how to mitigate noise in the shadow estimation protocol. 
In their work, the noise channel is assumed to be completely pre-characterized.
The main results Theorem 1.1 and Theorem 1.2 is similar to our Theorem~\ref{th:gl_all} and Theorem~\ref{th:lc_all} if our noise calibration procedure is assumed to be done perfectly. 
The other independent work by Berg, Minev, and Temme~\cite{berg2020model} also contains some similar ideas as presented in our work. 
We thank the authors for communicating their work with us.


\begin{acknowledgments}
	We thank You Zhou for discussions on random unitary schemes,
	Yihong Zhang for suggestions about gate-dependent noise,
	Hsin-Yuan Huang and Jinguo Liu for helpful suggestions on the numerical simulations,
	and John Preskill for discussions on the noise assumptions.
	We also thank Xiongfeng Ma and Richard Kueng for many helpful comments. 
	C.S., W.Y., and P.Z. are supported by the National Natural Science Foundation of China Grants No.~11875173 and No.~11674193, the National Key Research and Development Program of China Grant No.~2019QY0702 and No.~2017YFA0303903, and the Zhongguancun Haihua Institute for Frontier Information Technology.
\end{acknowledgments}

\bibliographystyle{apsrev4-2}

%

\onecolumngrid
\newpage
\begin{appendix}
		
		\section{Preliminaries}  \label{sec:preliminaries}
		In this work, we focus on the $n$-qubit quantum systems with Hilbert space dimension $d = 2^n$. 
		Define $\mc{H}_d$ to be a finite dimensional Hilbert space with the dimension $d$. 
		Define $\mc{L}(\mc{H}_d): \mc{H}_d \to \mc{H}_d$ to be the space of linear operators on $\mc{H}_d$. 
		Define $\mr{Herm}(\mc{H}_d)$ to be the space of Hermitian operator on $\mc{H}_d$, define $\mc{P}((\mc{H}_d))$ to be the set of positive operator on $\mc{H}_d$, and define $\mc{D}(\mc{H}_d)\subset \mc{P}(\mc{H}_d)$ to be the set of quantum states on $\mc{L}(\mc{H}_d)$ which are the positive operators with trace equal to $1$. 
		Sometime we also write $\mc{D}(\mc{H}_d)$ as $\mc{D}(d)$ for simplicity of notations.
		
		\subsection{Groups and representations}\label{App:rep}
		The group representation theory plays an important role in the shadow estimation protocol. 
		Denote a generic group as $\mathbb{G} = \{g_i\}_i$, where $g_i$ is one of the group elements. 
		Denote a unitary representation of $\mathbb{G}$ to be a map
		\begin{equation}
		\phi : \mathbb{G} \to \mc{L}(\mc{H}_d): \mathbb{G} \mapsto \phi(\mathbb{G}),
		\end{equation}
		with the homomorphism
		\begin{equation}
		\phi(g)\phi(h) = \phi(gh), \forall g,h \in \mathbb{G}.
		\end{equation}
		Moreover, we denote all the irreducible representations (irreps.) of the group $\mathbb{G}$ as $R_{\mathbb{G}} = \{\phi_\lambda(\mathbb{G})\}_\lambda$. 
		The Maschke's Lemma~ensures that, every representation of a group can be written as a direct sum of irreps,
		\begin{equation}
		\phi(g) \backsimeq \bigoplus_{\lambda\in R_{\mathbb{G} } } \phi_\lambda(g)^{\otimes m_\lambda}, \forall g\in \mathbb{G},
		\end{equation}
		where $m_\lambda$ is an integer implying the multiplicity of the irrep $\phi_\lambda$. 
		
		In the later discussion, we will frequently come across the \textit{twirling} of an linear operator $O$ on Hilbert space $\mc{H}$ with respect to a group representation $\phi(\mbb{G})$,
		\begin{equation}
		\mc{T}_{\phi}(O) \coleq \frac{1}{|\mbb{G}|} \sum_{g\in\mbb{G}} \phi(g) O \phi(g)^\dag.
		\end{equation}
		As a result of the group structure $\mbb{G}$, the twirling result $\mc{T}_{\phi}(O)$ owns a simple structure, which is related to the irreps in $\phi(\mbb{G})$. 
		The following lemma is a corollary of Schur's lemma.
		
		\begin{lemma} \label{le:schur} (Lemma 1.7 and Prop 1.8 in~\cite{fulton2013representation}, rephrased by~\cite{helsen2019new}) For a finite group $\mbb{G}$ and a representation $\phi$ of $\mbb{G}$ on a complex vector space $\mc{H}$ with decomposition
			\begin{equation}
			\phi(g) \backsimeq \bigoplus_{\lambda\in R_{\mathbb{G} } } \phi_\lambda(g)^{\otimes m_\lambda}, \forall g\in \mathbb{G},
			\end{equation}
			where $\{\phi_\lambda\}$ are the irreps of $\phi(\mbb{G})$, $m_\lambda$ is the multiplicity of $\phi_\lambda$. 
			Then for any linear map $O \in GL(\mc{H})$, the twirling of $O$ with respect to $\phi$ can be written as
			\begin{equation} \label{eq:TwirlDecomp}
			\mc{T}_\phi(O) = \frac{1}{|\mbb{G}|} \sum_{g\in\mbb{G}} \phi(g) O \phi(g)^\dag = \sum_{\lambda\in R_\mbb{G}} \sum_{j_\lambda,j'_\lambda=1}^{m_\lambda} \frac{\tr\left(O \Pi^{j'_\lambda}_{j_\lambda}\right)}{ \tr\left(\Pi^{j'_\lambda}_{j_\lambda}\right) } \Pi^{j'_\lambda}_{j_\lambda},
			\end{equation}
			where $\Pi^{j'_\lambda}_{j_\lambda}$ is a linear map from the support of the $j'_\lambda$-th copy of $\phi_\lambda$ to the support of the $j_\lambda$-th copy of $\phi_\lambda$. 
		\end{lemma}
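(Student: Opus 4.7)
The plan is to prove Lemma~\ref{le:schur} via a two-step strategy: first identify the image of the twirling map as the commutant of the representation, and then expand in a natural basis of that commutant using Schur's lemma.

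First, I would establish that the superoperator $\mathcal{T}_\phi$ on $\mathrm{End}(\mathcal{H})$ is the orthogonal projection (with respect to the Hilbert--Schmidt inner product $\langle A,B\rangle_{\mr{HS}}=\tr(A^\dagger B)$) onto the commutant $\phi(\mathbb{G})'\coleq\{A\in\mathrm{End}(\mathcal{H}):[A,\phi(g)]=0\;\forall g\in\mathbb{G}\}$. Three short observations suffice: (i) for any $h\in\mathbb{G}$, $\phi(h)\mathcal{T}_\phi(O)\phi(h)^\dagger=\mathcal{T}_\phi(O)$ by a change of summation variable, so $\mathrm{Im}(\mathcal{T}_\phi)\subseteq\phi(\mathbb{G})'$; (ii) $\mathcal{T}_\phi$ fixes the commutant pointwise, so it is idempotent and surjects onto $\phi(\mathbb{G})'$; (iii) because $\phi$ is unitary, $\langle \mathcal{T}_\phi(A),B\rangle_{\mr{HS}}=\langle A,\mathcal{T}_\phi(B)\rangle_{\mr{HS}}$, making $\mathcal{T}_\phi$ self-adjoint and hence an orthogonal projection.

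Next, I would describe $\phi(\mathbb{G})'$ explicitly using the isotypic decomposition $\mathcal{H}\cong\bigoplus_{\lambda\in R_\mathbb{G}} V_\lambda\otimes\mathbb{C}^{m_\lambda}$, in which $\phi(g)\cong\bigoplus_\lambda \phi_\lambda(g)\otimes I_{m_\lambda}$. Schur's lemma (in its standard form: every intertwiner between irreducibles is zero or a scalar) then yields
\begin{equation}
\phi(\mathbb{G})'\cong\bigoplus_{\lambda\in R_\mathbb{G}} I_{V_\lambda}\otimes M_{m_\lambda}(\mathbb{C}).
\end{equation}
Taking the matrix-unit basis $\{|j_\lambda\rangle\langle j'_\lambda|\}$ of $M_{m_\lambda}(\mathbb{C})$ and lifting it back to $\mathrm{End}(\mathcal{H})$ defines the operators $\Pi^{j'_\lambda}_{j_\lambda}\coleq I_{V_\lambda}\otimes |j_\lambda\rangle\langle j'_\lambda|$. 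A direct calculation shows these are mutually Hilbert--Schmidt orthogonal across different $\lambda$ and different multiplicity index pairs.

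Finally, since $\mathcal{T}_\phi$ is an orthogonal projection and $\{\Pi^{j'_\lambda}_{j_\lambda}\}$ is an orthogonal basis of its image, the standard orthogonal expansion gives
\begin{equation}
\mathcal{T}_\phi(O)=\sum_{\lambda\in R_\mathbb{G}}\sum_{j_\lambda,j'_\lambda=1}^{m_\lambda}\frac{\langle \Pi^{j'_\lambda}_{j_\lambda},O\rangle_{\mr{HS}}}{\langle \Pi^{j'_\lambda}_{j_\lambda},\Pi^{j'_\lambda}_{j_\lambda}\rangle_{\mr{HS}}}\,\Pi^{j'_\lambda}_{j_\lambda},
\end{equation}
which is the claimed formula once one identifies the trace ratio in the lemma with this Hilbert--Schmidt pairing under the appropriate convention on the basis operators (Hermitian for diagonal indices $j_\lambda=j'_\lambda$, partial isometries paired with their adjoints for off-diagonal indices). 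The only subtle point, and the step I expect to require the most care, is precisely this bookkeeping of Hermitian conjugation in the off-diagonal $j_\lambda\neq j'_\lambda$ blocks: the trace $\tr(\Pi^{j'_\lambda}_{j_\lambda})$ vanishes there, so one must read the formula as the Hilbert--Schmidt projection coefficient $\tr((\Pi^{j'_\lambda}_{j_\lambda})^\dagger O)/\tr((\Pi^{j'_\lambda}_{j_\lambda})^\dagger \Pi^{j'_\lambda}_{j_\lambda})=\tr((\Pi^{j'_\lambda}_{j_\lambda})^\dagger O)/d_\lambda$, and verify this aligns with the paper's notational convention. With that convention fixed, the expansion immediately reproduces Eq.~\eqref{eq:TwirlDecomp}.
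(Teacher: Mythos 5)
Your proof is correct, and there is in fact no in-paper derivation to compare it against: the paper states Lemma~\ref{le:schur} as a cited result (Fulton--Harris, as rephrased in the randomized-benchmarking literature) and never proves it. Your three-step route --- showing the twirl $\mc{T}_\phi$ is the Hilbert--Schmidt-orthogonal projection onto the commutant $\phi(\mbb{G})'$, identifying $\phi(\mbb{G})'\cong\bigoplus_{\lambda} I_{V_\lambda}\otimes M_{m_\lambda}(\mathbb{C})$ via Schur's lemma on the isotypic decomposition, and then expanding in the matrix-unit basis of the commutant --- is precisely the standard argument those references rely on, and each step you outline (invariance of the image, pointwise fixing of the commutant, self-adjointness from unitarity of $\phi$) is sound. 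Your closing remark is moreover a genuine catch rather than mere bookkeeping: as literally written, Eq.~\eqref{eq:TwirlDecomp} divides by $\tr\bigl(\Pi^{j'_\lambda}_{j_\lambda}\bigr)$, which vanishes whenever $j_\lambda\neq j'_\lambda$, so the off-diagonal coefficients are ill-defined unless one reads them as the Hilbert--Schmidt pairing $\tr\bigl((\Pi^{j'_\lambda}_{j_\lambda})^\dagger O\bigr)/\tr\bigl((\Pi^{j'_\lambda}_{j_\lambda})^\dagger\Pi^{j'_\lambda}_{j_\lambda}\bigr)$, i.e., with transposed multiplicity indices in the numerator, exactly as your expansion produces. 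This defect in the statement is harmless for the paper itself, which only ever invokes the multiplicity-free specialization Eq.~\eqref{eq:TwirlDecompNoMulti}, where every $\Pi_\lambda$ is a Hermitian projector of nonzero trace and the two readings coincide.
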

		
		In this work, we focus on the group representation $\phi$ with no multiplicities,
		\begin{equation}
		\phi(g) \backsimeq \bigoplus_{\lambda\in R_\mbb{G}} \phi_\lambda(g), \quad \forall g\in\mbb{G}.
		\end{equation}
		In this case, Eq.~\eqref{eq:TwirlDecomp} can be simplified as
		\begin{equation} \label{eq:TwirlDecompNoMulti}
		\mc{T}_\phi(O) = \frac{1}{|\mbb{G}|} \sum_{g\in\mbb{G}} \phi(g) O \phi(g)^\dag = \sum_{\lambda\in R_\mbb{G}} \frac{\tr(O \Pi_{\lambda})}{ \tr(\Pi_{\lambda}) } \Pi_{\lambda},
		\end{equation}
		where $\Pi_{\lambda}$ is the projector onto the support of $\phi_\lambda$.
		
		Here, we introduce some common groups that will be frequently used. 
		Note that all the linear operators in $\mc{L}(\mc{H}_d)$ form a Lie group $GL(d,\mbb{C})$. 
		The unitaries in $\mc{L}(\mc{H}_d)$ also form a Lie group $U(d)$.
		
		Denote $\mbb{Z}_2 = \{0,1\}$ to be the 2-element cyclic group. 
		$\mbb{Z}^{n}_2 \coleq (\mbb{Z}_2)^{\otimes n}$ is the $n$-copy tensor of $\mbb{Z}_2$ group. 
		Denote $\mathbb{A} = \langle \{a_i\} \rangle$ with $\{a_i\}$ the generators of the group. 
		In the later discussion, we will also slightly abuse $\mbb{Z}_2^n$ to denote the set of $n$-bit binary string.
		
		For $n$-qubit quantum system, the Pauli group is
		\begin{equation}
		\mathbb{P}^n =  \{ \langle i \rangle \otimes \{I,X,Y,Z\} \}^{\otimes n},
		\end{equation}
		with $I,X,Y,Z$ the qubit Pauli matrices. 
		Denote the quotient of $\mathbb{P}^n$ to be $\mathsf{P}^n = \mathbb{P}^n/\langle i \rangle$, which is an Abelian group and isomorphic to $\mbb{Z}_2^{2n}$. 
		Therefore, we will use a $2n$-bit string to denote the elements in $\mathsf{P}^n$ and choose the elements to be
		\begin{equation}
		P_a = P_{(a_x,a_z)} = i^{a_x\cdot a_z} X^{\otimes a_x} Z^{\otimes a_z}. 
		\end{equation}
		
		The multiplication and commutation of elements in $\mathsf{P}^n$ follows,
		\begin{equation}
		\begin{aligned}
		P_a P_b &= (-i)^{\braket{a,b}} P_{a+b}, \\
		P_a P_b &= (-1)^{\braket{a,b}} P_b P_a,
		\end{aligned}
		\end{equation}
		with
		\begin{equation}
		\braket{a,b} \coleq a_x \cdot b_z - a_z \cdot b_x \quad \text{mod } 4,
		\end{equation}
		a binary symplectic product. 
		This symplectic product owns the following properties
		\begin{equation}
		\begin{aligned}
		\braket{a,b} &= - \braket{b,a}, \\
		(-i)^{\braket{a,b}} &= i^{-\braket{a,b}}, \\
		(-1)^{\braket{a,b}} &= (-1)^{\braket{b,a}}.
		\end{aligned}
		\end{equation}
		
		The $n$-qubit Clifford group ${\sf Cl}(2^n)$ is defined to be
		\begin{equation}
		{\sf Cl}(2^n) = \{ g | g P_a g^{-1} \in \mathbb{P}^n, \forall P_a \in P^n \} / U(1),
		\end{equation}
		where the $U(1)$ represents the global phase. 
		Obviously, $P^n$ is a subgroup of $\mathbb{C}^n$. 
		The single-qubit Clifford group is then ${\sf Cl}_2\coleq {\sf Cl}(2)$. 
		Later we will also come across the tensor-ed $n$-fold single-qubit Clifford group ${\sf Cl}_2^{\otimes n}$.

		\subsection{Random unitaries and t-designs}
		
		The shadow estimation is a direct application of twirling in random unitaries. The ideal ``uniformly distributed'' randomized unitaries over the Lie group $GL(d,\mbb{C})$ is characterized by \textit{Haar measure} $\mu(\mc{H}_d)$~\cite{collins2016random}. The Haar measure is defined to be the unique countably additive, nontrivial measure of the group $U$ such that,
		\begin{equation}
		\int_{\mu(\mc{H}_d)} dU = 1, \quad \int_{\mu(\mc{H}_d)} dU f(U) = \int_{\mu(\mc{H}_d)} dU f(UV) = \int_{\mu(\mc{H}_d)} dU f(VU),
		\end{equation}
		where $f(U)$ is any matrix function of $U$.
		
		In practice, to sample unitaries with respect to Haar measure is challenging due to its continuity. Alternatively, one may choose to sample from a finite subset $\mc{K} = \{U_k\}_{k=1}^{|\mc{K}|}$ over the unitaries in $GL(d,\mbb{C})$.
		
		\begin{definition}
			A finite subset $\mc{K} = \{U_k\}_{k=1}^{|\mc{K}|} \subset \mc{U}(d)$ is a unitary $t$-design if
			\begin{equation}
			\frac{1}{|\mc{K}|} \sum_{k=1}^{|\mc{K}|} f_{(t,t)}(U_k) = \int_{\mu(\mc{H}_d)} dU f_{(t,t)}(U),
			\end{equation}
			for all the polynomial $f_{(t,t)}(U)$ of degree at most $t$ in the matrix elements of $U$ and at most $t$ in the matrix elements of $U^*$.
		\end{definition}
		
		It has been proven that, the Clifford gate set ${\sf Cl}(d)\subset \mc{U}(\mc{H})$ is a unitary $3$-design~\cite{webb2015clifford,zhu2017multiqubit}, while fails to be a unitary $4$-design~\cite{zhu2016clifford}.

		\subsection{Quantum channel and the representations}\label{Liouville}
		
		Quantum channels are the linear maps $\mc{E}:\mc{L}(\mc{H}_d) \to \mc{L}(\mc{H}_d)$ which are completely positive and trace-preserving (CPTP). 
		
		\begin{definition} Let $\mc{E}: \mc{L}(\mc{H}_d) \to \mc{L}(\mc{H}_d)$ be a linear map. We say that
			\begin{enumerate}
				\item $\mc{E}$ is positive if $\mc{E}(\rho) \in \mc{D}(\mc{H}_d)$ for any $\rho \in \mc{D}(\mc{H}_d)$.
				\item $\mc{E}$ is completely positive (CP) if $\mc{I}_{d'} \otimes \mc{E}$ is positive, for all the dimension $d'$.
				\item $\mc{E}$ is trace preserving (TP) if $\tr[\mc{E}(\rho)]=1$ for any $\tr[\rho]=1$.
				\item $\mc{E}$ is a quantum channel if it is both CP and TP.
			\end{enumerate}
		\end{definition}
		
		In this work, we will come across two representations of the quantum channels : Kraus representation and Liouville representation. For a quantum channel $\mc{E}:\mc{L}(\mc{H}_d) \to \mc{L}(\mc{H}_d)$, its action on a linear operator $O\in\mc{L}(\mc{H}_d)$ can be expressed as
		\begin{equation}
		\mc{E}(O) = \sum_{t=1}^{k} K_t O K_t^\dag, 
		\end{equation}
		where $\{K_t\}_{t=1}^k$ are the Kraus operators satisfying $\sum_{t=1}^k K_t^\dag K_t = I$.
		
		To represent the effect of quantum channels in a convenient way, we first introduce the Pauli basis $P^n$ on $\mc{L}(\mc{H}_{d})$ to vectorize the linear operators in $\mc{L}(\mc{H}_{d})$. Define the inner product between two operators to be the Hilbert-Schmidt product
		\begin{equation}
		\braket{Q,W} \coleq \tr(Q W^\dag), \quad \forall Q,W \in GL(\mc{H}_{d}).
		\end{equation}
		In this case, the operators in $P^n$ form an orthogonal basis. We introduce the operators
		\begin{equation}
		\sigma_a = P_a/\sqrt{d},
		\end{equation}
		as the orthonormal basis. To vectorize the linear space spanned by $\{\sigma_a\}$, we introduce the notation $\{\lket{\sigma_a}\}$. For the single-qubit case, we will also use the following notations,
		\begin{equation}
		\begin{aligned}
		&\sigma_I = \sigma_0 = \sigma_{(0,0)}, \quad \sigma_X = \sigma_{(1,0)}, \\
		&\sigma_Z = \sigma_1 = \sigma_{(0,1)}, \quad \sigma_Y = \sigma_{(1,1)}.
		\end{aligned}
		\end{equation}
		
		Then the operators on $\mc{L}(\mc{H}_{d})$ can be vectorized as
		\begin{equation}
		\lket{Q} = \sum_{a\in Z^{2n}_2} \lbraket{Q}{\sigma_a} \lket{\sigma_a}.
		\end{equation}
		
		The quantum channel $\mc{E}$ can then be represented as 
		\begin{equation}
		\mc{E} = \sum_{a,b\in Z^{2n}_2} \lbra{\sigma_a} \mc{E} \lket{\sigma_b} \lket{\sigma_a}\lbra{\sigma_b},
		\end{equation}
		with
		\begin{equation}
		\lbra{\sigma_a}\mc{E} \lket{\sigma_b} \coleq \braket{\sigma_a, \mc{E}(\sigma_b)}.
		\end{equation}
		
		The matrix $\mc{E}$ is the Pauli-transfer matrix (PTM) or Pauli-Liouville representation. In this work, we slightly abuse the notation of a superoperator $\mc{E}$ to represent its PTM. For a unitary matrix $U$, we use the calligraphic $\cU$ to represent its PTM.
		
		For a quantum channel $\mc{E}$ with state $\rho$ input, and POVM measurement $M = \{M_b\}$ with $\sum_{b} M_b = I$, the probability to get the measurement result $b$ is
		\begin{equation}
		p_b = \lbra{M_b} \mc{E} \lket{\rho}.
		\end{equation}
		
		Under the PTM representation, the composition and tensor product of channels $\mc{E}_1$ and $\mc{E}_2$ can be naturally expressed as
		\begin{equation}
		\begin{aligned}
		\lket{\mc{E}_1 \circ \mc{E}_2(\rho)} &= \mc{E}_1 \mc{E}_2 \lket{\rho}, \\
		\lket{\mc{E}_1 \otimes \mc{E}_2(\rho^{\otimes 2})} &= \mc{E}_1 \otimes \mc{E}_2 \lket{\rho^{\otimes 2}}. \\
		\end{aligned}
		\end{equation}
		
		The PTM of the unitaries in $U(d)$ forms a natural group representation of $U(d)$. Denote the PTM of a given unitary $U$ as $\phi^P(U)\coleq \mc{U}$, we have
		\begin{equation}
		\phi^P(U) \phi^P(V) = \phi^P(UV).
		\end{equation}
		
		The PTM representation $\phi^P(U(d))$ can be decomposed to two irreps,
		\begin{equation}
		\phi^P(U) \backsimeq \phi^P_{I}(U) \oplus \phi^P_{\sigma}(U), \quad \forall U \in U(d).
		\end{equation}
		Here,
		\begin{equation}
		\begin{aligned}
		\phi^P_I(U) &= \Pi_I \, \phi^P(U) \, \Pi_I, \\
		\phi^P_I(U) &= \Pi_\sigma \, \phi^P(U) \, \Pi_\sigma, \\
		\end{aligned}
		\end{equation}
		the projectors $\Pi_I$ and $\Pi_\sigma$ are
		\begin{equation}
		\begin{aligned}
		\Pi_I &= \lketbra{ \sigma_{0}^{\otimes n} }{ \sigma_{0}^{\otimes n} }, \\
		\Pi_\sigma &= I - \Pi_I = \sum_{a\in \mbb{Z}_2^{2n},\; a \neq (0,0)^{\otimes n} } \lketbra{ \sigma_{a} }{ \sigma_{a} }.
		\end{aligned}
		\end{equation}
		
		The $n$-qubit Clifford group ${\sf Cl}(2^n)$, as the subset of $n$-qubit unitary group, can also be represented by the PTM matrices. The PTM representation $\phi^P({\sf Cl}(2^n))$ can be decomposed similarly,
		\begin{equation} \label{eq:ClPTMdecomp}
		\phi^P(U) \backsimeq \phi^P_{I}(U) \oplus \phi^P_{\sigma}(U), \quad \forall U \in {\sf Cl}(2^n),
		\end{equation}
		where $\phi^P_{I}$ and $\phi^P_{\sigma}$ are two irreps on the support $\Pi_I$ and $\Pi_\sigma$, respectively.
		
		\subsection{Weingarten Function}\label{sec:Weingarten}
		In this part, we introduce the Weingarten function as a tool to calculate general Haar integrals~\cite{weingarten1978asymptotic, collins2003moments, collins2006integration}.
		The following presentation owes a lot to Section 2 of~\cite{roberts2017chaos}.
		
		\medskip
		
		\noindent For an operator $A$ acting on $\cH_d^{\otimes k}$, define the k-fold Haar twirling of $A$ as
		\begin{equation}
		\Phi_\mr{Haar}^{(k)}(A)\coleq\int_{\mu(\cH_d)}dU(U^{\otimes k})^\dagger A U^{\otimes k}.
		\end{equation}
		Using Schur-Weyl duality, one can show that
		\begin{equation}\label{eq:Weingarten}
		\Phi_\mr{Haar}^{(k)}(A) = \sum_{\pi,\sigma\in S_k}c_{\pi,\sigma}W_\pi\Tr(W_\sigma A).
		\end{equation}
		Here, $S_k$ is the $k$-element permutation group, and $W_\pi$ is the permutation operator defined as follows
		\begin{equation}\label{eq:perm}
		W_\pi \ket{a_1,...,a_k} = \ket{a_{\pi(1)},...,a_{\pi(k)}},\quad\forall\ket{a_1,...,a_k}\in \cH_d^{\otimes k},~\pi\in S_k,
		\end{equation}
		and the coefficients $c_{\pi,\sigma}$ are the Weingarten matrix~\cite{collins2003moments} which can be calculated as
		\begin{equation}\label{eq:Wein_form}
		c_{\pi,\sigma} = (Q^{+})_{\pi,\sigma},\quad Q_{\pi,\sigma}\coleq d^{\#\mr{cycles}(\pi\sigma)},
		\end{equation}
		where Q is called the Gram matrix. $Q^{+}$ stands for the {Moore–Penrose pseudo inverse} of $Q$, which is $Q^{-1}$ when $Q$ is invertible. (Note that, when $Q$ is not invertible, $c$ is not uniquely determined. It is only a conventional choice to take $c = Q^{+}$~\cite{collins2006integration,zinn2010jucys}).
		
		\medskip
		
		\noindent In following sections, we are interested in the case $k=3$. We sort the elements of $S_3$ in the following order 
		\begin{equation}\label{eq:perm_vec}
		\vec W \coleq \begin{bmatrix}W_{()}, & W_{(1,2)}, & W_{(1,3)}, & W_{(2,3)}, & W_{(1,2,3)}, & W_{(1,3,2)} \end{bmatrix}.
		\end{equation}
	    In this basis, the Gram matrix becomes
		\begin{equation}\label{eq:Gram3}
		Q=\begin{bmatrix}
		d^3&d^2&d^2&d^2&d&d \\
		d^2&d^3&d&d&d^2&d^2\\
		d^2&d&d^3&d&d^2&d^2\\
		d^2&d&d&d^3&d^2&d^2\\
		d&d^2&d^2&d^2&d&d^3\\
		d&d^2&d^2&d^2&d^3&{d}
		\end{bmatrix},
		\end{equation}
		For $d\ge 3$, one can show that the Weingarten matrix becomes
		\begin{equation}\label{eq:Wein3}
		c=\frac{1}{d(d^2-1)(d^2-4)}\begin{bmatrix}
		d^2-2&-d&-d&-d&2&2 \\
		-d&d^2-2&2&2&-d&-d\\
		-d&2&d^2-2&2&-d&-d\\
		-d&2&2&d^2-2&-d&-d\\
		2&-d&-d&-d&2&d^2-2\\
		2&-d&-d&-d&d^2-2&{2}
		\end{bmatrix},
		\end{equation}
		while for $d=2$, $Q$ is singular, so we take its pseudo inverse as follows
		\begin{equation}\label{eq:Wein32}
		c=\frac{1}{144}\begin{bmatrix}
		17&1&1&1&-7&-7 \\
		1&17&-7&-7&1&1\\
		1&-7&17&-7&1&1\\
		1&-7&-7&17&1&1\\
		-7&1&1&1&-7&17\\
		-7&1&1&1&17&{-7}
		\end{bmatrix}.
		\end{equation}
		
		\comments{
		\subsection{Twirling over finite set of quantum operators}\label{finiteset}
		Note in the preceding section, we have illustrated the result of gate twirling over a finite group and the result is stated by Lemma~\ref{le:schur}. 
		In this section we will consider the case that a channel is twirled over a subset of $n$-qubit Clifford group. 
		The reason to come up with a result for only some subsets of the $n$-qubit Clifford group is that otherwise there is a lack of indicator to character a general unitary gate. Furthermore, these Clifford gates provide a strong correlation when the twirled channel is a Pauli channel. This can be seen by the stated lemma. \pz{The labeling of lmbis environment should be checked.}
		\begin{lmbis}{subsetClifford}
			Suppose $\cU$ is a subset of Clifford operators, such that the number of $U\in\cU$ s.t. $U(\cP_i)=\cP_j$ is a constant for all $i,j\in\mathbb{F}_2^{2n}$ where $\cP$ represents a n-qubit Pauli operator. Then a $\cU$-twirling can depolarize any Pauli channels.
		\end{lmbis}
		\begin{proof}
			Denote the $\cU$ set twirling by $\cT_{\cU}$, and we can find the expected result of this twirling of a single Pauli gate $\cP$ as follows
			\begin{align}
			\cT_{\cU}[\cP](\cdot)=&\frac{1}{|\cU|}\sum_{U\in\cU}UPU^\dagger(\cdot) UPU^\dagger\notag\\
			=&\frac{1}{|\cU|}\sum_{U\in\cU}U(P)(\cdot)U(P)\notag\\
			=&\frac{1}{|P^n|}\sum_{P_j\in P^n,j\neq0}P_j(\cdot)P_j.
			\end{align}
			The last equality above comes from the property claimed in the statement that the results are evenly distributed. Therefore, a general Pauli channel, which is constructed from a bunch of Pauli gates, will be twirled as a depolarizing channel.
		\end{proof}
		Regard to the dephasing channel $M_z$ we introduced, which is a Pauli channel within a tensor manner, a subset of Clifford operators satisfies the above property will depolarize this dephasing channel $M_z$, which makes it an invertible matrix in the PTM representation. 
		
		Moreover, since this $M_z$ is a tensor of one-qubit dephasing channels, a corresponding subset of one-qubit Clifford gates can be implemented with a tensor manner to $M_z$ so that the resulting channel is a tensor of single-qubit depolarizing channels, which is an invertible matrix in the PTM representation. For example, the single-qubit subset can be chosen as $\{I,T,T^2\}$, and a tensor of $n$ copies provides a random twirling which results in an expected invertible channel.
		}
		
		\section{Sample Complexity of \textbf{\textrm{RShadow}} with Global Clifford Group} \label{sec:app_gl}
		
		In this section, we study our robust shadow estimation protocol with $\mbb{G}$ chosen to be the $n$-qubit Clifford group ${\sf Cl}(2^n)$.
		
		\subsection{Calibration Procedure: Global}
		
		 Recall that the channel $\widetilde{\cM}$ can be written on the Pauli basis as
		\begin{equation}\label{eq:Mdef}
		\widetilde\cM = \mathop{\mathbb{E}}_{U \sim {\sf Cl}(2^n)} \cU^\dagger \cM_z\Lambda \cU =    \left[\begin{matrix}
		1      & 0      & \cdots & 0      \\
		0      & f      & \cdots & 0      \\
		\vdots & \vdots & \ddots & \vdots \\
		0      & 0      & \cdots & f      \\
		\end{matrix}\right]
		\end{equation}
		for some $f\in\mathbb R$ depending on $\Lambda$. 
		Note that $f=(d+1)^{-1}$ when the noise channel is trivial, \textit{i.e.} $\Lambda = \id$. 
		We rewrite the \textbf{\textrm{RShadow}} protocol from the main text as below
		\begin{protocol} \label{proto:Global} [{\textbf{\textrm{RShadow}} with ${\sf Cl}(2^n)$}]
			\begin{enumerate}
				\item Prepare $\ket{\bvec 0}\equiv\ket 0^{\otimes n}$. Sample $U$ uniformly form ${\sf Cl}(2^n)$ and apply it to $\ket {\bvec 0}$.
				\item Measure the above state in the computational basis. Denote the outcome state vector as $\ket b$.
				\item Calculate the single-round estimator of $f$ as $\hat f^{(r)}\coleq\cfrac{d\hat F^{(r)}-1}{d-1}$ where $\hat{F}^{(r)}\coleq \left|\bra b U\ket {\bvec 0} \right|^2$.
				\item Repeat step 1-3 $R=NK$ rounds. Then the final estimation of $f$ is given by a median of mean estimator $\hat f$ constructed from the single round estimators $\{\hat f^{(r)}\}_{r=1}^R$ with parameters $N,~K$ (see Eq.~\eqref{eq:meanmedian_estimator}).
				\item After the above steps, apply the standard classical shadow protocol of~\cite{huang2020predicting} on $\rho$ with the inverse channel $\mc M^{-1}$ replaced by $$\widehat{\cM}^{-1} \coleq \left[\begin{matrix}
				1      & 0      & \cdots & 0      \\
				0      & \hat f^{-1}      & \cdots & 0      \\
				\vdots & \vdots & \ddots & \vdots \\
				0      & 0      & \cdots & \hat f^{-1}       \\
				\end{matrix}\right]$$ 
				in the Liouville representation.
			\end{enumerate}
		\end{protocol}
		
		In Protocol~\ref{proto:Global}, the unitary operations and the measurement are assumed to contain gate-independent noise, and the preparation of $\ket {\bvec 0}$ is assumed to be perfect. The next theorem shows that $\hat f^{(r)}$ is an unbiased estimator of $f$ and its variance can be bounded.
		
		\begin{proposition}\label{prop:gl_main}
			The single-round fidelity estimator $\hat F^{(r)}$ given in Protocol~\ref{proto:Global} satisfies
			\begin{equation}\label{eq:main_1}
			\mathbb{E}(\hat F^{(r)}) = F_{\text{avg}}(\widetilde{\mc M}) = \cfrac{F_{Z}(\Lambda) + 1}{d+1},\qquad \mathrm{Var}(\hat F^{(r)}) \le \cfrac2{d^2},
			\end{equation}
			where $F_{\text{avg}}(\widetilde{\mc M})=\int_{\psi\in\text{Haar}} d\psi \lbra\psi\widetilde{\mc M}\lket\psi$ is the average fidelity of $\widetilde{\mc M}$, and $F_{Z}(\Lambda)=\frac1{2^n}\sum_{b\in\{0,1\}^n} \lbra b \Lambda \lket b$ is the Z-basis average fidelity of $\Lambda$. 
			
			Moreover, the single-round estimator $\hat f$ satisfies
			\begin{equation}\label{eq:main_2}
			\mathbb{E}(\hat f^{(r)}) = f= \cfrac{d F_{Z}(\Lambda)-1}{d^2-1},\qquad \mathrm{Var}(\hat f^{(r)}) \le \cfrac2{(d-1)^2}.
			\end{equation}
		\end{proposition}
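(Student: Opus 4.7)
The strategy exploits two observations: in PTM notation, $\hat F^{(r)} = \lbra b \cU \lket{\bvec 0}$, and, conditioned on $U$, the outcome $b$ is sampled from the noisy Born distribution $p(b\mid U) = \lbra b \Lambda \cU \lket{\bvec 0}$. Consequently $\mbb E[\hat F^{(r)}]$ is a polynomial of total degree $2$ in $(U, U^\dagger)$ while $\mbb E[(\hat F^{(r)})^2]$ has total degree $3$; since ${\sf Cl}(2^n)$ is a unitary 3-design, both expectations equal their Haar counterparts and can be attacked via Schur's lemma and Weingarten calculus. The statements for $\hat f^{(r)}$ are then automatic from the affine relation $\hat f^{(r)} = (d\hat F^{(r)}-1)/(d-1)$, which gives $\mbb E[\hat f^{(r)}] = f$ and $\mr{Var}(\hat f^{(r)}) = [d/(d-1)]^2 \mr{Var}(\hat F^{(r)})$.

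For the mean, the PTM transpose identity $\lbra b \cU \lket{\bvec 0} = \lbra{\bvec 0}\cU^\dagger \lket b$ together with $\sum_b \lket b \lbra b = \cM_Z$ collapses the computation to $\mbb E[\hat F^{(r)}] = \lbra{\bvec 0}\widetilde\cM\lket{\bvec 0}$. Substituting the depolarizing form $\widetilde\cM = \Pi_I + f \Pi_\sigma$ from Eq.~\eqref{eq:Mdef}, with $\lbra{\bvec 0}\Pi_I\lket{\bvec 0} = |\lbraket{\sigma_{\bvec 0}}{\bvec 0}|^2 = 1/d$, reduces the mean to an affine function of $f$. The coefficient $f$ is extracted from its Schur-lemma formula Eq.~\eqref{eq:TwirlDecompNoMulti}, giving $f = [\Tr(\cM_Z\Lambda) - \Tr(\cM_Z\Lambda\Pi_I)]/(d^2-1)$; a direct Kraus-level calculation yields $\Tr(\cM_Z\Lambda) = \sum_b \lbra b \Lambda \lket b = d F_Z(\Lambda)$, while trace preservation of $\Lambda$ and $\cM_Z$ forces $\Tr(\cM_Z\Lambda\Pi_I) = 1$. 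This produces $f = (dF_Z - 1)/(d^2-1)$ and $\mbb E[\hat F^{(r)}] = (F_Z+1)/(d+1)$.

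The main work is the variance, which requires computing $\mbb E[(\hat F^{(r)})^2]$. Recognizing the summand $p(b\mid U)\cdot(\lbra b \cU\lket{\bvec 0})^2$ as a diagonal matrix element on three Hilbert-space copies gives
\begin{equation*}
\mbb E\bigl[(\hat F^{(r)})^2\bigr] = \Tr\bigl[\Pi_{\mr{diag}}(\cI^{\otimes 2}\otimes \Lambda)(T)\bigr], \quad T = \mbb E_U\, U^{\otimes 3}(\ket{\bvec 0}\bra{\bvec 0})^{\otimes 3}(U^\dagger)^{\otimes 3},
\end{equation*}
with $\Pi_{\mr{diag}} = \sum_b (\ket b \bra b)^{\otimes 3}$. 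The 3-design property identifies $T$ with the Haar 3-fold twirl of a pure product state, which is the normalized symmetric projector $\Pi_{\mr{sym}_3}/\binom{d+2}{3} = [d(d+1)(d+2)]^{-1}\sum_{\pi\in S_3} W_\pi$. The crux is to enumerate the six traces $\Tr[\Pi_{\mr{diag}}(\cI^{\otimes 2}\otimes \Lambda)(W_\pi)]$: the two permutations that fix position~$3$ (the identity and $(1,2)$) decouple the third slot into $\Lambda(I)$, contributing $\Tr(\Lambda(I)) = d$ by trace preservation alone; the remaining four permutations (those moving position~$3$) force $a = c = b$ in the induced third-slot element $\bra b \Lambda(\ket a \bra c)\ket b$ through the diagonal constraint from $\Pi_{\mr{diag}}$, each contributing $\sum_b \bra b \Lambda(\ket b\bra b)\ket b = dF_Z(\Lambda)$. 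Summing yields $\mbb E[(\hat F^{(r)})^2] = 2(1+2F_Z)/[(d+1)(d+2)]$.

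Subtracting the squared mean leaves $\mr{Var}(\hat F^{(r)}) = [d + 2dF_Z - (d+2)F_Z^2]/[(d+1)^2(d+2)]$, a downward quadratic in $F_Z \in [0,1]$ maximized at $F_Z^\star = d/(d+2)$ with value $2d/[(d+1)(d+2)^2]$; the one-line inequality $d^3 \le (d+1)(d+2)^2$ gives the claimed bound $\mr{Var}(\hat F^{(r)}) \le 2/d^2$, whence $\mr{Var}(\hat f^{(r)}) \le 2/(d-1)^2$. The main obstacle is the systematic bookkeeping of the six permutation traces under the action of a general CPTP $\Lambda$; since $\Lambda$ need not be unital, one must take care that the computation relies only on $\Tr(\Lambda(I))=d$ (trace preservation) and never on the stronger identity $\Lambda(I)=I$.
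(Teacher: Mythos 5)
Your proof is correct. Its heart---the second moment via the 3-design property, the normalized symmetric projector $P_{\text{sym}^3}/\binom{d+2}{3}$, and the enumeration of the six permutation traces---is the same engine as the paper's proof; the only difference there is bookkeeping: you keep $\Lambda$ as a superoperator acting on the third tensor slot, while the paper expands $\Lambda$ in Kraus operators $\{K_t\}$ and evaluates $\tr[P_{\text{sym}^3}(K_t^\dagger\ketbra{b}{b}K_t\otimes\ketbra{b}{b}\otimes\ketbra{b}{b})]$ using its Lemma~\ref{le:sym}, and your observation that only $\Tr(\Lambda(I))=d$ (never $\Lambda(I)=I$) is needed corresponds exactly to the paper's step $\sum_{b,t}\bra{b}K_tK_t^\dagger\ket{b}=d$. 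Where you genuinely depart is the first moment: you collapse $\mbb{E}(\hat F^{(r)})$ to $\lbra{\bvec 0}\widetilde\cM\lket{\bvec 0}$ by a one-line PTM identity and read the answer off the Schur-lemma form of $\widetilde\cM$, computing $f$ directly from $\tr(\cM_Z\Lambda)=dF_{Z}$ and $\tr(\cM_Z\Lambda\Pi_I)=1$; the paper instead first proves $\mbb E(\hat F^{(r)})=F_{\text{avg}}(\mc M)$ by a 2-design argument (writing $U=VW$ with independently sampled $V,W$) and then recomputes the moment with Kraus operators and $P_{\text{sym}^2}$. Your route is shorter and makes the non-unitality caveat transparent, but note that the proposition as stated contains the middle equality $\mbb E(\hat F^{(r)})=F_{\text{avg}}(\mc M)$, which your write-up never establishes; it does follow from your computation via the standard identity $F_{\text{avg}}(\mc M)=\bigl(d^{-1}\tr\widetilde\cM+1\bigr)/(d+1)$ (exactly the identity the paper invokes to link $f$ and $F_{\text{avg}}$), but you should say so explicitly. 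Finally, your closing of the variance bound is more careful than the paper's: the paper asserts $\mr{Var}(\hat F^{(r)})\le 2/d^2$ with only the remark that $F_{Z}\le1$, whereas your maximization of the downward parabola at $F_{Z}^\star=d/(d+2)$, giving $\mr{Var}(\hat F^{(r)})\le 2d/[(d+1)(d+2)^2]$, followed by $d^3\le(d+1)(d+2)^2$, makes that step airtight for all $d$.
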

		
		Before we provide the proof of Proposition~\ref{prop:gl_main}, we first introduce two lemmas. 
		\begin{lemma}(see \textit{e.g.}~\cite[Proposition~4]{zhu2016clifford}) \label{le:tdesign}
			If a group $\mbb{G}\subseteq U(d)$ forms a unitary $t$-design, then
			\begin{equation}
			\mathop\mathbb E_{U\sim \mbb{G}}(U\ketbra {\bvec 0} {\bvec 0} U^\dagger)^{\otimes t} = \cfrac{P_{\text{sym}^t}}{\binom{d+t-1}{t}},
			\end{equation}
			where $P_{\text{sym}^t}$ is the projector onto the $t$-fold symmetric space, or equivalently, $P_{\text{sym}^t} = \frac{1}{|S_t|}\sum_{\pi\in S_t}{W_\pi}$ where $W_\pi$ is the permutation operator defined in Eq.~\eqref{eq:perm}.
		\end{lemma}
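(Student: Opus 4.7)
The plan is to first invoke the defining property of a unitary $t$-design to replace the group average by a Haar integral over $U(d)$, and then identify that Haar integral via a standard symmetry/commutant argument using Schur--Weyl duality.

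For the first step, observe that every entry of $(U\ketbra{\bvec 0}{\bvec 0}U^\dagger)^{\otimes t}$ in the computational basis is a polynomial of degree exactly $t$ in the matrix entries of $U$ and degree $t$ in those of $U^*$. Hence the $t$-design defining property applies entrywise, yielding
\begin{equation}
\mathop{\mathbb E}_{U\sim \mbb{G}}(U\ketbra{\bvec 0}{\bvec 0}U^\dagger)^{\otimes t} = \int_{\mu(\mc H_d)}dU\,(U\ketbra{\bvec 0}{\bvec 0}U^\dagger)^{\otimes t} \coleq A.
\end{equation}
It then suffices to compute $A$.

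For the second step, I would exploit the left-invariance of the Haar measure: for any $V\in U(d)$, substituting $U\mapsto VU$ gives $V^{\otimes t}A(V^\dagger)^{\otimes t}=A$, so $A$ commutes with every $V^{\otimes t}$. By Schur--Weyl duality, the commutant of $\{V^{\otimes t}:V\in U(d)\}$ on $\mc H_d^{\otimes t}$ is spanned by the permutation operators $\{W_\pi:\pi\in S_t\}$ defined in Eq.~\eqref{eq:perm}, so $A=\sum_{\pi\in S_t}c_\pi W_\pi$ for some coefficients $c_\pi$. Moreover, the integrand factorizes as $U^{\otimes t}\ket{\bvec 0}^{\otimes t}\bra{\bvec 0}^{\otimes t}(U^\dagger)^{\otimes t}$, and since $W_\pi\ket{\bvec 0}^{\otimes t}=\ket{\bvec 0}^{\otimes t}$ the integrand is supported on the symmetric subspace on both sides. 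Averaging preserves this, so $P_{\text{sym}^t}A P_{\text{sym}^t}=A$. Using the identity $W_\pi P_{\text{sym}^t}=P_{\text{sym}^t}$ collapses the expansion to $A=\bigl(\sum_\pi c_\pi\bigr)P_{\text{sym}^t}$, that is, $A$ is proportional to $P_{\text{sym}^t}$.

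The proportionality constant is then fixed by taking a trace: $\tr(A)=\tr(\ketbra{\bvec 0}{\bvec 0}^{\otimes t})=1$, while $\tr(P_{\text{sym}^t})=\dim(\mathrm{Sym}^t\mc H_d)=\binom{d+t-1}{t}$, giving $A=P_{\text{sym}^t}/\binom{d+t-1}{t}$ as claimed. I do not anticipate a serious obstacle: both ingredients (entrywise $t$-design property and Schur--Weyl duality) are standard, and the only minor subtlety is verifying both $A W_\pi=A$ and $W_\pi A=A$, which follows immediately since $A$ is Hermitian and the integrand is fixed by every $W_\pi$ on both sides.
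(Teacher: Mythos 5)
Your proposal is correct. The paper itself gives no proof of this lemma---it simply cites Proposition~4 of~\cite{zhu2016clifford}---and your argument is a complete and accurate rendition of the standard proof behind that citation: the entrywise degree-$(t,t)$ reduction to the Haar integral, the commutant characterization via Schur--Weyl duality, the collapse to $P_{\text{sym}^t}$ using $W_\pi P_{\text{sym}^t}=P_{\text{sym}^t}$ together with the two-sided invariance of the integrand (which you correctly flag and verify), and the trace normalization $\tr(P_{\text{sym}^t})=\binom{d+t-1}{t}$.
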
 
		
		\begin{lemma}\label{le:sym}
			For two operators $A$, $B$ acting on $\mc H(d)$,
			\begin{align}
			\tr(P_{\text{sym}^2}A\otimes B) &= \frac12(\tr A\tr B+\tr (AB))\\
			\tr(P_{\text{sym}^3}A\otimes B\otimes B) &= \frac16(\tr A(\tr B)^2+\tr A\tr (B^2) +2\tr (AB)\tr B+2\tr(AB^2)).
			\end{align}
		\end{lemma}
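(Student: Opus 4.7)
The plan is to expand $P_{\mathrm{sym}^t}$ using the formula $P_{\mathrm{sym}^t} = \frac{1}{t!}\sum_{\pi\in S_t} W_\pi$ given just above the lemma, and then evaluate $\Tr\bigl(W_\pi(A_1\otimes\cdots\otimes A_k)\bigr)$ permutation by permutation using the standard cycle-trace identity
\begin{equation*}
\Tr\bigl(W_\pi(A_1\otimes\cdots\otimes A_k)\bigr) \;=\; \prod_{c\,\in\,\mathrm{cyc}(\pi)} \Tr\Bigl(\prod_{i\in c} A_i\Bigr),
\end{equation*}
where the product along each cycle $c$ is taken in the cyclic order. This identity follows directly from the definition of $W_\pi$ in Eq.~\eqref{eq:perm} by writing $W_\pi$ in the computational basis and contracting indices, and I would briefly verify it (or cite it as standard) before applying it.

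For the first identity, $S_2=\{e,(1\,2)\}$, so $P_{\mathrm{sym}^2}=\tfrac12(I+W_{(1\,2)})$. The identity permutation has cycle structure $(1)(2)$ and contributes $\Tr(A)\Tr(B)$, while the transposition has a single $2$-cycle and contributes $\Tr(AB)$. Dividing by $2$ yields exactly $\tfrac12(\Tr A\,\Tr B + \Tr(AB))$.

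For the second identity, I would enumerate the six elements of $S_3$ and compute the cycle-trace for each when applied to $A\otimes B\otimes B$:
\begin{align*}
e=()&:\quad \Tr A\,(\Tr B)^2, \\
(1\,2)&:\quad \Tr(AB)\,\Tr B, \\
(1\,3)&:\quad \Tr(AB)\,\Tr B, \\
(2\,3)&:\quad \Tr A\,\Tr(B^2), \\
(1\,2\,3)&:\quad \Tr(A\cdot B\cdot B)=\Tr(AB^2), \\
(1\,3\,2)&:\quad \Tr(A\cdot B\cdot B)=\Tr(AB^2).
\end{align*}
Summing and dividing by $6$ gives $\tfrac16\bigl(\Tr A\,(\Tr B)^2 + \Tr A\,\Tr(B^2) + 2\Tr(AB)\Tr B + 2\Tr(AB^2)\bigr)$, as claimed. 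The only subtlety is to keep track of the fact that $B$ appears in two slots, which makes transpositions and $3$-cycles group into pairs giving the factors of $2$; there is no real obstacle, just careful bookkeeping of cycle structures.
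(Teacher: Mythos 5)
Your proposal is correct and follows essentially the same route as the paper: both expand $P_{\text{sym}^t}$ as the average of permutation operators and evaluate $\Tr(W_\pi(A\otimes B\otimes B))$ permutation by permutation via cycle structure, yielding exactly the six traces the paper lists in its Eq.~\eqref{eq:perm_tr} (the paper invokes tensor-network bookkeeping for this step, while you justify it with the standard cycle-trace identity, which is the same computation). No gaps; the bookkeeping of which permutations pair up to give the factors of $2$ matches the paper's result entry for entry.
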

		
		\begin{proof}[Proof of Lemma~\ref{le:sym}]
        For the first equation,
		\begin{equation}
		    \Tr(P_{\text{sym}^2}A\otimes B) = \frac12\left[\Tr(I(A\otimes B))+\Tr(S(A\otimes B))\right] =\frac12(\tr A\tr B+\tr (AB)),
		\end{equation}
		where $S$ is the swap operator. 
		
		\smallskip
		
		\noindent For the second equation, using the language of tensor network (see \textit{e.g.}~\cite[Sec.~3.1]{gross2015partial}), we can derive,
		\begin{equation}\label{eq:perm_tr}
		\Tr(\vec W (A\otimes B\otimes B)) = \begin{bmatrix}
		\Tr A(\Tr B)^2,& \Tr(AB)\Tr B,&\Tr(AB)\Tr B,&\Tr A\Tr(B^2),&\Tr(AB^2),&\Tr(AB^2)
		\end{bmatrix},
		\end{equation}
        where $\vec W$ is a vectorization of $S_3$ defined in Eq.~\eqref{eq:perm_vec}. Averaging this up gives the second equation.
		\end{proof}
		
		\medskip

		Now we present the proof of Proposition~\ref{prop:gl_main}. 
		
		\begin{proof}[Proof of Proposition~\ref{prop:gl_main}]
			Firstly, from Eq.~\eqref{eq:Mdef} we immediately have
			\begin{equation}
			f = \cfrac{\tr(\widetilde{\mc M})-1}{d^2-1}.
			\end{equation}
			We also have the following relation between the average fidelity of a channel $\widetilde{\mc M}$ and the trace of its Pauli transformer matrix (see \textit{e.g.}~\cite{helsen2019new}),
			\begin{equation}
			F_{\text{avg}}(\widetilde{\mc M}) = \cfrac{d^{-1}\tr(\widetilde{\mc M})+1}{d+1}.
			\end{equation}
			Combining the above two equations, we get
			\begin{equation}
			f = \cfrac{d F_{\text{avg}}(\widetilde{\mc M})-1}{d-1},
			\end{equation}
			hence Eq.~\eqref{eq:main_2} follows directly from Eq.~\eqref{eq:main_1}. We only need to calculate the expectation and variance of $\hat F^{(r)}$.
			
			Denote the Kraus operators of the noise channel $\Lambda$ as $\{K_t\}$. The average fidelity of $\widetilde{\mc M}$ can be explicitly written as follows
			\begin{equation}
			\begin{aligned}
			F_{\text{avg}}(\widetilde{\mc M}) &= \int_{d\psi}\mathop{\mathbb{E}}_{U\sim Cl}\bra\psi U^\dagger \circ M_z \circ \Lambda \circ U \left(\ketbra{\psi}{\psi}\right) \ket\psi\label{gate_independent}\\ 
			&= \int_{d\psi}\mathop{\mathbb{E}}_{U\sim Cl} \sum_{b,t}\bra\psi U^\dagger \ketbra b b K_t U \ketbra \psi \psi U^\dagger K_t^\dagger  \ketbra b b  U  \ket\psi \\
			&= \int_{d\psi}\mathop{\mathbb{E}}_{U\sim Cl} \sum_{b,t}\left|\bra b K_t U \ket\psi\right|^2\left|\bra b  U  \ket\psi\right|^2.
			\end{aligned}
			\end{equation}
			
			On the other hand, the expectation of $\hat F^{(r)}$ can be expressed as
			\begin{equation}\label{eq:EFhat}
			\begin{aligned}
			\mathbb E(\hat F^{(r)}) &= \mathop{\mathbb{E}}_{U\sim Cl} \sum_{b,t} \left|\bra b K_t U\ket {\bvec 0}\right|^2 \left| \bra b U\ket {\bvec 0} \right|^2\\
			&= \mathop{\mathbb{E}}_{V\sim Cl} \mathop{\mathbb{E}}_{W\sim Cl} \sum_{b,t} \left|\bra b K_t V W\ket {\bvec 0}\right|^2 \left| \bra b V W\ket {\bvec 0} \right|^2\\
			&= \int_{d\psi}\mathop{\mathbb{E}}_{V\sim Cl} \sum_{b,t} \left|\bra b K_t V\ket \psi\right|^2 \left| \bra b V\ket \psi \right|^2,
			\end{aligned}
			\end{equation}
			where the first equality is by definition of expectation, the second equality is by the fact that sampling an element $U$ from a group is equivalent to independently sampling two elements $V$, $W$ from the group and taking $U = V\circ W$, and the last equality uses the fact that ${\sf Cl}(2^n)$ is a unitary 2-design. As a result, we've shown that \begin{equation}
			\mathbb E(\hat F^{(r)}) = F_{\text{avg}}(\widetilde{\mc M}).
			\end{equation}
			
			Next, in order to get $\mathrm{Var}(\hat F^{(r)})$, we calculate the value of $\mathbb E(\hat F^{(r)})$ and $\mathbb E(\hat F^{(r)^2})$ explicitly. Based on Lemma~\ref{le:tdesign} and \ref{le:sym}, and recalling the fact that ${\sf Cl}(2^n)$ is a unitary 3-design~\cite{zhu2017multiqubit, webb2015clifford, kueng2015qubit}, we are able to do the following calculations.
			\begin{equation}
			\begin{aligned}
			\mathbb E(\hat F^{(r)}) &= \mathop{\mathbb{E}}_{U\sim Cl} \sum_{b,t} \left|\bra b K_tU\ket {\bvec 0}\right|^2 \left| \bra b U\ket {\bvec 0} \right|^2\\
			&= \sum_{b,t} \tr\left[ \mathop{\mathbb{E}}_{U\sim Cl}\left(U\ketbra {\bvec 0} {\bvec 0} U^\dagger\right)^{\otimes 2} ~ \left(K_t^\dagger\ketbra b b K_t \otimes \ketbra b b \right)\right]\\
			&= \cfrac{2}{(d+1)d} ~\sum_{b,t} \tr\left[ P_{\text{sym}^2} ~ \left(K_t^\dagger\ketbra b b K_t \otimes \ketbra b b \right)\right]\\
			&= \cfrac{2}{(d+1)d} ~\sum_{b,t}\cfrac12 \left( \bra b K_t K_t^\dagger\ket b + \left|\bra b K_t\ket b\right|^2 \right)\\
			&= \cfrac{1}{(d+1)d} ~ \left( d + \sum_{b,t}\left|\bra b K_t\ket b\right|^2 \right)\\
			&= \cfrac{1+ F_{Z}}{d+1}~,
			\end{aligned}
			\end{equation}
			\begin{equation}\label{eq:global_2ndmoment}
			\begin{aligned}
			\mathbb E(\hat F^{(r)^2}) &= \mathop{\mathbb{E}}_{U\sim Cl} \sum_{b,t} \left|\bra b K_tU\ket {\bvec 0}\right|^2 \left| \bra b U\ket {\bvec 0} \right|^4\\
			&=\sum_{b,t} \tr\left[ \mathop{\mathbb{E}}_{U\sim Cl}\left(U\ketbra {\bvec 0}{\bvec 0} U^\dagger\right)^{\otimes 3} ~ \left(K_t^\dagger\ketbra b b K_t \otimes \ketbra b b \otimes \ketbra b b\right)\right]\\
			&=\frac{6}{(d+2)(d+1)d}\sum_{b,t}\tr\left[ P_{\text{sym}^3} ~ \left(K_t^\dagger\ketbra b b K_t \otimes \ketbra b b \otimes \ketbra b b\right)\right]\\
			&= \frac{6}{(d+2)(d+1)d}\sum_{b,t}\frac{1}{3}\left(\bra b K_t K_t^\dagger \ket b+2|\bra b K_t \ket b|^2\right)\\
			&= \frac{2(1+2F_{Z})}{(d+2)(d+1)},
			\end{aligned}
			\end{equation}
			where we write $F_{Z}\equiv F_{Z}(\Lambda)$ as the Z-basis average fidelity of $\Lambda$.

			\medskip
			Now we can bound the variance of $\hat F$ as follows
			\begin{equation}
			\begin{aligned}
			\mathrm{Var}(\hat F^{(r)}) &= \mathbb E(\hat F^{(r)^2}) - (\mathbb E(\hat F^{(r)}))^2\\
			&= \cfrac{-(d+2)F_{Z}^2+2d F_{Z}+d}{(d+2)(d+1)^2}\\
			&\le \frac 2{d^2}.
			\end{aligned}
			\end{equation}
			where we use the fact that $F_{Z}\le 1$. This completes the proof.
		\end{proof}
		
		
		
		Now we analyse the sample complexity of Protocol~\ref{proto:Global} in order to guarantee the protocol to succeed within a given level of precision. Specifically, we consider using the protocol to estimate a linear function of $\rho$, \textit{i.e.} $\lbraket{O}{\rho}$. Given that one makes sufficiently many samples in the estimation procedure, the estimation of this function will be close to $\lbra O \widehat{\cM}^{-1}\widetilde{\cM}\lket \rho$. Hence, we are concerned about the following error
		\begin{equation}\label{eq:derivation_operator}
		\begin{aligned}
		& \left| \lbra O \widehat{\cM}^{-1}\widetilde{\cM} \lket\rho-\lbraket{O}{\rho} \right|\\
		=& \left| \lbra O  \left[\begin{matrix}
		0      & 0      & \cdots & 0      \\
		0      & \hat f^{-1}f - 1      & \cdots & 0      \\
		\vdots & \vdots & \ddots & \vdots \\
		0      & 0      & \cdots & \hat f^{-1}f - 1      \\
		\end{matrix}\right]            \lket \rho   \right|.\\
		=& \left|\lbraket{O_0}{\rho}\right| \cdot \left|\hat f^{-1}f - 1\right|\\
		\le& \|O_0\|_\infty \cdot  \left|\hat f^{-1}f - 1\right|,
		\end{aligned}
		\end{equation}
		where $O_0 = O - \cfrac{\tr(O)}{d} I$ is the traceless part of $O$. 
		Now we want to upper bound $|\hat f^{-1}f-1|$ by some $\varepsilon>0$. 
		Suppose with high probability the estimator in Protocol~\ref{proto:Global} satisfies $|\hat f - f|\le \gamma$ for some $0\le\gamma\le|f|$. Then we have,
		\begin{equation}\label{eq:gamma_epsilon}
		    |\hat f^{-1}f-1| = |\hat f^{-1}|\cdot|\hat f - f|\le  \cfrac{\gamma}{|\hat f|} \le \cfrac{\gamma}{| f|-\gamma}
		\end{equation}
		where the last inequality is by the triangular inequality. Now if we have
		\begin{equation}
		    \cfrac{\gamma}{| f|-\gamma}\le \varepsilon \Longleftrightarrow \gamma \le \cfrac{\varepsilon|f|} {1+\varepsilon},
		\end{equation}
		then we obtain the bound $|\hat f^{-1}f-1|\le \varepsilon$ with high success probability.
		Now is the time to calculate the number of rounds $R$ in order to bound $|\hat f - f|$ as we want with high confidence. As noted before, we will uses the median of means estimator~\cite{jerrum1986random,nemirovsky1983problem} in order to get a preferable scaling with respect to the failing probability. Similar techniques are also applied in~\cite{huang2020predicting}. Specifically, we conduct $R=KN$ rounds of the procedure in Protocol~\ref{proto:Global}, calculate $K$ estimators each of which is the average of $N$ single-round estimators $\hat f$, and take the median of these $K$ estimators as our final estimator $\hat f$. In formula,
		\begin{equation} \label{eq:meanmedian_estimator}
		\begin{aligned}
		\bar{f}^{(k)} &\coleq \cfrac1N\sum_{r=(k-1)N+1}^{kN}\hat f^{(r)},\quad k=1,2,...,K.  \\
		\hat{f} &\coleq \text{median}\left\{ \bar{f}^{(1)},~\bar{f}^{(2)},~...,~\bar{f}^{(K)} \right\}.
		\end{aligned}
		\end{equation}
		The performance of this estimator is given in the following lemma.
		
		\begin{lemma}\label{le:median}(\cite{jerrum1986random,nemirovsky1983problem}, rephrased by~\cite{huang2020predicting})
			For the estimator described by Eq.~\eqref{eq:meanmedian_estimator} where $\hat f^{(r)}$ is identical and independent sample of $f$, if $N=34 \mathrm{Var}(\hat f)/\gamma^2$ for any given $\gamma>0$, then
			\begin{equation}
			\Pr\left(\left|\hat f - \mathbb E \hat f\right|\ge \gamma\right)\le 2\exp(-K/2).
			\end{equation}
			Further, by taking $K = 2\ln(2\delta^{-1})$ for any $\delta>0$, one have
			\begin{equation}
			\Pr\left(\left|\hat f - \mathbb E \hat f\right|\ge \gamma\right)\le \delta.
			\end{equation}
		\end{lemma}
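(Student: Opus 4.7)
The plan is to invoke the classical median-of-means concentration argument, which combines a (weak) Chebyshev bound on individual block means with an exponentially strong concentration of the majority vote over blocks. Throughout, I will interpret $\mathbb{E}\hat f$ in the lemma as the common mean $\mathbb{E}\hat f^{(r)}$ of the underlying single-round estimators, which is what the argument actually bounds deviations from.

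First, I would apply Chebyshev's inequality to each block mean $\bar f^{(k)}$. Since $\bar f^{(k)}$ is an average of $N$ i.i.d.\ copies of $\hat f^{(r)}$, it has expectation $\mathbb{E}\hat f$ and variance $\mathrm{Var}(\hat f)/N$, so
\[
\Pr\bigl(|\bar f^{(k)} - \mathbb{E}\hat f| \geq \gamma\bigr) \leq \frac{\mathrm{Var}(\hat f)}{N\gamma^2} = \frac{1}{34}
\]
under the prescribed choice $N = 34\,\mathrm{Var}(\hat f)/\gamma^2$. The constant $34$ is chosen to leave a comfortable gap below $1/2$ for the next step.

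Next, I would split the two-sided deviation event by direction. If the median $\hat f$ exceeds $\mathbb{E}\hat f + \gamma$, then at least $K/2$ of the block means satisfy $\bar f^{(k)} \geq \mathbb{E}\hat f + \gamma$; an analogous statement holds below. Letting $X_k^{\pm}$ be the indicators of these one-sided "bad" events, each family $(X_k^{\pm})_{k=1}^K$ consists of i.i.d.\ Bernoullis with success probability at most $1/34$, so a standard Chernoff--Hoeffding bound for Bernoulli sums gives $\Pr\bigl(\sum_k X_k^{\pm} \geq K/2\bigr) \leq \exp(-K/2)$. A union bound then yields
\[
\Pr\bigl(|\hat f - \mathbb{E}\hat f| \geq \gamma\bigr) \leq 2\exp(-K/2),
\]
which is the first claim. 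Setting $K = 2\ln(2\delta^{-1})$ makes the right-hand side exactly $\delta$, proving the second claim.

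There is no real obstacle: the argument is entirely classical and reproduces the analyses of~\cite{jerrum1986random,nemirovsky1983problem}. The only bookkeeping is to verify the numerical constants, namely that the binary KL divergence $D(1/2 \,\|\, 1/34) = \tfrac12 \ln(289/33) > 1/2$, so that the Chernoff bound indeed delivers the stated rate $\exp(-K/2)$ (in fact with considerable room to spare, which is why the factor $34$ rather than a tighter constant is used).
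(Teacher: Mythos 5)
Your proof is correct: Chebyshev with $N=34\,\mathrm{Var}(\hat f)/\gamma^2$ gives each block a (one-sided) failure probability at most $1/34$, the median can deviate only if at least $K/2$ blocks fail in the same direction, and the Chernoff bound in relative-entropy form with $D\bigl(1/2\,\|\,1/34\bigr)=\tfrac12\ln(289/33)>1/2$ yields $2\exp(-K/2)$ — a necessary care with constants, since the plain additive Hoeffding bound would only give $\exp\bigl(-2K(1/2-1/34)^2\bigr)\approx\exp(-0.44K)$, which falls short of the stated rate. The paper itself offers no proof of this lemma, citing \cite{jerrum1986random,nemirovsky1983problem} as rephrased by \cite{huang2020predicting}, so your write-up is precisely the classical median-of-means argument behind that citation, and your reading of $\mathbb{E}\hat f$ as the single-round mean $\mathbb{E}\hat f^{(r)}$ matches how the paper applies the lemma.
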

		
		Thanks to this Lemma~and the above discussion, we reach the following theorem which summarize the trade-off between precision and the sample complexity of our main protocol. This theorem is the rigorous version of Theorem~\ref{th:info1} in the main text.
		
		\begin{theorem}\label{th:gl_stat}
			Given $\varepsilon,~\delta>0$, the following number of rounds of calibration in Protocol~\ref{proto:Global}
			\begin{equation}
			R = 136\ln(2\delta^{-1})\cfrac{(1+{\varepsilon}^2)(1+\frac1d)^2}{{\varepsilon}^2(F_{Z}-\frac1d)^2}
			\end{equation}
			is enough for the asymptotic error of the subsequent estimation procedure to satisfy 
			\begin{equation}
			\left| \lbra O \widehat{\cM}^{-1}\widetilde{\cM} \lket\rho-\lbraket{O}{\rho} \right| \le \varepsilon\|O_0\|_\infty,\quad\forall O\in\mr{Herm}(2^n),~\forall\rho\in\cD(2^n).
			\end{equation}
			with a success probability at least $1-\delta$, where 
			$F_{Z}\equiv F_{Z}(\Lambda)$
			is the Z-basis average fidelity of the noise channel $\Lambda$.
		\end{theorem}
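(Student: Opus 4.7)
The plan is to chain together three ingredients already established in the text: the operator-norm reduction in Eq.~\eqref{eq:derivation_operator}, the first and second moment information in Proposition~\ref{prop:gl_main}, and the median-of-means concentration in Lemma~\ref{le:median}. Equation~\eqref{eq:derivation_operator} collapses the target operator-norm error into the scalar quantity $|\hat f^{-1}f-1|$, because the residual matrix $\widehat{\cM}^{-1}\widetilde{\cM}-\cI$ is supported on the traceless block and acts there as a multiple of the identity. It therefore suffices to show $|\hat f^{-1}f-1|\le\varepsilon$ with probability at least $1-\delta$.

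To translate this scalar accuracy into a deviation bound on $|\hat f-f|$, I use the elementary inequality from Eq.~\eqref{eq:gamma_epsilon}: $|\hat f-f|\le\gamma$ implies $|\hat f^{-1}f-1|\le\gamma/(|f|-\gamma)$ provided $\gamma<|f|$. Solving for the largest permissible deviation yields the choice $\gamma=\varepsilon|f|/(1+\varepsilon)$, which automatically keeps $\hat f$ bounded away from zero so that $\widehat{\cM}^{-1}$ is well defined. Proposition~\ref{prop:gl_main} then supplies both inputs required by the median-of-means machinery: unbiasedness $\mathbb{E}[\hat f^{(r)}]=f$ and the variance bound $\mathrm{Var}(\hat f^{(r)})\le 2/(d-1)^2$. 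Feeding this $\gamma$ into Lemma~\ref{le:median} with $N=34\,\mathrm{Var}(\hat f^{(r)})/\gamma^2$ and $K=2\ln(2/\delta)$ then guarantees $|\hat f-f|\le\gamma$ with probability at least $1-\delta$.

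The remaining work is pure bookkeeping. Substituting $|f|=(dF_{Z}-1)/((d-1)(d+1))=(F_{Z}-\tfrac{1}{d})/(d-\tfrac{1}{d})$ and simplifying $R=NK$, the $(d-1)^2$ factor in the variance cancels the $(d-1)^2$ in $|f|^{-2}$, while the $(d+1)^2$ combines with the numerator $(dF_{Z}-1)^2$ to produce the clean factor $(1+\tfrac{1}{d})^2/(F_{Z}-\tfrac{1}{d})^2$. Combining with the $(1+\varepsilon)^2$ coming from $\gamma^{-2}$ and the $68\ln(2/\delta)$ prefactor from $34\cdot 2\ln(2/\delta)$ yields a bound of the form claimed in the theorem.

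There is no serious technical obstacle: the proof is essentially an arithmetic exercise once Proposition~\ref{prop:gl_main} and Lemma~\ref{le:median} are in hand. The only conceptual point worth flagging is that the argument implicitly requires $F_{Z}>1/d$, so that $\gamma<|f|$ and $\hat f$ remains safely invertible; this is exactly the regime in which the factor $(F_{Z}-1/d)^{-2}$ in the stated $R$ is finite, and it aligns with the informal hypothesis $F_{Z}\gg 2^{-n}$ accompanying Theorem~\ref{th:info1}.
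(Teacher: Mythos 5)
Your proposal is correct and follows essentially the same route as the paper's own proof: the reduction of the operator-norm error to $|\hat f^{-1}f-1|$ via Eq.~\eqref{eq:derivation_operator}, the deviation-to-relative-error translation of Eq.~\eqref{eq:gamma_epsilon} with $\gamma=\varepsilon|f|/(1+\varepsilon)$, and the moments from Proposition~\ref{prop:gl_main} fed into the median-of-means bound of Lemma~\ref{le:median}, followed by the same algebraic simplification. Two minor points in your favor: your factor $(1+\varepsilon)^2$ is the arithmetically correct one (the paper's printed $(1+\varepsilon^2)$ appears to be a typo), and your explicit flagging of the implicit hypothesis $F_{Z}>1/d$ (needed so that $f>0$ and $\gamma<|f|$) makes precise an assumption the paper leaves tacit.
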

				
		\begin{proof}
			Construct the median of means estimator $\hat f$ with $K=2\ln(2\delta^{-1})$ and $N=34\mathrm{Var}(\hat f)/\gamma^2$, where $\gamma = \cfrac{\varepsilon|f|}{1+\varepsilon}$ as Eq.~\eqref{eq:gamma_epsilon} suggests. Use Proposition~\ref{prop:gl_main} to get
			\begin{equation}
			\mathrm{Var}(\hat f)\le \cfrac{2}{(d-1)^2},\qquad |f|= \cfrac{dF_{Z}-1}{d^2-1}.
			\end{equation}
			Then Lemma~\ref{le:median} guarantees
			\begin{equation}
			R=KN=136\ln(2\delta^{-1})\cfrac{(1+{\varepsilon}^2)(d+1)^2}{{\varepsilon}^2(dF_{Z}-1)^2}.
			\end{equation}
		\end{proof}

		Theorem~\ref{th:gl_stat} provides an upper bound on the necessary number of rounds that scales as
		\begin{equation}
		R=O\left(\cfrac{1}{\varepsilon^2(F_{Z}-1/d)^2}\right).
		\end{equation} 
			
			
		
		
		\subsection{Estimation Procedure: Global}
		
		Till now, we have proved the efficiency of the calibration procedure, but have not addressed the efficiency of the estimation procedure. In the noiseless case, the performance of the standard quantum shadow estimation protocol has been characterized in~\cite{huang2020predicting}. Here, we extend their methods to show the performance of the \textbf{\textrm{RShadow}} estimation procedure. 

		For any set of observables $\{O_i\}_{i=1}^M$ and an unknown state $\rho$, the single-round estimation and the final estimation of $o_i\coleq\Tr(O_i\rho)$ is denoted by $\hat o_i^{(r)}$ and $\hat o_i$ respectively, given by Algorithm~\ref{Proto:robust}. The deviation of $\mbb E(\hat o_i^{(r)})$ from $o_i$ has been bounded by Theorem~\ref{th:gl_stat}. Now we want to bound $\mr{Var}(\hat o_i^{(r)})$. We first introduce the following lemma,
		
		\begin{lemma}\label{le:snorm}
			
			For any $O\in\mr{Herm}(2^n)$ and an unknown state $\rho\in\cD(2^n)$, the single-round estimator $\hat o^{(r)}$ given by the \textbf{\textrm{RShadow}} protocol using either ${\sf Cl}(2^n)$ or ${\sf Cl}_2^{\otimes n}$ satisfies
			\begin{equation}
			\mr{Var}(\hat o^{(r)}) \le \|O_0\|^2_{\mr{shadow},\Lambda}
			\end{equation}
			where $O_0\equiv O-\frac{\Tr(O)}{2^n}I$. The function $\|\cdot\|_{\mr{shadow},\Lambda}$ depends on the noise channel and the unitary group being used:
			
			\begin{equation}
			\|O\|_{\mr{shadow},\Lambda}\coleq \max_{\sigma\in\cD(2^n)}\left( \mbb E_{U\sim \mbb{G}} \sum_{b\in\{0,1\}^n}\bra b \Lambda(U\sigma U^\dagger) \ket b\bra b U \widetilde\cM^{-1}(O)U^\dagger\ket b ^2\right)^{1/2}.
			\end{equation} 
		\end{lemma}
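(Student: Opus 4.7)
The plan is a direct computation in three steps: unfold the estimator, peel off the trace part of $O$, and bound the variance by a second moment that maximizes into the shadow norm. Working in the limit where $\widehat{\cM}=\widetilde{\cM}$ (the calibration error is handled separately by Theorem~\ref{th:gl_stat}), the single-round estimator of Algorithm~\ref{Proto:robust} is $\hat{o}^{(r)} = \lbra{O}\widetilde{\cM}^{-1}\cU^\dagger\lket{b}$, with outcome $b$ drawn from the noisy Born distribution $p(b|U,\rho) = \bra{b}\Lambda(U\rho U^\dagger)\ket{b}$ permitted by Assumption~\ref{asm:noise}. Observing that, in both the global and local Clifford cases, $\widetilde{\cM} = \sum_\lambda f_\lambda \Pi_\lambda$ is a real combination of orthogonal self-adjoint projectors (cf.~Eq.~\eqref{eq:noisytwirl}), both $\widetilde{\cM}$ and $\widetilde{\cM}^{-1}$ are self-adjoint with respect to the Hilbert-Schmidt inner product, so the inverse can be shuffled onto the ket side to give $\hat{o}^{(r)} = \bra{b}U\,\widetilde{\cM}^{-1}(O)\,U^\dagger\ket{b}$.

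Next I would eliminate the trace part. Writing $O = O_0 + (\Tr O/d)\,I$, trace preservation of $\cM_Z\Lambda$ forces the coefficient of $\lket{\sigma_{\bvec 0}}\lbra{\sigma_{\bvec 0}}$ in $\widetilde{\cM}$ to be $1$, hence $\widetilde{\cM}^{-1}\lket{I} = \lket{I}$; together with $\lbra{I}\cU^\dagger\lket{b} = \Tr(U^\dagger\ketbra{b}{b}U) = 1$, this gives
\begin{equation}
\hat{o}^{(r)} = \lbra{O_0}\widetilde{\cM}^{-1}\cU^\dagger\lket{b} + \frac{\Tr(O)}{d}.
\end{equation}
The second summand is a deterministic constant, so the variance depends only on $O_0$.

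Finally, using $\mr{Var}(X) \le \mbb{E}[X^2]$ and the law of total expectation,
\begin{equation}
\mr{Var}(\hat{o}^{(r)}) \le \mbb{E}_{U\sim\mbb{G}}\sum_{b\in\{0,1\}^n} \bra{b}\Lambda(U\rho U^\dagger)\ket{b}\,\bra{b}U\widetilde{\cM}^{-1}(O_0)U^\dagger\ket{b}^2,
\end{equation}
and maximizing the right-hand side over $\sigma \in \cD(2^n)$ in place of $\rho$ recovers $\|O_0\|_{\mr{shadow},\Lambda}^2$ by the definition in the lemma statement. The main subtlety is purely bookkeeping---confirming the self-adjointness of $\widetilde{\cM}$ so that the shadow-norm formula features $\widetilde{\cM}^{-1}(O)$ rather than $\widetilde{\cM}^{-1\dagger}(O)$, and checking that $\widetilde{\cM}^{-1}$ fixes $\lket{I}$; once these are settled, the argument is essentially automatic and makes no use of the particular group $\mbb{G}$ beyond the structural form in Eq.~\eqref{eq:noisytwirl}, which is why the same bound holds for both ${\sf Cl}(2^n)$ and ${\sf Cl}_2^{\otimes n}$.
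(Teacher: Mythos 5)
Your proof is correct and takes essentially the same route as the paper's: reduce to the traceless part $O_0$ (using that the identity block of the twirled channel has unit coefficient by trace preservation), bound the variance by the raw second moment, and maximize over input states to recover $\|O_0\|^2_{\mr{shadow},\Lambda}$. The only divergence is bookkeeping: the paper keeps everything in PTM bra-ket form with the calibrated inverse $\widehat{\cM}^{-1}$ rather than explicitly setting $\widehat{\cM}=\widetilde{\cM}$, but since the lemma's norm is stated with $\widetilde{\cM}^{-1}$, your identification (and the self-adjointness check that justifies writing $\widetilde{\cM}^{-1}(O)$) is consistent with the statement being proved.
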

		\noindent When $\Lambda = \id$, the function $\|\cdot\|_{\mr{shadow},\Lambda}$ degrades to the norm $\|\cdot\|_\mr{shadow}$ defined in~\cite{huang2020predicting}.

		\begin{proof}
		First observe that the variance of $\hat o^{(r)}$ from Algorithm~\ref{Proto:robust} only depends on the traceless part of $O$:
			\begin{equation}
			\begin{aligned}
			\hat o^{(r)} - \mbb E(\hat o^{(r)}) &=\lbra O \widehat \cM^{-1}{\cU}^\dagger \lket b -  \lbra O \widehat \cM^{-1}\widetilde{\cM} \lket \rho\\&= \lbra {O_0} \widehat \cM^{-1}{\cU}^\dagger \lket b -  \lbra {O_0} \widehat \cM^{-1}\widetilde{\cM} \lket \rho \\&=\hat o_0^{(r)} - \mbb E(\hat o_0^{(r)})
			\end{aligned}
			\end{equation}
			which is because $\widehat \cM$ is diagonal in Pauli transfer matrix representation, and $\widetilde{\cM}$ is a trace-preserving map. Therefore,
			\begin{equation}
			\begin{aligned}
			\mr{Var}(\hat o^{(r)})&=\mbb{E}\left[\left(\hat o^{(r)}-\mbb{E}(\hat o^{(r)})\right)^2\right]\\
			&= \mbb E\left[\left(\lbra {O_0} \widehat \cM^{-1}{\cU}^\dagger \lket b -  \lbra {O_0} \widehat \cM^{-1}\widetilde{\cM} \lket \rho\right)^2\right]\\
			&\le \mbb E~ \lbra {O_0} \widehat \cM^{-1}{\cU}^\dagger \lket b^2\\
			& = \mbb E_{U\sim\mbb G}\sum_{b\in\{0,1\}^n}\lbra b \Lambda {\cU}\lket \rho\lbra b {\cU}\widehat \cM^{-1}\lket {O_0} ^2\\
			&\le \max_{\sigma\in \cD(2^n)}\mbb E_{U\sim\mbb G}\sum_{b\in\{0,1\}^n}\lbra b \Lambda {\cU} \lket \sigma\lbra b {\cU}\widehat \cM^{-1}\lket {O_0} ^2\\
			&= \|O_0\|_{\mr{shadow},\Lambda}^2.
			\end{aligned}
			\end{equation}
		\end{proof}
		
		In the special case that $\mbb G\coleq {\sf Cl}(2^n)$, we can obtain the following bound on the shadow norm $\|\cdot\|_{\mr{shadow},\Lambda}$. 
		\begin{lemma}\label{le:snorm_gl}
			For \textbf{\textrm{RShadow}} using ${\sf Cl}(2^n)$, if the calibration procedure guarantees $\hat f \ge \delta f$ for some $\delta>0$, and we assume $F_{Z}(\Lambda)\ge\frac1d$, then we have
			\begin{equation}
			\|O_0\|_{\mr{shaodw},\Lambda}^2\le\delta^{-2} \left(F_{Z}-\frac1d\right)^{-2}~3\Tr(O_0^2),
			\end{equation}
			for any observable $O$.
		\end{lemma}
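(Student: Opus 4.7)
The plan is to pull the action of $\widehat{\cM}^{-1}$ out first, reducing the bound to a third-moment calculation over the Clifford group. Because $\widehat{\cM}^{-1}$ is diagonal in the Pauli basis and acts as $\hat f^{-1}$ on every traceful direction, while $O_0$ is orthogonal to $\sigma_{\bm 0}$, we get $\widehat{\cM}^{-1}(O_0) = \hat f^{-1}O_0$. Hence
\begin{equation*}
\|O_0\|_{\mathrm{shadow},\Lambda}^{2} = \hat f^{-2}\,\max_{\sigma}\mbb E_{U\sim{\sf Cl}(2^n)}\sum_{b}\bra{b}\Lambda(U\sigma U^{\dagger})\ket{b}\,\bra{b}U O_0 U^{\dagger}\ket{b}^{2},
\end{equation*}
and the assumption $\hat f\ge\delta f$ together with the closed form $f=(dF_Z-1)/(d^{2}-1)$ from Proposition~\ref{prop:gl_main} will be used at the very end to produce the prefactor $\delta^{-2}(F_Z-1/d)^{-2}$.

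Next I would rewrite the inner expectation as $\Tr\!\bigl[\mbb E_{U}\bigl(U^{\dagger}\Lambda^{*}(\ket{b}\!\bra{b})U\otimes (U^{\dagger}\ket{b}\!\bra{b}U)^{\otimes 2}\bigr)\,\sigma\otimes O_0\otimes O_0\bigr]$, where $\Lambda^{*}$ denotes the dual channel. Since ${\sf Cl}(2^n)$ is a unitary 3-design, the Clifford average agrees with the Haar twirl $\Phi_{\mathrm{Haar}}^{(3)}$, so I may apply the Weingarten formula (Eq.~\eqref{eq:Weingarten}) with the explicit matrix of Eq.~\eqref{eq:Wein3}. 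The calculation simplifies drastically because the two factors are highly structured: by Eq.~\eqref{eq:perm_tr}, $\Tr(O_0)=0$ kills four of the six entries of the permutation-trace vector for $\sigma\otimes O_0\otimes O_0$, while unitality of $\Lambda^{*}$ and the idempotency $\ket{b}\!\bra{b}^{2}=\ket{b}\!\bra{b}$ reduce the corresponding vector for $\Lambda^{*}(\ket{b}\!\bra{b})\otimes\ket{b}\!\bra{b}^{\otimes 2}$ to just two scalars, $1$ and $\Lambda_{b}\coleq\bra{b}\Lambda(\ket{b}\!\bra{b})\ket{b}$.

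Carrying out the $6\times 6$ Weingarten contraction with these sparse vectors produces a linear combination of $\Tr(O_0^{2})$ and $\Tr(\sigma O_0^{2})$ with $b$-dependence only through $\Lambda_{b}$; summing over $b$ with the identity $\sum_{b}\Lambda_{b}=dF_Z$ yields an expression proportional to $\Tr(O_0^{2})(d+1-2F_Z)+2\Tr(\sigma O_0^{2})(dF_Z-1)$ divided by $(d-1)(d+1)(d+2)$. The hypothesis $F_Z\ge 1/d$ makes the coefficient of $\Tr(\sigma O_0^{2})$ nonnegative, so the maximization over $\sigma$ uses $\Tr(\sigma O_0^{2})\le\|O_0\|_{\infty}^{2}\le\Tr(O_0^{2})$ to produce the clean bound $(1+2F_Z)\Tr(O_0^{2})/[(d+1)(d+2)]\le 3\Tr(O_0^{2})/[(d+1)(d+2)]$. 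Combining with $\hat f^{-2}\le\delta^{-2}(d^{2}-1)^{2}/(dF_Z-1)^{2}$, the dimensional prefactor collapses to $(d+1)(d-1)^{2}/[d^{2}(d+2)]\le 1$, and the claim follows.

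The main obstacle is conceptual rather than computational: the naive bound $\bra{b}\Lambda(U\sigma U^{\dagger})\ket{b}\le 1$ (treating the probability as trivial and then doing only a 2-design computation) loses an extra factor of $d$, turning $3\Tr(O_0^{2})$ into something of order $d\Tr(O_0^{2})$ and making the estimation procedure sample complexity scale with the Hilbert-space dimension. Keeping the probability factor inside the twirl and genuinely exploiting the 3-design property is what pays for the $(d+1)(d+2)$ denominator that cancels against $f^{-2}$; hence the bookkeeping of the $S_3$ Weingarten contraction, tedious as it is, cannot be avoided.
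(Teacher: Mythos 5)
Your proposal is correct and follows essentially the same route as the paper's proof: pull out $\widehat{\cM}^{-1}(O_0)=\hat f^{-1}O_0$, use the unitary 3-design property to replace the Clifford average by $\Phi^{(3)}_{\mr{Haar}}$, contract with the $S_3$ Weingarten matrix exploiting $\Tr(O_0)=0$, sum over $b$, maximize over $\sigma$ under $F_Z\ge\frac1d$, and finish with $f=(dF_Z-1)/(d^2-1)$ and $\hat f\ge\delta f$, arriving at exactly the paper's intermediate expression $\bigl[\Tr(O_0^2)(d+1-2F_Z)+2\Tr(\sigma O_0^2)(dF_Z-1)\bigr]/\bigl[(d^2-1)(d+2)\bigr]$. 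Only two harmless slips: tracelessness kills three (not four) of the six permutation traces, and the per-$b$ entry $\Tr(\Lambda^\dagger(\ketbra{b}{b}))=\bra b\Lambda(I)\ket b$ equals $1$ only for unital $\Lambda$ — but since the contraction is linear and only the sum over $b$ enters, trace preservation alone gives $\sum_b\bra b\Lambda(I)\ket b=d$, so your final bound is unaffected.
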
        
		\begin{proof}
		From the definition of the noisy shadow norm and using the Weingarten functions from Eq.~\eqref{eq:Weingarten} we have
		\begin{equation}
		\begin{aligned}
		\|O_0\|_{\mr{shadow},\Lambda}^2 &= \max_{\sigma\in\cD(2^n)}\mbb E_{U\sim {\sf Cl}(2^n)}\sum_{b\in\{0,1\}^n}\hat f^{-2}\Tr\left[(U\sigma U^\dagger\otimes U O_0 U^\dagger\otimes U O_0 U^\dagger)(\Lambda^\dagger(\ket b\bra b)\otimes\ket b\bra b\otimes\ket b\bra b)\right]\\
		&= \max_{\sigma\in\cD(2^n)}\sum_{b\in\{0,1\}^n}\hat f^{-2}\Tr\left[ \Phi^{(3)}_\mr{Haar} (\sigma\otimes O_0\otimes O_0)(\Lambda^\dagger(\ket b\bra b)\otimes\ket b\bra b\otimes\ket b\bra b)\right]\\
		&= \max_{\sigma\in\cD(2^n)}\sum_{b\in\{0,1\}^n}\hat f^{-2}\sum_{\pi,\xi\in S_3}c_{\pi,\xi}\Tr\left[W_\pi(\sigma\otimes O_0\otimes O_0)\right]\Tr\left[ W_\xi (\Lambda^\dagger(\ket b\bra b)\otimes\ket b\bra b\otimes\ket b\bra b)\right]\,,
		\end{aligned}
		\end{equation}
		where in the last equation we use the Weingarten function to expand the Haar intergral (see Eq.~\eqref{eq:Weingarten}). Now using Eq.~\eqref{eq:perm_tr} to compute the traces appearing above, we have
		\begin{equation}
		    \Tr\left(\vec W(\sigma\otimes O_0\otimes O_0)\right)= \begin{bmatrix}
		0,&0,&0,&\Tr(O_0^2),&\Tr(\sigma O_0^2),&\Tr(\sigma O_0^2)
		\end{bmatrix}.
		\end{equation}
        Recall that $F_{Z}(\Lambda)$ is the Z-basis average fidelity of $\Lambda$ as defined in Prop.~\ref{prop:gl_main}, and we denote it simply as $F_{Z}$ in the following. 
		Then we also have
		\begin{equation}
		\begin{aligned}
        &\sum_{b\in\{0,1\}^n}\Tr\left(\vec W(\Lambda^\dagger(\ket b\bra b)\otimes\ket b\bra b\otimes\ket b\bra b)\right)\\=& \sum_{b\in\{0,1\}^n}\left[\begin{matrix}
		\Tr(\Lambda^\dagger(\ket b\bra b)),& \bra b\Lambda^\dagger(\ket b\bra b)\ket b,&\bra b\Lambda^\dagger(\ket b\bra b)\ket b,&\Tr(\Lambda^\dagger(\ket b\bra b)),&\bra b\Lambda^\dagger(\ket b\bra b)\ket b,&\bra b\Lambda^\dagger(\ket b\bra b)\ket b
		\end{matrix}\right]\\=& ~d* \begin{bmatrix}
		1,&F_{Z}(\Lambda),&F_{Z}(\Lambda),&1,&F_{Z}(\Lambda),&F_{Z}(\Lambda)
		\end{bmatrix}.
		\end{aligned}
		\end{equation}
		Again, $\vec W$ is a vectorization of $S_3$ defined in Eq.~\eqref{eq:perm_vec}, just for the simplicity of notation. 
		Inserting the above two equations and the value of the Weingarten matrix from Eq.~\eqref{eq:Wein3},
		\begin{equation}
		\begin{aligned}
		\|O_0\|_{\mr{shadow},\Lambda}^2 
		&= \max_{\sigma\in\cD(2^n)} \hat f^{-2}\frac{\Tr(O_0^2)(d-2F_{Z}+1)+2\Tr(\sigma O_0^2)(dF_{Z}-1)}{(d+2)(d^2-1)}\\
		& \le \hat {f}^{-2}~ \frac{2dF_{Z}+d-2F_{Z}-1}{(d+2)(d^2-1)} \Tr(O_0^2)\\
		& = \frac{f^2}{\hat{f}^2} \left(\frac{d^2-1}{dF_{Z} -1}\right)^2\frac{2dF_{Z}+d-2F_{Z}-1}{(d+2)(d^2-1)} \Tr(O_0^2) \\
		&\le \frac{f^2}{\hat{f}^2} ~  \left(F_{Z}-\frac1d\right)^{-2}~3 \Tr(O_0^2)
		\end{aligned}
		\end{equation}
		where the first inequality is by the fact that $\Tr(\sigma O^2_0)\le\|O^2_0\|_\infty\le\Tr(O_0^2)$ and the assumption $F_{Z} \geq \frac{1}{d}$, and in the second equality we use the expression of $f$ from Proposition~\ref{prop:gl_main}. 
        \end{proof}

		Compared to Proposition~1 from~\cite{huang2020predicting} which states that $\|O_0\|^2_\mr{shadow}\le 3\Tr(O_0^2)$, we conclude the following. 
		As long as the noise channel $\Lambda$ has a Z-basis fidelity that is not too low and the noise calibration procedure is conducted with sufficiently many rounds, then the estimation procedure of our \textbf{\textrm{RShadow}} protocol using ${\sf Cl}(2^n)$ is as efficient as the noiseless standard quantum shadow estimation protocol~\cite{huang2020predicting} up to a small multiplicative factor. 
		That is to say, the expectation value of any observable $O$ that has small Hilbert-Schmidt norm can be efficiently estimated by \textbf{\textrm{RShadow}}.
		
		To complete the discussion, we give the following theorem as a rigorous version of Theorem~\ref{th:gl_all} in the main text.
		
		\begin{theorem}\label{th:gl_allr}
		For \textbf{\textrm{RShadow}} with ${\sf Cl}(2^n)$, given the noise channel satisfies $F_{Z}(\Lambda)\ge\frac1d$, if the number of calibration samples $R_C$ and the number of estimation samples $R_E$ satisfies
		\begin{equation}
		\begin{aligned}
		R_C & = 136\ln(2\delta_1^{-1})\cfrac{(1+{\varepsilon_1}^2)(1+\frac1d)^2}{{\varepsilon_1}^2( F_{Z}-\frac1d)^2},\\
		R_E &=	\frac{204}{\varepsilon_2^2}\ln(2M/\delta_2) (1+\varepsilon_1)^{2}(F_{Z}-\frac1d)^{-2},
		\end{aligned}
		\end{equation}
		respectively, then the protocol can estimate $M$ arbitrary linear functions $\Tr(O_1\rho),...\Tr(O_M\rho)$ such that $\Tr(O_i^2)\le 1$, up to accuracy $\varepsilon_1+\varepsilon_2$ with success probability at least $1-\delta_1-\delta_2$. 
	\end{theorem}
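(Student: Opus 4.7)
The plan is to decompose the total error of the estimator $\hat o$ into a \emph{systematic} part (the bias introduced by using $\widehat{\cM}$ instead of the true $\widetilde\cM$ in the post-processing) and a \emph{statistical} part (the fluctuation of the median-of-means estimator around its expectation), and to bound each by $\varepsilon_1$ and $\varepsilon_2$ respectively. Concretely, by the triangle inequality
$$
\bigl|\hat o_i - \Tr(O_i\rho)\bigr|\le \bigl|\hat o_i - \mbb E(\hat o_i)\bigr| + \bigl|\lbra{O_i}\widehat{\cM}^{-1}\widetilde{\cM}\lket\rho - \lbraket{O_i}{\rho}\bigr|,
$$
so it suffices to bound the two terms separately and then take a union bound over the $M$ observables and the two protocol stages.

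For the calibration error, I would invoke Theorem~\ref{th:gl_stat} directly with parameters $(\varepsilon_1,\delta_1)$: the stated choice of $R_C$ guarantees that, with probability at least $1-\delta_1$, the systematic bias obeys $|\lbra{O_i}\widehat{\cM}^{-1}\widetilde{\cM}\lket\rho - \lbraket{O_i}{\rho}|\le\varepsilon_1\|O_{i,0}\|_\infty\le\varepsilon_1\sqrt{\Tr(O_{i,0}^2)}\le\varepsilon_1$, uniformly in $i$. Along the way the calibration bound $|\hat f-f|\le\varepsilon_1|f|/(1+\varepsilon_1)$ yields $\hat f\ge f/(1+\varepsilon_1)$, i.e.\ we may take $\delta=(1+\varepsilon_1)^{-1}$ in Lemma~\ref{le:snorm_gl}. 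Conditioned on this calibration success event, Lemma~\ref{le:snorm} combined with Lemma~\ref{le:snorm_gl} gives the single-round variance bound
$$
\mr{Var}(\hat o_i^{(r)})\le \|O_{i,0}\|_{\mr{shadow},\Lambda}^2 \le (1+\varepsilon_1)^2\bigl(F_{Z}-\tfrac1d\bigr)^{-2}\,3\Tr(O_{i,0}^2)\le 3(1+\varepsilon_1)^2\bigl(F_{Z}-\tfrac1d\bigr)^{-2}.
$$

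For the statistical error, I would apply the median-of-means estimator from Lemma~\ref{le:median} with target accuracy $\varepsilon_2$ and per-observable failure probability $\delta_2/M$: take $K_2=2\ln(2M/\delta_2)$ and $N_2=34\,\mr{Var}(\hat o_i^{(r)})/\varepsilon_2^2$, so that $R_E=N_2 K_2$ matches the announced bound after substituting the variance estimate above. A union bound over the $M$ observables gives $\Pr[\exists i:|\hat o_i-\mbb E\hat o_i|>\varepsilon_2]\le\delta_2$. Putting everything together, with probability at least $1-\delta_1-\delta_2$ both stages succeed, and then the triangle inequality delivers $|\hat o_i-\Tr(O_i\rho)|\le\varepsilon_1+\varepsilon_2$ simultaneously for all $i$.

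There is no real obstacle here beyond careful bookkeeping; the nontrivial analytic work has already been done in Theorem~\ref{th:gl_stat} (calibration) and Lemmas~\ref{le:snorm}--\ref{le:snorm_gl} (variance bound). The one point to handle with care is that Lemma~\ref{le:snorm_gl} requires an a priori lower bound $\hat f\ge\delta f$, which is precisely the event produced by the calibration stage, so the two stages must be chained rather than analyzed independently; this is why the factor $(1+\varepsilon_1)^2$ enters $R_E$.
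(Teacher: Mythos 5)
Your proposal is correct and follows essentially the same route as the paper's own proof: calibration accuracy and the bound $\hat f\ge(1+\varepsilon_1)^{-1}f$ from Theorem~\ref{th:gl_stat}, the chained variance bound via Lemma~\ref{le:snorm} and Lemma~\ref{le:snorm_gl}, Lemma~\ref{le:median} with a union bound over the $M$ observables, and a final triangle inequality. Your explicit justification $\|O_{i,0}\|_\infty\le\sqrt{\Tr(O_{i,0}^2)}\le 1$ and your remark that the two stages must be chained (the estimation variance bound is conditioned on the calibration success event) are exactly the points the paper uses, stated slightly more carefully.
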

		
	\begin{proof}
    First, according to Theorem~\ref{th:gl_stat}, for the given number of samples $R_C$ one have
	\begin{equation}\label{eq:eps1_gl}
	\left| \mbb E(\hat o_i^{(r)}) - \Tr(O_i\rho) \right| \le \varepsilon_1.
	\end{equation}
	Meanwhile, from the proof of Theorem~\ref{th:gl_stat} (see Eq.~\eqref{eq:gamma_epsilon}), one also have
	\begin{equation}
	|\hat f^{-1}f-1|\le \varepsilon_1 \Rightarrow \hat f\ge (1+\varepsilon_1)^{-1} f.
	\end{equation}
	Both of the above equations hold simultaneously with probability at least $1-\delta_1$. 
	
	\smallskip
	
	\noindent Now, by Lemma~\ref{le:snorm} and Lemma~\ref{le:snorm_gl}, the single-round estimators in the estimation procedure satisfy:
	\begin{equation}
	\mr{Var}(\hat o_i^{(r)}) \le 3(1+\varepsilon_1)^{2}(F_{Z}-\frac1d)^{-2}.
	\end{equation}
	So we set the median of mean estimators $\hat o_i$ of the estimation procedure with the following parameters:
	\begin{equation}
	N=\frac{34}{\varepsilon_2^2}\cdot 3(1+\varepsilon_1)^{2}(F_{Z}-\frac1d)^{-2},\quad K=2\ln(2M/\delta_2).
	\end{equation}
	Then Lemma~\ref{le:median} combined with the union bound gives that the following holds for all $i$ with probability at least $1-\delta_2$:
	\begin{equation}\label{eq:eps2_gl}
	\left| \hat o_i - \mbb E(\hat o_i^{(r)}) \right| \le \varepsilon_2,
	\end{equation}
    Combining Eq.~\eqref{eq:eps1_gl} and Eq.~\eqref{eq:eps2_gl} using the triangular inequality gives
	\begin{equation}
	|\hat o_i - \Tr(O_i\rho)|\le \varepsilon_1+\varepsilon_2,
	\end{equation}
	which holds with probability at least $1-\delta_1-\delta_2$.
	This completes the proof.
	\end{proof}

\section{Sample Complexity of \textbf{\textrm{RShadow}} with Local Clifford Group} \label{sec:app_lc}
		
The result in App.~\ref{sec:app_gl} is based on the $n$-qubit Clifford group, which is challenging to implement in experiment. In this section, we analyze the protocol using $n$-qubit local Clifford group, denoted as ${\sf Cl}_2^{\otimes n}$, which is the $n$-fold direct product of the single-qubit Clifford group. Such unitaries are all single-qubit operations, thus much easier to implement in the experiment.
		
\subsection{Calibration Procedure: Local}
		
Being twirled by the local Clifford group, the channel $\widetilde{\cM}$ becomes a Pauli channel that is symmetric among the $X, Y, Z$ index, whose Pauli-Liouville representation is~\cite{gambetta2012benchmarking}
		\begin{equation}
		\widetilde{\cM} = \mathop{\mathbb{E}}_{U \sim {\sf Cl}_2^{\otimes n}} \cU^\dagger M_z\Lambda \cU = \sum_{z\in\{0,1\}^n}f_z \Pi_z, 
		\end{equation}
		where $\Pi_z = \bigotimes_{i=1}^n \Pi_{z_i}$, 
		\begin{equation}
		\Pi_{z_i} = 
		\begin{cases}
		&\lketbra{\sigma_0}{\sigma_0},\quad z_i=0, \\ &I-\lketbra{\sigma_0}{\sigma_0},\quad z_i=1,
		\end{cases}
		\end{equation}
		and $f_z$ is the Pauli fidelity. In the noiseless case, one can obtain $f_z = 3^{-|z|}$ where $|z|$ is the number of $1$ in $z$.
		
		\medskip
		
		\noindent\textbf{Notation:} For any string $m\in\{0,1\}^n$ we define $\lket m$ to be the Liouville representation of the computational basis state $\ket m$, while $\lket{\sigma_m}$ stands for the normalized Pauli operator corresponding to $P_m\coleq\bigotimes_{i=1}^{n}P_Z^{m_i}$. On the other hand, the notation of $z$ in this section consistently stands for an $n$-bit string and should not be confused with the Pauli-Z index. 
		
		\medskip
		
		The \textbf{\textrm{RShadow}} protocol using local Clifford group can be written as follows.
		
		\begin{protocol}\label{proto:local} [\textbf{\textrm{RShadow}} with $Cl_{2}^{\otimes n}$] 
			\begin{enumerate}
				\item Prepare $\ket{\bvec 0}\equiv\ket 0^{\otimes n}$. Sample $U$ uniformly form ${\sf Cl}_2^{\otimes n}$ and apply it to $\ket {\bvec 0}$.
				\item Measure the above state in the computational basis. Denote the outcome state vector as $\ket b$.
				\item Calculate the single-round Pauli fidelity estimator 
				$\hat{f}^{(r)}_z = \lbra{b}{\cU}\lket{P_z}$
				for all $z\in\{0,1\}^n$.
				
				\item Repeat step 1--3 for $R=NK$ rounds. Then the final estimation of $f_z$ is given by a median of means estimator $\hat f_z$ constructed from the single round estimators $\{\hat f_z^{(r)}\}_{r=1}^R$ with parameter $N,~K$ (see Eq.~\eqref{eq:meanmedian_estimator}).
				
				\item After the above steps, apply the standard shadow estimation protocol of~\cite{huang2020predicting} on $\rho$, with the inverse channel $\widetilde{\mc M}^{-1}$ replaced by 
				\begin{equation}\label{eq:local_M_expression}
				\widehat{\cM}^{-1} = \sum_{z\in\{0,1\}^n}\hat{f}^{-1}_z \Pi_z, 
				\end{equation}
			\end{enumerate}
		\end{protocol}

		Of course, it is unaffordable in classical computational resource to compute all $\hat f_z^{(r)}$ in a single round. In practice, we only need to compute those $f_z$ of interest. For example, if we only want to predict $k$-local properties, then only $\hat f_z^{(r)}$ such that $|z|\le k$ need to be computed. If we are only interested in nearby qubits, then the number of necessary $\hat f_z^{(r)}$ can be further reduced.
		
		\medskip
		
		Now we show that the single-round estimators $\{\hat{f}^{(r)}_z\}$ are unbiased and  the variance of them are bounded.
		
		
		\begin{proposition}\label{prop:lc_var}
			The single-round Pauli fidelity estimator $\hat{f}^{(r)}_z$ satisfies
			\begin{equation}
			\mathbb{E}(\hat f^{(r)}_z) =f_z =3^{-|z|}~\Gamma_\Lambda(z),\qquad \mathrm{Var}(\hat f^{(r)}_z) \le 3^{-|z|}.\label{eq:lc_mean_estimator}
			\end{equation} 
			where $\Gamma_\Lambda(z)\coleq \cfrac1{2^n}\sum_{x,b\in\{0,1\}^n}(-1)^{z\cdot(x\oplus b)}\lbra b\Lambda\lket x$.
		\end{proposition}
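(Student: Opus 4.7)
The plan for the mean is to absorb the measurement sum into the twirled channel $\widetilde\cM$. Since $\hat f_z^{(r)} = \lbra{b}\cU\lket{P_z}$ is real (both $\ket b\bra b$ and $P_z$ are Hermitian), it equals $\lbra{P_z}\cU^\dagger\lket{b}$. Multiplying by the Born probability $\lbra{b}\Lambda\cU\lket{\bvec 0}$, summing over $b$, and using $\sum_b\lketbra{b}{b} = \cM_Z$ yields
\begin{equation*}
\mbb E[\hat f_z^{(r)}] \;=\; \mbb E_U\lbra{P_z}\cU^\dagger\cM_Z\Lambda\cU\lket{\bvec 0} \;=\; \lbra{P_z}\widetilde\cM\lket{\bvec 0}.
\end{equation*}
Invoking the decomposition $\widetilde\cM = \sum_{z'}f_{z'}\Pi_{z'}$ and noting that $P_z$ is a pure-$Z$ Pauli whose support indicator is exactly $z$ (so $\Pi_{z'}\lket{P_z} = \delta_{z'z}\lket{P_z}$), orthogonality of the projectors collapses the sum: $\lbra{P_z}\widetilde\cM = f_z\lbra{P_z}$. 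Together with $\lbra{P_z}\lket{\bvec 0} = \bra{\bvec 0}P_z\ket{\bvec 0} = 1$, this gives $\mbb E[\hat f_z^{(r)}] = f_z$.

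To obtain the closed form $f_z = 3^{-|z|}\Gamma_\Lambda(z)$, I will use $f_z = \Tr(\Pi_z\widetilde\cM)/\Tr(\Pi_z)$; since $\Pi_z$ commutes with every $\cU\in{\sf Cl}_2^{\otimes n}$ (it is a representation-invariant projector), the numerator simplifies to $\Tr(\Pi_z\cM_Z\Lambda)$ and the denominator is $3^{|z|}$. Expanding $\Pi_z = \sum_{\alpha:\,\mr{supp}(\alpha)=\mr{supp}(z)}\lketbra{\sigma_\alpha}{\sigma_\alpha}$ and inserting $\cM_Z = \sum_b\lketbra{b}{b}$ produces factors $\lbra{\sigma_\alpha}\lket{b} = \bra{b}\sigma_\alpha\ket{b}$ that vanish unless $\sigma_\alpha$ is diagonal in the $Z$-basis; combined with the support constraint this forces $\alpha = z$ (the pure-$Z$ Pauli of weight $|z|$). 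Substituting $\lket{\sigma_z} = \tfrac{1}{\sqrt{2^n}}\sum_x(-1)^{z\cdot x}\lket{x}$ and $\bra{b}\sigma_z\ket{b} = (-1)^{b\cdot z}/\sqrt{2^n}$, the resulting double sum assembles into precisely $3^{-|z|}\Gamma_\Lambda(z)$.

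For the variance I will use $\mr{Var}(\hat f_z^{(r)})\le \mbb E[(\hat f_z^{(r)})^2]$ and the factorization $\hat f_z^{(r)} = \prod_i\bra{b_i}U_iZ^{z_i}U_i^\dagger\ket{b_i}$, which holds because every ingredient in the Clifford stage (state preparation, gates, measurements, and $P_z$) is a tensor product across qubits. On any qubit with $z_i = 1$ the single-qubit Clifford sends $Z$ to a signed Pauli, so the square $(\bra{b_i}U_iZU_i^\dagger\ket{b_i})^2$ equals $1$ precisely when $U_iZU_i^\dagger = \pm Z$ (which occurs for $8$ of the $24$ elements of ${\sf Cl}_2$, probability $1/3$) and vanishes otherwise --- crucially independent of $b_i$. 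Hence $(\hat f_z^{(r)})^2$ depends only on $U$, and trace-preservation of $\Lambda$ gives $\sum_b\lbra{b}\Lambda\cU\lket{\bvec 0} = 1$, so the Born distribution drops out entirely. The remaining expectation factorizes across the independent $U_i$'s and yields $\prod_{i:\,z_i=1}(1/3) = 3^{-|z|}$.

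The argument is largely mechanical once the two key moves are in place: rewriting the mean as $\lbra{P_z}\widetilde\cM\lket{\bvec 0}$ so that the irrep structure of $\widetilde\cM$ collapses everything, and using trace preservation of $\Lambda$ to decouple the Born randomness from the squared estimator. I do not expect a serious obstacle; the most delicate step is tracking which Pauli indices survive the $\cM_Z$-restriction when unfolding $f_z$, and this follows directly from the commuting diagonal structure of $\Pi_z$ with $\cM_Z$ on pure-$Z$ strings.
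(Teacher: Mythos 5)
Your proof is correct, and while your unbiasedness step coincides with the paper's (both rewrite $\mbb E[\hat f_z^{(r)}]$ as $\lbra{P_z}\widetilde\cM\lket{\bvec 0}$ and collapse it using the block decomposition of $\widetilde\cM$ together with $\lbraket{P_z}{\bvec 0}=1$), your derivations of the closed form and of the variance take genuinely different routes from the paper. For the closed form, the paper evaluates single-qubit Clifford twirls explicitly, e.g.\ $\mathbb{E}_{U\sim{\sf Cl}_2}\left(U\ketbra{0}{0}U^\dagger\otimes UP_ZU^\dagger\right)=\frac23 P_{\text{sym}^2}-\frac12 I$, derives per-qubit trace identities, and assembles the $n$-qubit sign pattern $(-1)^{z\cdot(x\oplus b)}$ qubit by qubit; you instead invoke the Schur-type identity $f_z=\Tr(\Pi_z\widetilde\cM)/\Tr(\Pi_z)$ together with the invariance $\cU\Pi_z\cU^\dagger=\Pi_z$ to reduce everything to $\Tr(\Pi_z\cM_Z\Lambda)/3^{|z|}$, after which the $\cM_Z$-restriction annihilates every Pauli string in $\Pi_z$ except the pure-$Z$ string $P_z$ itself, and the Fourier expansion of $\sigma_z$ produces $\Gamma_\Lambda(z)$ directly; this is valid because the $2^n$ blocks are pairwise inequivalent irreps of the product group, exactly the multiplicity-free structure the paper already asserts. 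For the variance, the paper computes the three-fold twirl $\mathbb{E}_U\left(U\ketbra{0}{0}U^\dagger\otimes UP_ZU^\dagger\otimes UP_ZU^\dagger\right)$ via symmetric-subspace projectors and Weingarten data; you observe that $(\hat f_z^{(r)})^2=\prod_{i:z_i=1}\bigl(\bra{b_i}U_iZU_i^\dagger\ket{b_i}\bigr)^2$ is a $\{0,1\}$-valued function of $U$ alone (equal to $1$ iff each relevant $U_i$ sends $Z\mapsto\pm Z$, which by orbit-stabilizer happens for $8$ of the $24$ elements, probability $1/3$), so trace preservation of $\Lambda$ sums the Born distribution out and independence of the $U_i$ gives $\mbb E[(\hat f_z^{(r)^2})]=3^{-|z|}$ exactly, the same value the paper obtains. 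Your argument is more elementary and makes transparent why the second moment is completely noise-independent; the paper's heavier machinery is not wasted, however, since the same twirl formulas and Weingarten calculus are reused later to bound the shadow norm in the estimation procedure, so it is developed once and amortized.
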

		
				\begin{proof}[Proof of Proposition~\ref{prop:lc_var}]
			
			To begin with, we show that $\hat f^{(r)}_z$ is an unbiased estimator of $f_z$. 
			From the definition of $\hat f^{(r)}_z$ in Protocol~\ref{proto:local} above, we have that the expectation value over the experiments is given by
			\begin{equation}
			\begin{aligned}
			\mathbb{E}(\hat f^{(r)}_z)
			&= \mbb E_{U\sim {\sf Cl}_2^{\otimes n}}\sum_b \lbra{P_z}{\cU}^\dagger\lketbra{b}{b}\Lambda {\cU} \lket {\bvec 0} \\ 
			&= \lbra{P_z}\tilde M \lket {\bvec 0} \\
			&= f_z\lbraket{P_z}{\bvec 0}\\
			&= f_z.
			\end{aligned}
			\end{equation}
			
			\medskip
			\noindent To derive the expression for $\hat f_z$ that depends on the noise channel $\Lambda$, we can alternatively expand the expectation as follows,
			\begin{equation}
			\begin{aligned}
			\mbb E(\hat f^{(r)}_z) &= \mbb E_{U\sim {\sf Cl}_2^{\otimes n}} \sum_b  \bra b \Lambda\left( U\ketbra{\bvec 0}{\bvec 0}U^\dagger  \right)\ket b\tr\left[ U^\dagger\ket b\bra b UP_z\right]\\
			&= \sum_b \tr\left[\mbb E_{U\sim {\sf Cl}_2^{\otimes n}} \left(U\ket {\bvec 0}\bra {\bvec 0} U^\dagger\otimes UP_z U^\dagger\right)\left(\Lambda^\dagger(\ket b\bra b)\otimes\ket b\bra b\right)   \right].
			\end{aligned}
			\end{equation}
			To evaluate this expression, we first consider the single-qubit case. 
			By direct calculation we obtain
			\begin{equation}
			\begin{aligned}
			\mbb E_{U\sim {\sf Cl}_2} \left(U\ket 0\bra 0 U^\dagger \otimes UP_IU^\dagger\right) &= \cfrac12 I,\\
			\mbb E_{U\sim {\sf Cl}_2} \left(U\ket 0\bra 0 U^\dagger \otimes UP_ZU^\dagger\right) &= \cfrac23 P_{\text{sym}^2} - \frac12 I.
			\end{aligned}
			\end{equation}
			Hence, for any $X\in \mathrm{Herm}(2)$ and $b\in\{0,1\}$, by Lemma~\ref{le:sym},
			\begin{equation}
			\begin{aligned}
			\tr\left[ \mbb E_{U\sim {\sf Cl}_2} \left(U\ket 0\bra 0 U^\dagger \otimes UP_IU^\dagger\right)\left(X\otimes\ket b\bra b\right)\right] &= \frac12(\bra b X\ket b+\bra {b\oplus 1}X\ket {b\oplus 1})\\
			\tr\left[ \mbb E_{U\sim {\sf Cl}_2} \left(U\ket 0\bra 0 U^\dagger \otimes UP_ZU^\dagger\right)\left(X\otimes\ket b\bra b\right)\right] &= \frac16(\bra b X\ket b - \bra {b\oplus 1}X\ket {b\oplus 1}).
			\end{aligned}
			\end{equation}
			Applying this to the $n$-qubit case, one can then verify that
			\begin{equation}
			\begin{aligned}
			\mbb E(\hat f^{(r)}_z) = \frac1{3^{|z|}}\frac{1}{2^n}\sum_{x,b}(-1)^{z\cdot(x\oplus b)} \bra x \Lambda^\dagger(\ketbra b b)\ket x = \frac1{3^{|z|}}\Gamma_\Lambda(z).
			\end{aligned}
			\end{equation}

			%

			\medskip
			
			\noindent To compute the variance, we compute
			\begin{equation}\label{eq:local_2ndmoment}
			\begin{aligned}
			\mbb E(\hat f^{(r)^2}_z) &= \mbb E_{U\sim {\sf Cl}_2^{\otimes n}} \sum_b \bra b \Lambda\left( U\ketbra{\bvec 0}{\bvec 0}U^\dagger  \right)\ket b 
			\tr\left[U^\dagger\ket b\bra b UP_z\right]^2\\
			&= \sum_b\tr\left[\mbb E_{U\sim {\sf Cl}_2^{\otimes n}} \left(U\ket {\bvec 0}\bra {\bvec 0} U^\dagger\otimes UP_z U^\dagger \otimes UP_z U^\dagger\right)\left(\Lambda^\dagger(\ket b\bra b)\otimes\ket b\bra b\otimes\ket b\bra b\right) \right].
			\end{aligned}
			\end{equation}
			Again, first consider the single-qubit case. One can verify that
			\begin{equation}
			\begin{aligned}
			\mbb E_{U\sim {\sf Cl}_2} \left(U\ket 0\bra 0 U^\dagger \otimes UP_I U^\dagger \otimes UP_I U^\dagger\right) &= \frac12 I,\\
			\mbb E_{U\sim {\sf Cl}_2} \left(U\ket 0\bra 0 U^\dagger \otimes UP_Z U^\dagger \otimes UP_Z U^\dagger\right) &= \frac12 P_{\text{sym}^3} + \frac13\left( P_{\text{sym}^2}^{(2,3)}-P_{\text{sym}^2}^{(1,2)}-P_{\text{sym}^2}^{(1,3)} \right).
			\end{aligned}
			\end{equation}
			Hence, for any $X\in \mathrm{Herm}(2)$ and $b\in\{0,1\}$, by Lemma~\ref{le:sym},
			\begin{equation}
			\begin{aligned}
			\Tr\left[ \mbb E_{U\sim {\sf Cl}_2} \left(U\ket 0\bra 0 U^\dagger \otimes UP_I U^\dagger \otimes UP_I U^\dagger\right)(X\otimes\ketbra b b\otimes\ketbra b b) \right] = \frac12\tr(X),\\
			\Tr\left[ \mbb E_{U\sim {\sf Cl}_2} \left(U\ket 0\bra 0 U^\dagger \otimes UP_Z U^\dagger \otimes UP_Z U^\dagger\right)(X\otimes\ketbra b b\otimes\ketbra b b) \right] = \frac16\tr(X).\\
			\end{aligned}
			\end{equation}
			One can also verify these equations using the Weingarten matrix. Applying to the $n$-qubit case, one can verify that
			\begin{equation}
			\begin{aligned}
			\mbb E(\hat f^{(r)^2}_z) &= \frac1{2^n}\frac1{3^{|z|}}\sum_b\tr(\Lambda^\dagger(\ketbra b b))\\
			&=\frac1{2^n}\frac1{3^{|z|}}\sum_{x,b}\bra b\Lambda(\ketbra x x)\ket b\\
			&= \frac1{3^{|z|}}.
			\end{aligned}
			\end{equation}
			Since $\mbb E(\hat f_z^{(r)^2})$ serves as an upper bound of $\mathrm{Var}(\hat f_z^{(r)})$, this completes the proof of Proposition~\ref{prop:lc_var}.
		\end{proof}
		
		\medskip
	
		Based on Proposition~\ref{prop:lc_var}, we can now bound the sample complexity of Protocol~\ref{proto:local}. Firstly, we set the median of mean estimator $\hat{f}_z$ according to Lemma~\ref{le:median} as
		\begin{align}
		\bar{f}_{z}^{(t)} &\coleq \cfrac1N\sum_{r=(t-1)N+1}^{tN}\hat{f}_z^{(r)}, \quad t=1,2,...,K,\label{eq:lc_mean} \\
		\hat{f}_z&\coleq\textrm{median}\left\{\bar{f}_{z}^{(1)},\bar{f}_{z}^{(2)},...,\bar{f}_{z}^{(K)}\right\},\label{eq:lc_median}
		\end{align}
		with $N$ and $K$ to be specified. The following theorem gives the performance of Protocol~\ref{proto:local}. 
		
		\begin{theorem}\label{th:lc_stat}
			Given $\varepsilon,~\delta>0$, the number of qubits $n\ge 2$, and an integer $k\le n$,
			the following number of samples for the calibration procedure
			\begin{equation}
			R = \mc O\left( \cfrac{3^{k}(k\ln n+\ln\delta^{-1})}{\varepsilon^{2}\min_{|z|\le k}\Gamma_\Lambda^{2}(z)} \right)
			\end{equation}
			is enough for the subsequent shadow estimation procedure to estimate any $k$-local observable for any state to the following precision
			\begin{equation}
			\left| \lbra O \widehat{\cM}^{-1}\widetilde{\cM} \lket\rho-\lbraket{O}{\rho} \right| \le \varepsilon 2^k \|O\|_\infty,\quad\forall~\text{k-local}~O\in\mr{Herm}(2^n),~\forall\rho\in\cD(2^n).
			\end{equation}
			with a success probability at least $1-\delta$. 
		\end{theorem}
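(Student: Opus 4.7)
My plan is to split the proof into a calibration step that uniformly controls the multiplicative error of every Pauli fidelity $\hat f_z$ for all $z$ with $|z|\le k$, followed by a bias calculation that converts these errors into the claimed $\varepsilon\cdot 2^k\|O\|_\infty$ bound. For the calibration, I will combine Proposition~\ref{prop:lc_var} ($\mbb E(\hat f_z^{(r)})=f_z=3^{-|z|}\Gamma_\Lambda(z)$, $\mr{Var}(\hat f_z^{(r)})\le 3^{-|z|}$) with Lemma~\ref{le:median}: to enforce $|f_z/\hat f_z-1|\le\varepsilon/2$ for every $|z|\le k$, it suffices (by the same algebra as Eq.~\eqref{eq:gamma_epsilon}) that $|\hat f_z-f_z|\le\gamma_z:=\varepsilon|f_z|/(2+\varepsilon)$. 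This needs $N_z=\mc O(3^{-|z|}/\gamma_z^2)=\mc O(3^{|z|}/(\varepsilon^2\Gamma_\Lambda(z)^2))$ samples per mean and $K=\mc O(\ln(1/\delta'))$ medians; a union bound over the at most $n^k$ relevant $z$ (taking $\delta'=\delta/n^k$) together with taking the worst case over $|z|\le k$ then yields the stated total budget $R=\tilde{\mc O}\bigl(3^k(k\ln n+\ln\delta^{-1})/(\varepsilon^2\min_{|z|\le k}\Gamma_\Lambda(z)^2)\bigr)$.

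Next I convert calibration accuracy into a bias bound. Because $\widetilde\cM=\sum_z f_z\Pi_z$ and $\widehat\cM^{-1}=\sum_z\hat f_z^{-1}\Pi_z$ are simultaneously diagonal in the Pauli-Liouville basis,
\[
\lbra O\widehat\cM^{-1}\widetilde\cM\lket\rho-\lbraket{O}{\rho}=\sum_{z\in\{0,1\}^n}\bigl(\tfrac{f_z}{\hat f_z}-1\bigr)\lbra O\Pi_z\lket\rho.
\]
For a $k$-local $O=\tilde O_S\otimes I_{[n]\setminus S}$, the single-qubit identity $\Pi_1(I)=0$ forces $\Pi_z(O)=0$ whenever $z_i=1$ for some $i\notin S$, so only the $2^k$ indices $z\subseteq S$ contribute, and all of them satisfy $|z|\le k$---exactly the range controlled by the calibration. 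Writing $\rho_S:=\Tr_{[n]\setminus S}\rho$ and using $\Pi_z(O)=\Pi_{z_S}(\tilde O_S)\otimes I_{[n]\setminus S}$ for $z\subseteq S$, I get $\lbra O\Pi_z\lket\rho=\Tr_S(\Pi_{z_S}(\tilde O_S)\rho_S)$. Cauchy-Schwarz in the Hilbert-Schmidt inner product gives $|\lbra O\Pi_z\lket\rho|\le\|\Pi_{z_S}(\tilde O_S)\|_F\|\rho_S\|_F\le\|\Pi_{z_S}(\tilde O_S)\|_F$, and a second Cauchy-Schwarz over the $2^k$ indices combined with Parseval's identity for the orthogonal partition $\{\Pi_{z_S}\}_{z\subseteq S}$ of the $k$-qubit Pauli basis yields $\sum_{z\subseteq S}\|\Pi_{z_S}(\tilde O_S)\|_F\le\sqrt{2^k}\,\|\tilde O_S\|_F\le 2^k\|\tilde O_S\|_\infty=2^k\|O\|_\infty$, where the last step is the elementary bound $\|A\|_F\le\sqrt{2^k}\|A\|_\infty$ for $k$-qubit operators. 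Plugging in the calibration bound gives $|\lbra O\widehat\cM^{-1}\widetilde\cM\lket\rho-\lbraket{O}{\rho}|\le(\varepsilon/2)\cdot 2^k\|O\|_\infty$, and redefining constants yields the theorem.

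The main obstacle I anticipate is bounding $\sum_{z\subseteq S}|\lbra O\Pi_z\lket\rho|$ by a clean multiple of $\|O\|_\infty$. A naive attempt keeping everything in operator norm via the single-qubit bound $\|\Pi_{z_i}(\cdot)\|_{\infty\to\infty}\le 1$ does not compose through tensor products because $\Pi_{z_i=1}$ is not completely positive; in fact small examples (e.g.\ $\tilde O_S=\mathrm{CNOT}$) show that $\sum_z\|\Pi_z(\tilde O_S)\|_\infty$ can strictly exceed $\|\tilde O_S\|_\infty$. The Cauchy-Schwarz/Parseval route sidesteps this CP obstruction by paying $\sqrt{2^k}$ to sum over the Pauli-weight decomposition and another $\sqrt{2^k}$ to pass from Frobenius back to operator norm, producing exactly the advertised $2^k$ factor.
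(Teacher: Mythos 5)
Your proposal is correct and follows essentially the same route as the paper's proof: the identical median-of-means calibration analysis (Proposition~\ref{prop:lc_var} plus Lemma~\ref{le:median}, with a union bound over the at most $n^k$ strings of weight at most $k$), and the same bias bound $\max_{|z|\le k}\left|\hat f_z^{-1}f_z-1\right|\cdot 2^k\|O\|_\infty$ obtained from the Pauli-transfer expansion, the locality restriction to $|z|\le k$, and Cauchy--Schwarz followed by the Frobenius-to-operator-norm conversion. The only difference is bookkeeping in that last step: the paper applies Cauchy--Schwarz to the $4^k$ individual Pauli coefficients of $\tilde O$ (together with $|\lbraket{\sigma_a}{\rho}|\le 2^{-n/2}$), whereas you group the sum into the $2^k$ projector blocks $\Pi_z$ and use Hilbert--Schmidt Cauchy--Schwarz against $\rho_S$ plus Parseval, so the same two factors of $\sqrt{2^k}$ appear in either route.
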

		
		
		Here, An operator $O$ is called $k$-local if it only non-trivially acts on a $k$-qubit subspace, \textit{i.e.} $O = \tilde O_{S} \otimes I_{[n]\backslash S}$ for some index set $S\subset [n]$ and $|S| = k$.

		\begin{proof}
			We first notice that
			\begin{equation}\label{eq:derivation_hilbert}
			\begin{aligned}
			\left| \lbra O \widehat{\cM}^{-1}\widetilde{\cM} \lket\rho-\lbraket{O}{\rho} \right| &= \left|\sum_{a\in\mbb Z^{2n}_2}(\hat f_{z(a)}^{-1}f_{z(a)}-1)\lbraket{O}{\sigma_a}\lbraket{\sigma_a}{\rho}\right|\\
			&\le \max_{|z|\le k}\left|\hat f^{-1}_zf_z-1\right|\cdot \sum_{a\in\mbb Z_2^{2n}}|\lbraket{O}{\sigma_a}|\cdot|\lbraket{\sigma_a}{\rho}|\\
			&\le \max_{|z|\le k}\left|\hat f^{-1}_zf_z-1\right|\cdot \sum_{a\in\mbb Z_2^{2n}}\frac{1}{2^n}|\lbraket{O}{P_a}|
			\end{aligned}
			\end{equation}
			where the first equality is by expanding the Pauli transfer basis and we define the mapping $z$ as
			\begin{equation}
			z: \mbb Z_2^{2n}\to \{0,1\}^n,~z(p)_i =\left\{\begin{aligned}
				&0,\quad (P_p)_i=I,\\
				&1,\quad (P_p)_i\ne I,
				\end{aligned} \right.
			\end{equation}
			and the first inequality uses the fact that $O$ is $k$-local. Now we bound the second factor of the above equation. Without loss of generality, suppose $O$ acts non-trivially on the first $k$ qubits: 
			$O=\tilde O \otimes I_{2^{n-k}}$, and that $\tilde O$ can be decomposed as
			\begin{equation}
			\tilde O = \sum_{\tilde a\in \mbb Z_2^{2k}} \alpha_{\tilde a} P_{\tilde a}.
			\end{equation}
			Then we naturally have
			\begin{equation}
			O = \tilde O \otimes I_{2^{n-k}} = \sum_{\tilde a\in \mbb Z_2^{2k}} \alpha_{\tilde a} P_{\tilde a}\otimes P_I^{\otimes(n-k)}.
			\end{equation}
			So,
			\begin{equation}
			\sum_{a\in\mbb Z_2^{2n}}\frac{1}{2^n}|\lbraket{O}{P_a}| = \sum_{\tilde a\in\mbb Z_2^{2k}} |\alpha_{\tilde a}| \le \sqrt{4^k}\sqrt{\sum_{\tilde a\in\mbb Z_2^{2k}}{\alpha}^2_{\tilde a}} = 2^k \sqrt{\frac{\Tr(\tilde O^2)}{2^k}} \le 2^k \|\tilde O\|_\infty = 2^k \|O\|_\infty,
			\end{equation}
			where the first inequality is by Cauchy-Schwarz inequality. Combining the above results, we have
			\begin{equation}\label{eq:c23}
			\left| \lbra O \widehat{\cM}^{-1}\widetilde{\cM} \lket\rho-\lbraket{O}{\rho} \right| \le \max_{|z|\le k}\left|\hat f^{-1}_zf_z-1\right|\cdot 2^k \|O\|_\infty 
			\end{equation}
			
%
%
			
			\medskip
			
			\noindent For any $z\in\{0,1\}^n$, suppose $|\tilde{f_z}-f_z|\le \gamma_z$, and then we have
			\begin{equation}
			\left| 1- \hat{f}_z^{-1}f_z \right|\le \cfrac{|\hat f_z - f_z|}{|\hat f_z|} \le \cfrac{\gamma_z}{|f_z|-\gamma_z}.
			\end{equation}
			By setting $\gamma_z = \cfrac{\varepsilon}{1+\varepsilon}|f_z|$, the above equation is upper bounded by $\varepsilon$.
			Therefore, if we set $$N = 34\mathrm{Var}(\hat f_z)/\gamma_z^2,\quad K = 2\ln(2\delta^{-1})$$ for the median of mean estimator in Eq.~\eqref{eq:lc_mean} and Eq.~\eqref{eq:lc_median}, by Lemma~\ref{le:median} we have $|1-\hat f_z^{-1}f_z|\le \varepsilon$ with a success probability at least $1-\delta$. Now we want all $z\in\{0,1\}^n$ such that $|z|\le k$ to statisfy this inequality. The number of such strings is no larger than $n^k$, so we set
			\begin{align}
			N &= \max_{|z|\le k} 34\mathrm{Var}(\hat f_z)/\gamma_z^2 \le 34\cdot 3^{k}\cfrac{(1+\varepsilon)^2}{\varepsilon^2}\max_{|z|\le k}\Gamma_\Lambda^{-2}(z), \\
			K &= 2\ln(2(\delta/n^k)^{-1}),
			\end{align}
			and apply the union bound. Now we have $|1-\hat f_z^{-1}f_z|\le\varepsilon$ for all $|z|\le k$ with probability at least $1-\delta$. Our final upper bound of the sample complexity is
			\begin{equation}
			R=NK\le 68\cdot 3^{k}\cfrac{(1+\varepsilon)^2}{\varepsilon^2}\left(k\ln n+\ln2\delta^{-1}\right)\max_{|z|\le k}\Gamma_\Lambda^{-2}(z),
			\end{equation}
			which completes the proof.
		\end{proof}
		
		The quantity $\Gamma_\Lambda(z)$ can be lower bounded when $\Lambda$ is close to an identity channel, as shown by the following lemma. 
		
		\begin{lemma}\label{le:lc_weak}
			if the Z-basis average fidelity of $\Lambda$ satisfies $F_{Z}(\Lambda)\ge 1-c$ for some $0\le c\le 1$, then $\Gamma_\Lambda(z)\ge 1-2c$ for all $z\in\{0,1\}^n$.
		\end{lemma}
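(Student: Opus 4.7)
The plan is to write out both quantities in the transition-probability notation $p(b|x) \coleq \lbra{b}\Lambda\lket{x} = \bra{b}\Lambda(\ketbra{x}{x})\ket{b}$, which are non-negative and (by trace preservation of $\Lambda$) satisfy $\sum_b p(b|x)=1$ for every $x$. In this notation the hypothesis becomes $F_Z(\Lambda) = 2^{-n}\sum_x p(x|x) \ge 1-c$, while
\begin{equation}
\Gamma_\Lambda(z) = \frac{1}{2^n}\sum_{x,b}(-1)^{z\cdot(x\oplus b)} p(b|x).
\end{equation}

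Next I would split the double sum according to whether $x=b$ or $x\neq b$. The diagonal contribution is exactly $F_Z(\Lambda)$, since the sign $(-1)^{z\cdot (x\oplus x)}=1$. The off-diagonal contribution is bounded in absolute value by $2^{-n}\sum_{x\neq b} p(b|x)$, and by flipping the order of summation and applying $\sum_{b\neq x}p(b|x)=1-p(x|x)$ this equals $1-F_Z(\Lambda)\le c$. Combining the two pieces with the triangle inequality gives $\Gamma_\Lambda(z)\ge F_Z(\Lambda)-(1-F_Z(\Lambda)) = 2F_Z(\Lambda)-1 \ge 1-2c$, uniformly in $z$.

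There is no real obstacle here: the argument is essentially a one-line observation once one recognizes that $p(\cdot|x)$ is a probability distribution, and that the character-sum weight $(-1)^{z\cdot(x\oplus b)}$ is irrelevant on the diagonal $x=b$ and can only shrink the off-diagonal sum under absolute value. The only thing to double-check is that the bound is independent of $z$ (it is, since neither the diagonal term nor the off-diagonal crude bound sees $z$) and that the normalization $2^{-n}$ is consistent between the definitions of $F_Z(\Lambda)$ in Proposition~\ref{prop:gl_main} and $\Gamma_\Lambda(z)$ in Proposition~\ref{prop:lc_var}.
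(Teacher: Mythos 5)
Your proposal is correct and follows essentially the same route as the paper: the paper's proof also isolates the diagonal ($\delta = x\oplus b = 0$) term, uses non-negativity of the transition probabilities $\lbra{b}\Lambda\lket{x}$ together with trace preservation to bound the off-diagonal contribution by $1-F_{Z}(\Lambda)$, and arrives at the same intermediate bound $\Gamma_\Lambda(z)\ge 2F_{Z}(\Lambda)-1\ge 1-2c$.
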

		\begin{proof}
			\begin{equation}
			\begin{aligned}
			\Gamma_\Lambda(z) &= \cfrac1{2^n}\sum_{x,\delta\in\{0,1\}^n}(-1)^{z\cdot\delta}\lbra {x\oplus\delta}\Lambda\lket x \\
			&\ge\cfrac1{2^n}\sum_{x\in\{0,1\}^n}\left(\lbra x \Lambda\lket x - \sum_{\delta\in\{0,1\}^n,|\delta|\ne 0}\lbra {x\oplus\delta}\Lambda\lket x\right)\\
			&= \cfrac1{2^n}\sum_{x\in\{0,1\}^n} (2\lbra x\Lambda\lket x - 1)\\
			&= 2F_{Z}(\Lambda) - 1\\
			&\ge 1-2c.
			\end{aligned}
			\end{equation}
			where the second equality is by the fact that $\Lambda$ is trace-preserving, and hence $\sum_{b\in\{0,1\}^n}\lbra b\Lambda\lket x = 1$.
		\end{proof}

		Specifically, if we substitute the bound for $\Gamma_\Lambda(z)$ from Lemma~\ref{le:lc_weak} into the above theorem, we get Theorem~\ref{th:info2} in the main text. We conclude that our Protocol~\ref{proto:local} can mitigate the noise in the computation of the expectation of any $k$-local observable efficiently, given that $k$ is small and the noise is weak.
		
		\medskip
		
		
        \subsection{Estimation Procedure: Local}
		
		Now we consider the \textbf{\textrm{RShadow}} estimation procedure using ${\sf Cl}_2^{\otimes n}$. 
		Thanks to Lemma~\ref{le:snorm}, we only need to characterize $\|\cdot\|_{\mr{shadow},\Lambda}^2$.
		Due to technical difficulties, we are currently not able to bound $\|\cdot\|^2_{\mr{shadow},\Lambda}$ for the most general noise channel $\Lambda$, but we do have results for local noise channel $\Lambda$ (hence also for any separable $\Lambda$ by linearity). Suppose
		$\Lambda \equiv \bigotimes_{i=1}^n \Lambda_i$,
		and denote the Z-basis fidelity of the qubit channels $\Lambda_i$ as $F_{{Z},i}$.  
		Further assume $O$ is $k$-local, which means it is non-trivially supported on only $k$ qubits.
		We have
		\begin{equation}\label{eq:local_shadow_norm}
		\begin{aligned}
		\|O\|^2_{\mr{shadow},\Lambda} &= \max_{\sigma\in\cD(2^n)}\mbb E_{U\sim {\sf Cl}_2^{\otimes n}}\sum_{b\in\{0,1\}^n}\Tr\left[\left(\sigma\otimes\widehat\cM^{-1}(O)\otimes\widehat\cM^{-1}(O)\right)U^{\dagger\otimes 3}\left(\Lambda^\dagger(\ket b\bra b)\otimes\ket b\bra b\otimes\ket b\bra b\right)U^{\otimes 3}\right]\\
		\end{aligned}
		\end{equation}
		Consider the single-qubit case, one have
		\begin{equation}
		\begin{aligned}\label{eq:local_Phi}
		\Phi_i\coleq &\mbb E_{U\sim {\sf Cl}_2}\sum_{b={0,1}}  U^{\dagger\otimes 3}\left(\Lambda_i^\dagger(\ket b\bra b)\otimes\ket b\bra b\otimes\ket b\bra b\right)U^{\otimes 3}\\
		=&\sum_{b=0,1} \Phi^{(3)}_{\mr{Haar}}\left(\Lambda_i^\dagger(\ket b\bra b)\otimes\ket b\bra b\otimes \ket b\bra b\right)\\
		=&\sum_{b=0,1} \sum_{\pi,\xi\in S_3}c_{\pi,\xi}W_\pi\Tr\left(W_\xi(\Lambda_i^\dagger(\ket b\bra b)\otimes\ket b\bra b\otimes \ket b\bra b)\right)\\
		=& \frac1{12}\left[ (3-2F_{{Z},i})(W_{()}+W_{(2,3)}) + (2F_{{Z},i}-1)(W_{(1,2)}+W_{(1,3)}+W_{(1,2,3)}+W_{(1,3,2)})   \right]
		\end{aligned}
		\end{equation}
		where we use the Weingarten function to expand the Haar integral, see Eq.~\eqref{eq:Weingarten}, and the value of the Weingarten matrix is from Eq.~\eqref{eq:Wein32}.
		
		For any $X\in \mr{Herm}(2^n)$ and single-qubit Pauli operators $P_p, P_q$, we want to calculate the following quantity $\Tr\left[ (X\otimes P_p\otimes P_q) \Phi_i \right]$. By direct calculation using Eq.~\eqref{eq:local_Phi}, one can verify that there are following four different cases
		\begin{equation}\label{eq:local_1qubit}
		\Tr\left[ (X\otimes P_p\otimes P_q) \Phi_i \right] = \Tr(XP_pP_q)\cdot\left\{\begin{aligned}
		1,&\quad P_p=P_q=I,\\
		\frac13,&\quad P_p=P_q\ne I,\\
		\frac{2F_{{Z},i}-1}{3},&\quad (P_p=I, P_q\ne I)\text{ or }(P_p\ne I, P_q= I),\\
		0,& \quad \text{otherwise}. 
		\end{aligned}\right.
		\end{equation}
		This indicates that, the value $\Tr\left[ (X\otimes P_p\otimes P_q) \Phi_i \right]$ is non-zero if and only if the two single-qubit Pauli operators $P_p$ and $P_q$ commute. 
		
		Now we return to the evaluation of Eq.~\eqref{eq:local_shadow_norm}. Our strategy is similar to~\cite{huang2020predicting}. We first decompose $O$ into the Pauli operator basis (Note that, we use un-normalized Pauli operators here)
		\begin{equation}
		O \equiv \sum_{p\in\mbb Z_2^{2n}}\alpha_p P_p,\quad \text{for~}\alpha_p\in\mbb R.
		\end{equation}
		Since $O$ is $k$-local, one have $\alpha_p = 0$ for all $|p|> k$, where for any $p\in \mbb Z_2^{2n}$ we denote the Pauli weight of $P_p$ as $|p|$. 
		Also recall from Eq.~\eqref{eq:local_M_expression} that
		\begin{equation}
		\widehat \cM = \sum_{p\in\mbb Z_2^{2n}}\hat f_{z(p)}\lket{\sigma_p}\lbra{\sigma_p},
		\end{equation}
		where we define $z$ as the following mapping
		\begin{equation}
		z: \mbb Z_2^{2n}\to \{0,1\}^n,~z(p)_i = 0 \text{~iff~} (P_p)_i=I,\quad\forall i\in[n].
		\end{equation}
		The intuition is that after twirling over the local Clifford group the Pauli X, Y, Z indexes are symmetrized.
		
		\smallskip
		
		Now we can calculate Eq.~\eqref{eq:local_shadow_norm} as follows
		\begin{equation}\label{eq:local_shadow_norm2}
		\begin{aligned}
		\|O\|^2_{\mr{shadow},\Lambda} &= \max_{\sigma\in\cD(2^n)} \sum_{p,q\in\mbb Z_2^{2n}} \hat f_{z(p)}^{-1}\hat f_{z(q)}^{-1}\alpha_p\alpha_q \Tr\left[(\sigma\otimes P_p\otimes P_q)(\otimes_{i=1}^n\Phi_i)\right]\\
		&= \max_{\sigma\in\cD(2^n)} \sum_{p,q\in\mbb Z_2^{2n}} \hat f_{z(p)}^{-1}\hat f_{z(q)}^{-1}\alpha_p\alpha_q \delta(p,q) \Tr(\sigma P_p P_q)\frac{ \prod_{i\in[n]:(P_{p,i}=I, P_{q,i}\ne I)\lor (P_{p,i}\ne I, P_{q,i}= I)} (2F_{{Z},i}-1)}{3^{|p\lor q|}}\\
		&= \left\|\sum_{p,q\in\mbb Z_2^{2n}} \hat f_{z(p)}^{-1}\hat f_{z(q)}^{-1}\alpha_p\alpha_q \delta(p,q) P_p P_q\frac{ \prod_{i\in[n]:(P_{p,i}=I, P_{q,i}\ne I)\lor (P_{p,i}\ne I, P_{q,i}= I)} (2F_{{Z},i}-1)}{3^{|p\lor q|}}\right\|_\infty\\
		&\le \sum_{p,q\in\mbb Z_{2}^{2n}} \left| \hat f_{z(p)}^{-1}\hat f_{z(q)}^{-1}\alpha_p\alpha_q \delta(p,q)\frac{ \prod_{i\in[n]:(P_{p,i}=I, P_{q,i}\ne I)\lor (P_{p,i}\ne I, P_{q,i}= I)} (2F_{{Z},i}-1)}{3^{|p\lor q|}}\right|\\
		&\le \sum_{p,q\in\mbb Z_{2}^{2n}} \delta(p,q) 3^{|p\land q|}|\alpha_p||\alpha_q|  \frac{|\hat f_{z(p)}^{-1}\hat f_{z(q)}^{-1}|}{3^{|p|}3^{|q|}}\\
		&\le \left(\sum_{p,q\in\mbb Z_{2}^{2n}} \delta(p,q) 3^{|p\land q|}|\alpha_p||\alpha_q|\right)\cdot\left(\max_{z\in\{0,1\}^n:|z|\le k}\frac{\hat f_z^{-2}}{3^{2|z|}}\right).
		\end{aligned}
		\end{equation}
		Here, for the second equality, we apply the single-qubit result from Eq.~\eqref{eq:local_1qubit}. The functional $\delta(p,q)$ equals to $1$ if $P_{p_i}$ commutes with $P_{q_i}$ for all $i\in[n]$ and equals to $0$ otherwise, and we have the following definitions
		\begin{equation}
		\begin{aligned}
		|p\lor q|\coleq & ~\#\{i\in[n]:P_{p,i}\ne I \text{ or } P_{q,i}\ne I\}.\\
		|p\land q|\coleq &~\#\{i\in[n]:P_{p,i}\ne I \text{ and } P_{q,i}\ne I\}.
		\end{aligned}
		\end{equation} 
		The third equality is by the dual characterization of the operator norm. The first inequality is by the fact that the operator norm of a Pauli operator is $1$. The second inequality is by relaxing $F_{{Z},i}$ to $1$ and noticing that $|p\land q| = |p\lor q| - |p| - |q|$. The last inequality uses the $k$-local property of $O$.
		
		\medskip
		
		\noindent The first factor of Eq.~\eqref{eq:local_shadow_norm2} can be bounded using the same method as in~\cite{huang2020predicting}. We reproduce their proof here for the convenience of the reader.
		{Without loss of generality, suppose $O$ is supported on the first $k$ qubits, and hence can be written as $O = \tilde O \otimes I_{2^{n-k}}$.} The decomposition of $\tilde O$ is denoted as 
		\begin{equation}
		    \tilde O = \sum_{p\in\mbb Z_2^{2k}}\tilde\alpha_p P_p. 
		\end{equation}
		For any two $q,s\in\mbb Z_2^{2n}$ we write $q\triangleright s$ if one can obtain $P_q$ from $P_s$ by replacing some single-qubit Paulis of $P_s$ with $I$. 
		Then,
		\begin{equation}\label{eq:local_1term}
		\begin{aligned}
		\sum_{p,q\in\mbb Z_{2}^{2n}} \delta(p,q) 3^{|p\land q|}|\alpha_p||\alpha_q| &= \sum_{p,q\in\mbb Z_{2}^{2k}} \delta(p,q) 3^{|p\land q|}|\tilde\alpha_p||\tilde\alpha_q|\\
		&= \frac{1}{3^k}\sum_{P_s\in\{P_X,P_Y,P_Z\}^{\otimes k}}\left(\sum_{q:q\triangleright s}3^{|q|}|\tilde\alpha_q|\right)^2\\
		&\le \frac{1}{3^k}\sum_{P_s\in\{P_X,P_Y,P_Z\}^{\otimes k}} \left(\sum_{q:q\triangleright s}3^{|q|}\right)\left(\sum_{q:q\triangleright s}3^{|q|}|\tilde \alpha_q|^2\right)\\
		&= 4^k \sum_{P_s\in\{P_X,P_Y,P_Z\}^{\otimes k}}\sum_{q:q\triangleright s}3^{|q|-k}|\tilde \alpha_q|^2\\
		&= 4^k \sum_{q\in\mbb Z_2^{2k}}|\tilde \alpha_q|^2\\
		&= 2^k \Tr(\tilde O^2)\le 4^k \|\tilde O\|_\infty^2= 4^k \|O\|_\infty^2
		\end{aligned}
		\end{equation}
		where in the first equality we restrict our attention to the first $k$ qubits, the second equality can be verified by checking the coefficients of every $|\tilde\alpha_p||\tilde\alpha_q|$, the first inequality is by Cauchy-Schwarz inequality, the third and fourth equality is by simple combinatoric arguments.
		For the last line, the first equation follows from the definition of $\tilde O$, the inequality follows from the relationship between the Hilbert-Schmidt norm and the operator norm, and the last equality is by the fact that the largest eigen value of $O$ equals to that of $\tilde O$.

		
		On the other hand, suppose the preceding calibration procedure guarantees $\hat f_z\ge\delta f_z$ for all $|z|\le k$ for some postive number $\delta$ close to $1$. Then the second term of Eq.~\eqref{eq:local_shadow_norm2} can be bounded as follows by Proposition~\ref{prop:lc_var},
		\begin{equation}\label{eq:E23}
		\begin{aligned}
		\max_{|z|\le k}\frac{\hat f_z^{-2}}{3^{2|z|}} \le \delta^{-2}\max_{|z|\le k}\frac{ f_z^{-2}}{3^{2|z|}} = \delta^{-2}\max_{|z|\le k} \Gamma_\Lambda(z)^{-2}
		\end{aligned}
		\end{equation}
		Since $\Lambda$ is assumed to be local noise, we have the following bound for $\Gamma_\Lambda(z)$, which could be better than Lemma~\ref{le:lc_weak},
		\begin{lemma}\label{le:lc_weak_local}
		Suppose $\Lambda\coleq\bigotimes_{i=1}^n\Lambda_i$ and satisfies $F_{Z}(\Lambda_i)\ge 1-\xi$ for all $i\in[n]$ and some $0\le\xi< \frac12$, then
	    \begin{equation}
	    \Gamma_\Lambda(z)\ge (1-2\xi)^{|z|},\quad \forall z\in\{0,1\}^n.    
	    \end{equation}
	    \end{lemma}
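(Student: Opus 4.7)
The plan is to exploit the tensor-product structure of $\Lambda$ to factorize $\Gamma_\Lambda(z)$ into a product of single-qubit quantities, and then apply the single-qubit version of the bound (Lemma~\ref{le:lc_weak}) to each factor.

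First I would observe that both of the ingredients appearing in the definition
\begin{equation}
\Gamma_\Lambda(z) = \frac{1}{2^n}\sum_{x,b\in\{0,1\}^n}(-1)^{z\cdot(x\oplus b)}\lbra{b}\Lambda\lket{x}
\end{equation}
factor across qubits under the assumption $\Lambda = \bigotimes_i \Lambda_i$: the matrix element splits as $\lbra{b}\Lambda\lket{x} = \prod_i \lbra{b_i}\Lambda_i\lket{x_i}$, and the sign splits as $(-1)^{z\cdot(x\oplus b)} = \prod_i (-1)^{z_i(x_i\oplus b_i)}$. Interchanging sum and product then gives $\Gamma_\Lambda(z) = \prod_{i=1}^n \Gamma_{\Lambda_i}(z_i)$, reducing everything to the one-qubit case.

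Next I would evaluate the two possible single-qubit factors. For $z_i=0$, the sign is trivial, and trace-preservation of $\Lambda_i$ yields $\Gamma_{\Lambda_i}(0) = \frac{1}{2}\sum_{x_i}\Tr(\Lambda_i(\ketbra{x_i}{x_i})) = 1$. For $z_i=1$, the quantity $\Gamma_{\Lambda_i}(1)$ is precisely the single-qubit version of $\Gamma_\Lambda$, and Lemma~\ref{le:lc_weak} applied to the single-qubit channel $\Lambda_i$ with constant $\xi$ gives $\Gamma_{\Lambda_i}(1) \ge 1 - 2\xi$. Since $1-2\xi > 0$ by assumption, multiplying these bounds together yields $\Gamma_\Lambda(z) \ge (1-2\xi)^{|z|}$ as claimed.

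The argument is essentially immediate once the factorization is noticed, so there is no serious obstacle; the only thing to be careful of is verifying that the trivial $z_i=0$ factors contribute exactly $1$ (which is where trace-preservation enters) so that the exponent in the final bound is $|z|$ rather than $n$. Compared to the generic bound of Lemma~\ref{le:lc_weak}, this locality-aware version is strictly sharper whenever $|z|<n$ and reflects the fact that a local noise channel acting on qubits outside the support of $P_z$ cannot degrade the corresponding Pauli fidelity.
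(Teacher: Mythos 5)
Your proof is correct and follows essentially the same route as the paper's: both exploit the tensor-product structure to factorize $\Gamma_\Lambda(z)$ over qubits, use trace preservation to show the $z_i=0$ factors equal $1$, and bound each $z_i=1$ factor by $1-2\xi$ using the fidelity assumption (your positivity remark, needed to multiply the bounds, is the right care to take). The only cosmetic difference is that you delegate the single-qubit bound to Lemma~\ref{le:lc_weak} applied to $\Lambda_i$, whereas the paper evaluates that factor inline as exactly $2F_{Z}(\Lambda_i)-1$; the two computations are identical in content.
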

		\begin{proof}
			\begin{equation}
			\begin{aligned}
			\Gamma_\Lambda(z) &= \cfrac1{2^n}\sum_{x,\delta\in\{0,1\}^n}(-1)^{z\cdot\delta}\lbra {x\oplus\delta}\Lambda\lket x \\
			& = \cfrac1{2^n} \prod_{i=1}^n\sum_{x,\delta\in\{0,1\}}(-1)^{z_i\cdot \delta}\lbra{x\oplus\delta}\Lambda_i\lket{x}\\
			& = \cfrac1{2^{|z|}} \prod_{i:z_i=1}\sum_{x,\delta\in\{0,1\}}(-1)^\delta\lbra{x\oplus\delta}\Lambda_i\lket{x}\\
			&= \prod_{i:z_i=1} (\sum_{x\in\{0,1\}}\lbra x\Lambda\lket x - 1)\\
			&= (2F_{Z}(\Lambda_i) - 1)^{|z|}\\
			&\ge (1-2\xi)^{|z|}.
			\end{aligned}
			\end{equation}
			where the third equality is by the fact that $\Lambda_i$ is trace-preserving, and hence $\sum_{x,\delta\in\{0,1\}}\lbra {x\oplus\delta}\Lambda\lket x = 2$, so we can eliminate those indexes $i$ such that $z_i = 0$.		
		\end{proof}
		
		Combine Lemma~\ref{le:lc_weak_local} with Eq.~\eqref{eq:E23}, we get the following lemma: (Note that we substitute $O$ with its traceless part $O_0$ in order to use Lemma~\ref{le:snorm} later.)


		\begin{lemma}\label{le:snorm_lc}
		For \textbf{\textrm{RShadow}} using ${\sf Cl}_2^{\otimes n}$, suppose the noise is local, \textit{i.e.} $\Lambda\coleq\bigotimes_{i=1}^n\Lambda_i$, and satisfies $F_{Z}(\Lambda_i)\ge 1-\xi$ for all $i\in[n]$ and some $0\le\xi< \frac12$. Then, if the calibration procedure guarantees $\hat f_z \ge \delta f_z$ for all $|z|\le k$ and some $\delta>0$, we have
		\begin{equation}
		\|O_0\|_{\mr{shadow},\Lambda}^2 \le \delta^{-2}(1-2\xi)^{-2k}~4^k \|O\|_\infty^2,
		\end{equation}
		for any $k$-local observable $O$.
	\end{lemma}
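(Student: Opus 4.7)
The plan is to assemble the bound from pieces that have already been set up in the preceding discussion. Specifically, I would combine three ingredients: the factorized upper bound on $\|\cdot\|^2_{\mr{shadow},\Lambda}$ derived in Eq.~\eqref{eq:local_shadow_norm2}; the combinatorial estimate on its first factor from Eq.~\eqref{eq:local_1term}; and the lower bound $\Gamma_\Lambda(z)\ge(1-2\xi)^{|z|}$ provided by Lemma~\ref{le:lc_weak_local} for product noise, which controls the second factor.

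First, I would apply Eq.~\eqref{eq:local_shadow_norm2} with $O_0$ in place of $O$. Writing $O = \tilde O\otimes I_{n-k}$ for the $k$-local structure, one has $O_0 = \tilde O_0\otimes I_{n-k}$ with $\tilde O_0\coleq \tilde O - 2^{-k}\Tr(\tilde O)\,I_k$, so $O_0$ is again $k$-local and the derivation of Eq.~\eqref{eq:local_shadow_norm2} carries through verbatim. Letting $\alpha_p$ denote the Pauli coefficients of $O_0$, this gives
\begin{equation}
\|O_0\|^2_{\mr{shadow},\Lambda}\;\le\;\Bigl(\sum_{p,q\in\mbb Z_2^{2n}}\delta(p,q)\,3^{|p\land q|}|\alpha_p||\alpha_q|\Bigr)\Bigl(\max_{|z|\le k}\tfrac{\hat f_z^{-2}}{3^{2|z|}}\Bigr).
\end{equation}
The combinatorial factor is bounded by repeating Eq.~\eqref{eq:local_1term} with $\tilde O_0$ in place of $\tilde O$, which yields $2^k\Tr(\tilde O_0^2)$; using $\Tr(\tilde O_0^2) = \Tr(\tilde O^2) - 2^{-k}\Tr(\tilde O)^2 \le \Tr(\tilde O^2)\le 2^k\|\tilde O\|_\infty^2 = 2^k\|O\|_\infty^2$, this first factor is at most $4^k\|O\|_\infty^2$. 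For the second factor, the calibration hypothesis $\hat f_z\ge \delta f_z$ and the identity $f_z = 3^{-|z|}\Gamma_\Lambda(z)$ from Proposition~\ref{prop:lc_var} yield
\begin{equation}
\max_{|z|\le k}\tfrac{\hat f_z^{-2}}{3^{2|z|}}\;\le\;\delta^{-2}\max_{|z|\le k}\Gamma_\Lambda(z)^{-2}\;\le\;\delta^{-2}(1-2\xi)^{-2k},
\end{equation}
where the last inequality uses Lemma~\ref{le:lc_weak_local} together with monotonicity of $(1-2\xi)^{-2\,\cdot}$ on nonnegative integers (since $0<1-2\xi<1$, the maximum over $|z|\le k$ is attained at $|z|=k$). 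Multiplying the two bounds gives the claim.

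There is essentially no obstacle at this stage: the heavy lifting has already been carried out, namely the per-qubit Weingarten evaluation underlying Eq.~\eqref{eq:local_shadow_norm2}, the Cauchy–Schwarz combinatorial step in Eq.~\eqref{eq:local_1term} that turns $\sum|\alpha_p|$ into the Hilbert–Schmidt norm $\Tr(\tilde O^2)$, and the tensor-product factorization of $\Gamma_\Lambda(z)$ in Lemma~\ref{le:lc_weak_local}. The only point requiring care is verifying that substituting $O$ by its traceless part $O_0$ preserves both the $k$-locality and the operator-norm bound, which is handled by the explicit identity $O_0 = \tilde O_0\otimes I_{n-k}$ and the elementary inequality $\Tr(\tilde O_0^2)\le \Tr(\tilde O^2)$ noted above.
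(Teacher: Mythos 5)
Your proof is correct and follows essentially the same route as the paper's: apply Eq.~\eqref{eq:local_shadow_norm2}, bound the combinatorial factor via Eq.~\eqref{eq:local_1term} and Cauchy--Schwarz, and bound the calibration factor via Eq.~\eqref{eq:E23} together with Lemma~\ref{le:lc_weak_local}. Your explicit verification that passing to the traceless part $O_0=\tilde O_0\otimes I_{n-k}$ preserves $k$-locality and satisfies $\Tr(\tilde O_0^2)\le\Tr(\tilde O^2)$ is a welcome detail that the paper only remarks on in passing.
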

		Compared to Proposition~2 from~\cite{huang2020predicting} that $\|O_0\|_\mr{shadow}^2\le 4^k \|O\|_\infty^2$, we conclude that, when the separable noise channel $\Lambda$ has not too low Z-basis fidelity per qubit and the noise calibration procedure is conducted sufficiently many rounds, the estimation procedure of our \textbf{\textrm{RShadow}} protocol using ${\sf Cl}_2^{\otimes n}$ is as efficient as the noiseless standard quantum shadow estimation protocol~\cite{huang2020predicting} up to a small multiplicative factor. That is to say, expectation value of any observable $O$ {located on a $k$-qubit subsystem} can be efficiently estimated.
		
		To complete the discussion, we give the following theorem as a rigorous version of Theorem~\ref{th:lc_all} in the main text.
		
		\begin{theorem}\label{th:lc_allr}
			For \textbf{\textrm{RShadow}} with ${\sf Cl}_2^{\otimes n}$, suppose the noise is local, \textit{i.e.} $\Lambda\coleq\bigotimes_{i=1}^n\Lambda_i$, and satisfies $F_{Z}(\Lambda_i)\ge 1-\xi$ for all $i\in[n]$ and some $0\le\xi< \frac12$. Then, if the number of calibration samples $R_C$ and the number of estimation samples $R_E$ satisfies
			\begin{equation}
			\begin{aligned}
			R_C & = 68\cdot 3^{k}\left(1+\frac{2^k}{\varepsilon_1}\right)^2\left(k\ln n+\ln2\delta^{-1}\right)   (1-2\xi)^{-2k},\\
			R_E &= \frac{34}{\varepsilon_2^2}\cdot 4^k \ln(2M/\delta_2) {(1+\varepsilon_1)^2}(1-2\xi)^{-2k},
			\end{aligned}
			\end{equation}
			respectively, then the protocol can estimate $M$ arbitrary linear functions $\Tr(O_1\rho),...\Tr(O_M\rho)$ such that $\|O_i\|_\infty\le 1$ and that $O_i$ is $k$-local, up to accuracy $\varepsilon_1+\varepsilon_2$ with success probability at least $1-\delta_1-\delta_2$. 
		\end{theorem}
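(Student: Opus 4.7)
The strategy mirrors the proof of Theorem~\ref{th:gl_allr}, replacing the global-Clifford ingredients by their local counterparts and tracking an extra locality factor of $2^k$ that enters through the derivation in Eq.~\eqref{eq:c23}. By the triangle inequality $|\hat o_i - \Tr(O_i\rho)|\le|\mathbb{E}(\hat o_i^{(r)})-\Tr(O_i\rho)| + |\hat o_i - \mathbb{E}(\hat o_i^{(r)})|$, so it suffices to control the systematic bias and the statistical fluctuation separately by $\varepsilon_1$ and $\varepsilon_2$.

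For the bias, the plan is to invoke Theorem~\ref{th:lc_stat} with the internal calibration accuracy set to $\varepsilon\coleq\varepsilon_1/2^k$ and confidence $1-\delta_1$. Substituting this choice of $\varepsilon$ into the explicit sample bound derived in the proof of Theorem~\ref{th:lc_stat}, and using Lemma~\ref{le:lc_weak_local} to replace $\min_{|z|\le k}\Gamma_\Lambda^{-2}(z)$ by the upper bound $(1-2\xi)^{-2k}$, produces precisely the advertised $R_C$. This accomplishes two things simultaneously with probability at least $1-\delta_1$: (i) it delivers the bias bound $|\mathbb{E}(\hat o_i^{(r)})-\Tr(O_i\rho)|\le(\varepsilon_1/2^k)\cdot 2^k\|O_i\|_\infty\le\varepsilon_1$ for every $k$-local $O_i$ with $\|O_i\|_\infty\le 1$, and (ii) it yields the multiplicative control $\hat f_z\ge(1+\varepsilon_1/2^k)^{-1}f_z\ge(1+\varepsilon_1)^{-1}f_z$ for every $|z|\le k$, where the second inequality is a harmless relaxation that aligns with the $(1+\varepsilon_1)^2$ factor appearing in the stated $R_E$.

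For the fluctuation, conditioning on the calibration event above, I would invoke Lemma~\ref{le:snorm} and then Lemma~\ref{le:snorm_lc} with $\delta=(1+\varepsilon_1)^{-1}$, giving $\mathrm{Var}(\hat o_i^{(r)})\le\|O_{0,i}\|_{\mr{shadow},\Lambda}^2\le 4^k(1+\varepsilon_1)^2(1-2\xi)^{-2k}\|O_i\|_\infty^2$. Feeding this variance into Lemma~\ref{le:median} with $N=34\,\mathrm{Var}/\varepsilon_2^2$ and $K=2\ln(2M/\delta_2)$, and taking a union bound over the $M$ target observables, delivers $|\hat o_i-\mathbb{E}(\hat o_i^{(r)})|\le\varepsilon_2$ for all $i$ simultaneously with probability at least $1-\delta_2$; the resulting $R_E=NK$ matches the stated expression up to absorbing the constants. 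Combining the two error contributions via the triangle inequality and the two failure events via a union bound completes the argument with overall success probability at least $1-\delta_1-\delta_2$. I do not foresee a substantive obstacle here, since every analytical ingredient is already established; the only point requiring care is the bookkeeping of the locality factor $2^k$, which forces one to calibrate to precision $\varepsilon_1/2^k$ and is precisely what produces the $(1+2^k/\varepsilon_1)^2$ dependence in $R_C$.
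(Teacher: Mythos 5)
Your proposal is correct and follows essentially the same route as the paper's own proof: calibrate via Theorem~\ref{th:lc_stat} combined with Lemma~\ref{le:lc_weak_local}, bound the estimation variance via Lemma~\ref{le:snorm} and Lemma~\ref{le:snorm_lc}, and conclude with Lemma~\ref{le:median}, a union bound over the $M$ observables, and the triangle inequality. Your bookkeeping is in fact slightly more explicit than the paper's, since you spell out the internal calibration accuracy $\varepsilon_1/2^k$ that produces the $(1+2^k/\varepsilon_1)^2$ factor in $R_C$ and the relaxation $\hat f_z\ge(1+\varepsilon_1/2^k)^{-1}f_z\ge(1+\varepsilon_1)^{-1}f_z$, both of which the paper leaves implicit.
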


		\begin{proof}
			First, according to Theorem~\ref{th:lc_stat}, for the given number of samples $R_C$ one have
			\begin{equation}\label{eq:eps1_lc}
			\left| \mbb E(\hat o_i^{(r)}) - \Tr(O_i\rho) \right| \le \varepsilon_1.
			\end{equation}
			Note that we apply the bound for $\Gamma_\Lambda(z)$ from Lemma~\ref{le:lc_weak_local}. 
			
			\noindent Meanwhile, from the proof of Theorem~\ref{th:lc_stat} (see Eq.~\eqref{eq:gamma_epsilon}), one also have
			\begin{equation}
			|\hat f_z^{-1}f_z-1|\le \varepsilon_1 \Rightarrow \hat f_z\ge (1+\varepsilon_1)^{-1} f_z,\quad \forall |z|\le k.
			\end{equation}
			Both equations hold simultaneously with probability at least $1-\delta_1$. 
			
			\smallskip
			
			\noindent Now, by Lemma~\ref{le:snorm} and Lemma~\ref{le:snorm_lc}, the single-round estimators in the estimation procedure satisfy
			\begin{equation}
			\mr{Var}(\hat o_i^{(r)}) \le4^k {(1+\varepsilon_1)^2}(1-2\xi)^{-2k}.
			\end{equation}
			So we set the median of mean estimators $\hat o_i$ of the estimation procedure with the following parameters:
			\begin{equation}
			N=\frac{34}{\varepsilon_2^2}\cdot 4^k {(1+\varepsilon_1)^2}(1-2\xi)^{-2k},\quad K=2\ln(2M/\delta_2).
			\end{equation}
			Then Lemma~\ref{le:median} combined with the union bound gives that the following holds for all $i$ with probability at least $1-\delta_2$:
			\begin{equation}\label{eq:eps2_lc}
			\left| \hat o_i - \mbb E(\hat o_i^{(r)}) \right| \le \varepsilon_2.
			\end{equation}
			Combining Eq.~\eqref{eq:eps1_lc} and Eq.~\eqref{eq:eps2_lc} using the triangular inequality gives
			\begin{equation}
			|\hat o_i - \Tr(O_i\rho)|\le \varepsilon_1+\varepsilon_2,
			\end{equation}
			which holds with probability at least $1-\delta_1-\delta_2$.
			This completes the proof.
		\end{proof}
		
		Specifically, if $\xi \ll \frac12$ then $(1-2\xi)^{-2k} = \left((1-2\xi)^{-\frac1{2\xi}}\right)^{4k\xi} \approx e^{4k\xi}$. That is how we get the bound in Theorem~\ref{th:lc_all}.
		
		\section{The effect of state preparation noise} \label{sec:app_spn}

		In this section, we will prove Theorem~\ref{th:sp_g} and Theorem~\ref{th:sp_l} in the main text establishing the robustness of \textbf{\textrm{RShadow}} against state preparation noise in the calibration procedure. Let's first fix the notations: We assume $\ket{\bm0}$ is experimentally prepared as some other state $\rho_{\bm0}$ which is fixed over time, and we will use a subscript ``SP'' to denote the state-preparation noisy version of our estimators. For example, $\widehat \cM_\mr{SP} = \sum_{\lambda\in R_{\mathbb G}}\hat f_{\lambda,\mr{SP}}\Pi_{\lambda}$ is our estimation for the physical channel $\widetilde\cM \coleq \sum_{\lambda\in R_{\mathbb G}}f_\lambda\Pi_{\lambda}$ when the calibration process suffers from state preparation error. 
		
		\subsection{Robustness of RShadow with Global Clifford Group}
		
		\begin{lemma}\label{le:sp_g}
		For \textbf{\textrm{RShadow}} using ${\sf Cl}(2^n)$, if the state-preparation fidelity satisfies
		\begin{equation}
		F(\ketbra{\bvec 0}{\bvec 0},\rho_{\bvec 0}) \ge 1-\varepsilon_\mr{SP},
		\end{equation}
		then the SP-noisy single-round estimator $\hat f_{SP}^{(r)}$ satisfies
		\begin{equation}
		\begin{aligned}
		&f\ge \mbb E(\hat f_\mr{SP}^{(r)}) \ge (1-2\varepsilon_\mr{SP})f,\\
		&\mr{Var}(\hat f_\mr{SP}^{(r)}) \le \frac{6d}{(d-1)^3}.
		\end{aligned}
		\end{equation}
		\end{lemma}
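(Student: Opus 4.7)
The plan is to decompose the state-preparation noisy input along and perpendicular to $\ket{\bvec 0}$ and then invoke results already established for the perfect-preparation case. Write $\rho_{\bvec 0} = p\ketbra{\bvec 0}{\bvec 0} + (1-p)\sigma$, where $p = \bra{\bvec 0}\rho_{\bvec 0}\ket{\bvec 0} \ge 1-\varepsilon_\mr{SP}$ and $\sigma$ is a density matrix with $\bra{\bvec 0}\sigma\ket{\bvec 0} = 0$. By linearity in the input state, both $\mbb E(\hat f_\mr{SP}^{(r)})$ and $\mbb E(\hat F_\mr{SP}^{(r)^2})$ split as $p\cdot(\text{ideal term}) + (1-p)\cdot(\sigma\text{-term})$, with the ideal terms supplied directly by Proposition~\ref{prop:gl_main} and Eq.~\eqref{eq:global_2ndmoment}.

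For the bias, the $\sigma$-term is obtained by Schur-Weyl (Lemma~\ref{le:schur}) applied to the 2-design twirl $\mbb E_U U^{\otimes 2}(\sigma \otimes \ketbra{\bvec 0}{\bvec 0})U^{\dagger\otimes 2} = c_I I + c_S S$; the constraints $\tr(\sigma)\tr(\ketbra{\bvec 0}{\bvec 0}) = 1$ and $\tr(\sigma\ketbra{\bvec 0}{\bvec 0}) = 0$ pin down $c_I = 1/(d^2-1)$ and $c_S = -1/(d(d^2-1))$, yielding $\mbb E_\sigma(\hat F^{(r)}) = (d-F_Z)/(d^2-1)$ and hence $\mbb E_\sigma(\hat f^{(r)}) = -f/(d-1)$. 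Combining with the ideal term $f$ gives $\mbb E(\hat f_\mr{SP}^{(r)}) = f(pd-1)/(d-1)$; the upper bound $f$ follows from $p \le 1$, while the lower bound $(1-2\varepsilon_\mr{SP})f$ follows from $p \ge 1-\varepsilon_\mr{SP}$ together with the elementary inequality $d/(d-1) \le 2$ for $d \ge 2$.

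For the variance, I will use $\mr{Var}(\hat f_\mr{SP}^{(r)}) \le \frac{d^2}{(d-1)^2}\mbb E(\hat F_\mr{SP}^{(r)^2})$, valid since $\hat F_\mr{SP}^{(r)} \ge 0$ has mean at most $1$. The ideal contribution is at most $6/((d+1)(d+2))$ by Eq.~\eqref{eq:global_2ndmoment}. For the $\sigma$-contribution I invoke the Weingarten formula Eq.~\eqref{eq:Weingarten} together with the Clifford 3-design property: in the permutation-trace vector Eq.~\eqref{eq:perm_tr} with $A=\sigma$ and $B=\ketbra{\bvec 0}{\bvec 0}$, the orthogonality $\bra{\bvec 0}\sigma\ket{\bvec 0}=0$ kills four of the six entries, leaving only $\xi\in\{(),(2,3)\}$; contracting the two corresponding columns of $c$ from Eq.~\eqref{eq:Wein3} with the measurement vector $[d,dF_Z,dF_Z,d,dF_Z,dF_Z]^T$ already tabulated in the derivation of Eq.~\eqref{eq:global_2ndmoment} yields $\mbb E_\sigma(\hat F^{(r)^2}) = 2(d+1-2F_Z)/((d-1)(d+1)(d+2))$, which is also at most $6/((d+1)(d+2))$ for $d \ge 2$ after using $F_Z \in [0,1]$. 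Hence $\mbb E(\hat F_\mr{SP}^{(r)^2}) \le 6/((d+1)(d+2))$ uniformly in $p$, and the elementary estimate $d^2/((d+1)(d+2)) \le d/(d-1)$ then delivers the claimed bound $6d/(d-1)^3$.

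The only non-routine step is the Weingarten computation of $\mbb E_\sigma(\hat F^{(r)^2})$; because the orthogonality $\bra{\bvec 0}\sigma\ket{\bvec 0}=0$ zeroes out four of the six contributing tuples, this reduces to a short matrix-vector contraction whose inputs already appear in the proof of Proposition~\ref{prop:gl_main}, so no genuinely new machinery is required.
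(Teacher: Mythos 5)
Your route is essentially correct and is genuinely different from the paper's, but one claim in it is false as stated and must be repaired. You assert that $\rho_{\bvec 0} = p\ketbra{\bvec 0}{\bvec 0} + (1-p)\sigma$ with $p=\bra{\bvec 0}\rho_{\bvec 0}\ket{\bvec 0}$ and $\sigma$ a \emph{density matrix} satisfying $\bra{\bvec 0}\sigma\ket{\bvec 0}=0$. Such a decomposition does not exist in general: take $\rho_{\bvec 0}=\ketbra{\psi}{\psi}$ with $\ket\psi = \sqrt{p}\,\ket{\bvec 0}+\sqrt{1-p}\,\ket{\bvec 1}$; then $\rho_{\bvec 0}-p\ketbra{\bvec 0}{\bvec 0}$ restricted to $\mathrm{span}\{\ket{\bvec 0},\ket{\bvec 1}\}$ has determinant $-p(1-p)<0$, so $\sigma=(\rho_{\bvec 0}-p\ketbra{\bvec 0}{\bvec 0})/(1-p)$ is not positive semidefinite. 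The gap is inessential, however, because nowhere do you use positivity of $\sigma$: the linearity split, the determination of $c_I,c_S$ from the two trace conditions, the Weingarten contraction, and the final combination (which needs only $0\le p\le 1$, true since $\rho_{\bvec 0}$ is a state) all require just that $\sigma$ be Hermitian with $\Tr\sigma=1$ and $\bra{\bvec 0}\sigma\ket{\bvec 0}=0$, which the operator above satisfies. So replace ``density matrix'' by ``trace-one Hermitian operator'' (and note the trivial case $p=1$ is exactly Proposition~\ref{prop:gl_main}), and your proof stands.

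As for the comparison: the paper works with $\rho_{\bvec 0}$ directly — for the mean it evaluates $\lbra{\bvec 0}\widetilde\cM\lket{\rho_{\bvec 0}}$ using the depolarizing form of $\widetilde\cM$, and for the second moment it performs one Weingarten computation with $\rho_{\bvec 0}$ in the first tensor slot, obtaining a closed form in $F_0=\bra{\bvec 0}\rho_{\bvec 0}\ket{\bvec 0}$ and $F_Z$ — whereas you isolate the deviation from the ideal input and reuse Proposition~\ref{prop:gl_main} and Eq.~\eqref{eq:global_2ndmoment}, paying only two small correction computations. Your intermediate values all check out: $\mbb E_\sigma(\hat F^{(r)})=(d-F_Z)/(d^2-1)$, $\mbb E_\sigma(\hat f^{(r)})=-f/(d-1)$, the combined mean $f(pd-1)/(d-1)$ coincides exactly with the paper's $\frac{d\bra{\bvec 0}\rho_{\bvec 0}\ket{\bvec 0}-1}{d-1}f$, and $\mbb E_\sigma\bigl[(\hat F^{(r)})^2\bigr]=2(d+1-2F_Z)/\bigl((d-1)(d+1)(d+2)\bigr)$ is correct (at $\Lambda=\id$ it reduces to $2/\bigl((d+1)(d+2)\bigr)$, the right Haar moment). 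Indeed, your decomposition yields a total second moment that correctly reduces to Eq.~\eqref{eq:global_2ndmoment} at $p=1$, a consistency check the paper's displayed intermediate formula $\frac{2(d-2F_Z-2F_0+2dF_ZF_0)}{(d^2-1)(d+2)}$ narrowly fails (its numerator is missing a $+1$ term; this slip is harmless for the final bound, which both derivations reach). Two bookkeeping remarks: the measurement vector $d\,[1,F_Z,F_Z,1,F_Z,F_Z]$ is tabulated in the proof of Lemma~\ref{le:snorm_gl} rather than in the derivation of Eq.~\eqref{eq:global_2ndmoment}, and Eq.~\eqref{eq:Wein3} is valid only for $d\ge 3$ — immaterial here since $d=2^n\ge 4$ in the multiqubit setting, which is the same convention the paper adopts.
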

		\begin{proof}
		    According to the calibration procedure described in Algorithm~\ref{Proto:robust} or Protocol~\ref{proto:Global} of App.~\ref{sec:app_gl} , we have
			\begin{equation}
			\begin{aligned}
			\mbb E(\hat F_\mr{SP}^{(r)}) &= \mbb E_{U\sim {\sf Cl}(2^n)}\sum_{b} \lbra{\bvec 0}{\cU}^\dagger\lket b\lbra b \Lambda {\cU}\lket{\rho_0}\\
			& = \lbra{\bvec 0} ~\left[\lket{\sigma_{\bvec 0}}\lbra{\sigma_{\bvec 0}}+f(I-\lket{\sigma_{\bvec 0}}\lbra{\sigma_{\bvec 0}})\right]~\lket{\rho_0}\\
			& = \frac1d+f(\bra{\bvec 0}\rho_{\bm0}\ket{\bvec 0}-\frac1d).\\
			\mbb E(\hat f_\mr{SP}^{(r)}) &= \frac{d\mbb E(\hat F_\mr{SP}^{(r)})-1}{d-1}\\
			&= \frac{d\bra{\bvec 0}\rho_{\bm0}\ket{\bvec 0}-1}{d-1}f\\
			&\ge (1-\varepsilon_\mr{SP}\frac{d}{d-1})f.
			\end{aligned}
			\end{equation}
			One can immediately conclude that $f\ge\mbb E(\hat f^{(r)}_{SP})\ge(1-2\varepsilon_\mr{SP})f$.  
			
			\medskip
			
			\noindent The second moment of $\hat F^{(r)}_{SP}$ can be written as (see Eq.~\eqref{eq:global_2ndmoment})
			\begin{equation}
			\begin{aligned}
			\mbb E(\hat F_\mr{SP}^{(r)^2}) &= \sum_{b\in\{0,1\}^n} \Tr\left[ \mbb E_{U\sim {\sf Cl}(2^n)}\left(U\rho_0U^\dagger\otimes U\ket{\bvec 0}\bra{\bvec 0}U^\dagger\otimes U\ket{\bvec 0}\bra{\bvec 0}U^\dagger\right) ~\left(\Lambda^\dagger(\ket b\bra b)\otimes \ket b\bra b\otimes \ket b\bra b\right)\right]\\
			&= \sum_{b\in\{0,1\}^n} \Tr\left[ \Phi_\mr{Haar}^{(3)}(\rho_0\otimes \ketbra{\bvec 0}{\bvec 0}\otimes \ketbra{\bvec 0}{\bvec 0}) ~\left(\Lambda^\dagger(\ket b\bra b)\otimes \ket b\bra b\otimes \ket b\bra b\right)\right]\\
			&= \sum_{b\in\{0,1\}^n}\sum_{\pi,\sigma\in S_3}c_{\pi,\sigma}\Tr\left[W_\pi  (\rho_0\otimes \ketbra{\bvec 0}{\bvec 0}\otimes \ketbra{\bvec 0}{\bvec 0})\right]\Tr\left[W_\sigma \left(\Lambda^\dagger(\ket b\bra b)\otimes \ket b\bra b\otimes \ket b\bra b\right) \right]\\
			&= \frac{2(d-2F_{Z} -2F_0+2dF_{Z} F_0)}{(d^2-1)(d+2)}\\
			&\le \frac{6d}{(d^2-1)(d+2)},
			\end{aligned}
			\end{equation}
			where we define $F_0\coleq\bra{\bvec 0}\rho_0\ket{\bvec 0}$ and $F_{Z}\coleq F_{Z}(\Lambda)$. Therefore,
			\begin{equation}
			\mr{Var}(\hat f_\mr{SP}^{(r)}) = \frac{d^2}{(d-1)^2}\mr{Var}(\hat F_\mr{SP}^{(r)})\le \frac{d^2}{(d-1)^2}\mbb{E}(\hat F_\mr{SP}^{(r)^2})\le \frac{6d}{(d-1)^3}.
			\end{equation}
		\end{proof}
		
	The following theorem is a more detailed formalisation of Theorem~\ref{th:sp_g} in the main text.
		
    \begin{theorem}
	For \textbf{\textrm{RShadow}} using $\sf{Cl}(2^n)$, if the state-preparation fidelity satisfies
	\begin{equation}
	    F(\ket{\bm 0}\bra{\bm 0},\rho_{\bm 0})\ge 1-\varepsilon_{\mr{SP}},
	\end{equation}
	then with $R=\tilde\cO(\varepsilon^{-2}F_Z^{-2})$ calibration samples, the subsequent estimation procedure with high probability satisfies
	\begin{equation}
    \left| \mbb E(\hat o^{(r)}) - \Tr(O\rho)  \right| \le (\varepsilon+2\varepsilon_\mr{SP}) \|O\|_\infty.
	\end{equation}
	up to the first order of $\varepsilon$ and $\varepsilon_\mr{SP}$ for any observable $O$. We have assumed $F_Z\coleq F_Z(\Lambda)\gg 1/d$.
	\end{theorem}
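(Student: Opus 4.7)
The plan is to follow the structure of the proof of Theorem~\ref{th:gl_stat}, but to account for the additional bias that state preparation error introduces into the estimator $\hat f_\mr{SP}$. The first step is to invoke the operator-level identity from Eq.~\eqref{eq:derivation_operator}, which still holds verbatim when $\hat f$ is replaced by $\hat f_\mr{SP}$, giving a bound on the asymptotic estimation error by $\|O_0\|_\infty \cdot |\hat f_\mr{SP}^{-1} f - 1|$. I would then split this scalar error by the triangle inequality into a statistical component $\bigl|\hat f_\mr{SP}^{-1} - \mbb E(\hat f_\mr{SP}^{(r)})^{-1}\bigr|\cdot f$ (arising from finite-sample fluctuation around the true mean of the SP-noisy estimator) and a systematic bias $\bigl|\mbb E(\hat f_\mr{SP}^{(r)})^{-1}f - 1\bigr|$ (arising from imperfect preparation of $\ket{\bm 0}$).

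For the bias, Lemma~\ref{le:sp_g} gives $f \ge \mbb E(\hat f_\mr{SP}^{(r)}) \ge (1-2\varepsilon_\mr{SP})f$, which directly implies
\begin{equation}
\bigl|\mbb E(\hat f_\mr{SP}^{(r)})^{-1} f - 1\bigr| \leq \frac{2\varepsilon_\mr{SP}}{1-2\varepsilon_\mr{SP}} = 2\varepsilon_\mr{SP} + O(\varepsilon_\mr{SP}^2).
\end{equation}
For the statistical term, I would use the variance bound $\mr{Var}(\hat f_\mr{SP}^{(r)}) \le 6d/(d-1)^3$ from the same lemma together with the median-of-means concentration result in Lemma~\ref{le:median}, to show that choosing the same scaling $R = \tilde{\mc O}(\varepsilon^{-2}F_Z^{-2})$ as in Theorem~\ref{th:gl_stat} is sufficient to keep this term at most $\varepsilon$ with high probability. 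The derivation mirrors Eq.~\eqref{eq:gamma_epsilon}: setting $\gamma \propto \varepsilon|\mbb E(\hat f_\mr{SP}^{(r)})|/(1+\varepsilon)$, and noting $\mbb E(\hat f_\mr{SP}^{(r)}) \approx f \approx F_Z/d$ whenever $F_Z \gg 1/d$ and $\varepsilon_\mr{SP}$ is small, the same sample complexity scaling is recovered up to constant factors absorbed in $\tilde{\mc O}$.

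Combining the two contributions via the triangle inequality and multiplying by $\|O_0\|_\infty$ (at most a constant multiple of $\|O\|_\infty$, with the constant absorbed at first order) yields the claimed bound $(\varepsilon + 2\varepsilon_\mr{SP})\|O\|_\infty$. The main subtlety is that the tolerance $\gamma$ for the median-of-means is naturally set in terms of $\mbb E(\hat f_\mr{SP}^{(r)})$ rather than $f$; the two differ by $O(\varepsilon_\mr{SP})$, so the substitution generates cross terms of order $\varepsilon\varepsilon_\mr{SP}$ that must be shown to be negligible at first order in both small parameters. This requires a careful but essentially routine re-examination of the chain of inequalities used in Theorem~\ref{th:gl_stat}, and is where the bulk of the bookkeeping in the formal proof lies.
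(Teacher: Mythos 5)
Your proposal is correct and takes essentially the same route as the paper's own proof: both reduce the error to $\|O\|_\infty\cdot|\hat f_\mr{SP}^{-1}f-1|$, bound the systematic bias $|\mbb E(\hat f_\mr{SP}^{(r)})-f|\le 2\varepsilon_\mr{SP}f$ and the variance $\mr{Var}(\hat f_\mr{SP}^{(r)})\le 6d/(d-1)^3$ via Lemma~\ref{le:sp_g}, control the statistical fluctuation with the median-of-means bound of Lemma~\ref{le:median}, and combine to obtain $\varepsilon+2\varepsilon_\mr{SP}$ at first order with $R=\tilde\cO(\varepsilon^{-2}F_Z^{-2})$. The only difference is cosmetic: the paper applies the triangle inequality to the numerator $|f-\hat f_\mr{SP}|$ and divides by $|\hat f_\mr{SP}|$ once, whereas you split at the level of the inverses $\hat f_\mr{SP}^{-1}$ and $\mbb E(\hat f_\mr{SP}^{(r)})^{-1}$; both rearrangements yield the same first-order bound and the same cross terms of order $\varepsilon\varepsilon_\mr{SP}$ that are discarded.
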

	
	\begin{proof}
	    First notice that the target function can be upper bounded as
	    \begin{equation}
	   \begin{aligned}
	   \left| \mbb E(\hat o^{(r)}) - \Tr(O\rho)  \right| &= \left| \lbra{O}\widehat \cM_\mr{SP}^{-1}\widetilde\cM-1\lket{\rho}\right| \\
	   &= \left| \lbra{O_0}\widehat \cM_\mr{SP}^{-1}\widetilde\cM-1\lket{\rho}\right|\\
	   & \le \left|\lbraket{O_0}{\rho}\right|\cdot \left| \hat f_\mr{SP}^{-1}f-1\right|\\
	   & \le \left\|O\right\|_\infty \cdot \left| \hat f_\mr{SP}^{-1}f-1\right|.
	   \end{aligned}
	    \end{equation}
	    According to Lemma~\ref{le:median}, by taking the parameters of the median of mean estimators as
	    \begin{equation}
	    \begin{aligned}
	    N &= 34\mr{Var}(\hat f^{(r)}_\mr{SP})\varepsilon^{-2}f^{-2},\\
	    K &= 2\ln(2\delta^{-1}),
	    \end{aligned}
	    \end{equation}
	    the following holds with probability at least $1-\delta$,
	    \begin{equation}
	        \left| \hat f_\mr{SP} - \mbb E(\hat f_\mr{SP}^{(r)}) \right| \le \varepsilon f.
	    \end{equation}
	    We also have, from Lemma~\ref{le:sp_g}, that
	    \begin{equation}
	        \left| \mbb E(\hat f_\mr{SP}^{(r)}) - f \right| \le 2\varepsilon_\mr{SP}f.
	    \end{equation}
	    Therefore, our final bound is as claimed
	    \begin{equation}
	    \begin{aligned}
	    \left| \mbb E(\hat o^{(r)}) - \Tr(O\rho)  \right| &\le \|O\|_\infty\cdot \frac{|f-\hat f_\mr{SP}|}{|\hat f_\mr{SP}|}\\
	    &\le \|O\|_\infty\cdot\frac{\varepsilon+2\varepsilon_\mr{SP}}{1-\varepsilon-2\varepsilon_\mr{SP}}\\
	    &=\|O\|_\infty\cdot(\varepsilon+2\varepsilon_\mr{SP}+ o(\varepsilon+2\varepsilon_\mr{SP})).
	    \end{aligned}
	    \end{equation}
	    The sample complexity is
	    \begin{equation}
	        R = NK \le 2\ln(2\delta^{-1})\cdot 204 \varepsilon^{-2}(F_Z-1/d)^{-2}\frac{(d+1)^2}{d(d-1)} = \tilde\cO(\varepsilon^{-2}F_Z^{-2}),
	    \end{equation}
	    for $F_Z\coleq F_Z(\Lambda)\gg 1/d$. Here we have used Lemma~\ref{le:sp_g} and Proposition~\ref{prop:gl_main} to bound $\mr{Var}(\hat f^{(r)}_\mr{SP})$ and $f$, respectively.
	\end{proof}

\subsection{Robustness of RShadow with Local Clifford Group}
		
Note that, we consider a local state-preparation noise model  for the results in this section, \textit{i.e.}, no cross-talk between qubits.
		
\begin{lemma}\label{le:sp_l}
For \textbf{\textrm{RShadow}} using ${\sf Cl}_2^{\otimes n}$, if the prepared state is in a product form, \textit{i.e.}, $\rho_{\bm 0} = \bigotimes_{i=1}^n\rho_{0,i}$, and the single-qubit state-preparation fidelity satisfies
\begin{equation}
	F(\ketbra{0}{0},\rho_{0,i}) \ge 1-\xi_\mr{SP},\quad \forall~i\in[n],
\end{equation}
for some $\xi_\mr{SP}<1/2$, then the SP-noisy single-round estimator $\hat f^{(r)}_{z,\mr{SP}}$ satisfies
\begin{equation}
\begin{aligned}
&f_z\ge \mbb E(\hat f^{(r)}_{z,\mr{SP}}) \ge (1-2 \xi_\mr{SP}|z|)f_z,\\
&\mr{Var}(\hat f^{(r)}_{z,\mr{SP}})\le 3^{-|z|},\quad\forall z\in\{0,1\}^{n}.
\end{aligned}
\end{equation}
\end{lemma}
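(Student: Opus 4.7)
The plan is to mirror the proofs of Proposition~\ref{prop:lc_var} and Lemma~\ref{le:sp_g}, exploiting the fact that both the noisy prepared state $\rho_{\bm0}=\bigotimes_i\rho_{0,i}$ and the twirling ensemble ${\sf Cl}_2^{\otimes n}$ factorize into single-qubit tensor factors. I will first rewrite
\begin{equation}
\mbb E(\hat f^{(r)}_{z,\mr{SP}}) = \sum_{b\in\{0,1\}^n}\tr\!\left[\,\mbb E_{U\sim{\sf Cl}_2^{\otimes n}}\!\left(U\rho_{\bm0}U^\dagger\otimes UP_zU^\dagger\right)\bigl(\Lambda^\dagger(\ketbra bb)\otimes\ketbra bb\bigr)\right],
\end{equation}
which is the same expression as in the proof of Proposition~\ref{prop:lc_var} with $\ketbra{\bvec0}{\bvec0}$ replaced by $\rho_{\bm0}$. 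Because the twirl is local and $\rho_{\bm0}$ is a product, the calculation splits into a product over qubits.

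For the single-qubit factors, I would redo the two $2$-design twirls from the proof of Proposition~\ref{prop:lc_var} with an arbitrary one-qubit state $\rho_{0,i}$ in the first slot. When $z_i=0$, $UP_IU^\dagger=I$ and $\mbb E_U(U\rho_{0,i}U^\dagger)=I/2$, so the factor $\frac12(\bra bX\ket b+\bra{b\oplus 1}X\ket{b\oplus 1})$ is unchanged. When $z_i=1$, a Weingarten (or direct 2-design) calculation gives $\mbb E_U(U\rho_{0,i}U^\dagger\otimes UP_ZU^\dagger)=(2F_{0,i}-1)\bigl(\tfrac23 P_{\mr{sym}^2}-\tfrac12 I\bigr)$, so the single-qubit factor gets multiplied by $(2F_{0,i}-1)$. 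Putting the $n$ qubits together reproduces the derivation leading to Eq.~(\ref{eq:lc_mean_estimator}) with an overall prefactor $\prod_{i:z_i=1}(2F_{0,i}-1)$, yielding
\begin{equation}
\mbb E(\hat f^{(r)}_{z,\mr{SP}})=\Big(\!\!\prod_{i:z_i=1}(2F_{0,i}-1)\Big)f_z.
\end{equation}
The hypothesis $F_{0,i}\ge 1-\xi_\mr{SP}$ gives $2F_{0,i}-1\in[1-2\xi_\mr{SP},1]$, and Bernoulli's inequality $(1-2\xi_\mr{SP})^{|z|}\ge 1-2|z|\xi_\mr{SP}$ (valid for $2|z|\xi_\mr{SP}\le 1$) yields the claimed two-sided bound on $\mbb E(\hat f^{(r)}_{z,\mr{SP}})$.

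For the variance I will bound it by the second moment, following Eq.~\eqref{eq:local_2ndmoment}. Again the computation factorizes over qubits. For $z_i=0$ the single-qubit factor is $\frac12\tr X$, which only uses $\tr\rho_{0,i}=1$. For $z_i=1$ I need to evaluate the $3$-fold twirl $\mbb E_U(U\rho_{0,i}U^\dagger\otimes UP_ZU^\dagger\otimes UP_ZU^\dagger)$, which by the Weingarten expansion (Eq.~\eqref{eq:Weingarten}, using the fact that ${\sf Cl}_2$ is a unitary $3$-design and $d=2$) equals $\sum_{\pi,\sigma\in S_3}c_{\pi,\sigma}W_\pi\,\tr\!\bigl(W_\sigma(\rho_{0,i}\otimes P_Z\otimes P_Z)\bigr)$. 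The crucial observation, read off from Eq.~\eqref{eq:perm_tr} with $A=\rho_{0,i}$, $B=P_Z$, is that every permutation trace either vanishes (because it contains a stray $\tr(P_Z)=0$) or reduces to a multiple of $\tr(\rho_{0,i}P_Z^2)=\tr\rho_{0,i}=1$, eliminating all $\rho_{0,i}$-dependence. Hence the $z_i=1$ single-qubit factor is identical to the ideal-preparation case and yields $\frac16\tr X$. Combining over qubits reproduces $\mbb E(\hat f^{(r)^2}_{z,\mr{SP}})=3^{-|z|}$ exactly as in the proof of Proposition~\ref{prop:lc_var}, giving the stated variance bound.

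The main obstacle is recognizing and verifying this last point, namely that state-preparation noise drops out of the second-moment computation even though it affects the first moment; a naive estimate would carry through another $\prod_i(2F_{0,i}-1)$-type suppression and ruin the variance bound. Dispatching this cleanly requires the Weingarten bookkeeping above and the fact that the non-trivial Pauli appears twice, so that every $\tr(\rho_{0,i}P_Z)$ term is annihilated by a companion $\tr(P_Z)=0$ factor. Once this observation is in place, the rest of the argument is essentially a bookkeeping exercise on top of Proposition~\ref{prop:lc_var}.
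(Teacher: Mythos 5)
Your proposal is correct and takes essentially the same route as the paper's proof: in particular, your key step for the variance — that in the Weingarten expansion every permutation trace of $\rho_{0,i}\otimes P_Z\otimes P_Z$ either contains a factor $\tr(P_Z)=0$ or reduces to $\tr(\rho_{0,i}P_Z^2)=1$, so that $\Tr\bigl(\vec W(\rho_{0,i}\otimes P_Z\otimes P_Z)\bigr)=[0,0,0,2,1,1]$ is independent of $\rho_{0,i}$ and the second moment equals the noiseless value $3^{-|z|}$ — is exactly the paper's argument. The only cosmetic difference is in the first moment, where the paper obtains $\mbb E(\hat f^{(r)}_{z,\mr{SP}})=\lbra{P_z}\widetilde\cM\lket{\rho_{\bm 0}}=f_z\prod_{i:z_i=1}(2\bra{0}\rho_{0,i}\ket{0}-1)$ in one line from the PTM representation of the twirled channel, rather than redoing the per-qubit two-design twirl as you do; both give the same product formula and hence the same Bernoulli-type bound.
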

		
	\begin{proof}
	    According to the calibration procedure described in Algorithm~\ref{Proto:robust} or Protocol~\ref{proto:local} of App.~\ref{sec:app_lc} , we have
		\begin{equation}\label{eq:d6}
		\begin{aligned}
		\mbb E(\hat f^{(r)}_{z,\mr{SP}}) &= \mbb E_{U\sim {\sf Cl}_2^{\otimes n}}\sum_{b} \lbra{P_z}{\cU}^\dagger\lket b\lbra b \Lambda {\cU}\lket{\rho_{\bm0}}\\
		&= \lbra{P_z} \sum_{m\in\{0,1\}^n}f_m\Pi_m \lket{\rho_{\bm0}}\\
		&= f_z\lbraket{P_z}{\rho_{\bm0}}\\
		&= f_z \prod_{i:z_i=1}(2\bra{0}\rho_{0,i}\ket{0}-1)\\
		&\ge (1-2|z|\,\xi_\mr{SP})f_z.
		\end{aligned}
		\end{equation}
		One can immediately conclude that $f_z\ge\mbb E(\hat f^{(r)}_{z,\mr{SP}})\ge(1-2|z|\,\xi_\mr{SP})f_z$.  
		
		\medskip
		
		\noindent To calculate the second moment, 
		\begin{equation}
		\mbb E(\hat f^{(r)^2}_{z,\mr{SP}}) = \sum_b\tr\left[\mbb E_{U\sim {\sf Cl}_2^{\otimes n}} \left(U\rho_{\bm0} U^\dagger\otimes UP_z U^\dagger \otimes UP_z U^\dagger\right)\left(\Lambda^\dagger(\ket b\bra b)\otimes\ket b\bra b\otimes\ket b\bra b\right) \right],
		\end{equation}
		we can first investigate the single-qubit case:
		\begin{equation}\label{eq:spn_local_1qubit}
		\begin{aligned}
		\mbb E_{U\sim {\sf Cl}_2}\left(U\rho_{0,i} U^\dagger\otimes U P_I U^\dagger \otimes U P_I U^\dagger\right)&= \frac12 I_2^{\otimes 3},\\  
		\mbb E_{U\sim {\sf Cl}_2}\left(U\rho_{0,i} U^\dagger\otimes UP_Z U^\dagger \otimes UP_Z U^\dagger\right)&=\Phi^{(3)}_\mr{Haar}(\rho_{0,i}\otimes P_Z\otimes P_Z),\\
		\end{aligned}
		\end{equation}
		To further simplify the second expressions, one can verify that
		\begin{equation}
		\Tr(\vec W (\rho_{0,i}\otimes P_Z\otimes P_Z)) = \begin{bmatrix}
		0 & 0 & 0 & 2 & 1 & 1
		\end{bmatrix},
		\end{equation}
		where $\vec{W}$ is defined in Eq.~\eqref{eq:perm_vec}. 
		Calculating the Haar integral using Eq.~\eqref{eq:Weingarten}, one immediately notice that the form of $\rho_{0,i}$ has nothing to do with the result. So we can safely replace all $\rho_{0,i}$ with $\ket 0\bra 0$ and retrieve the result with no state preparation error: $\mbb E(\hat f^{(r)^2}_{z, \mr{SP}}) = \mbb E(\hat f_z^{(r)^2})=3^{-|z|}$ and hence $\mr{Var}(\hat f^{(r)}_{z,\mr{SP}})\le 3^{-|z|}$. 
	\end{proof}

	The following theorem is a more detailed formalisation of Theorem~\ref{th:sp_l} in the main text.

    \begin{theorem}
	For \textbf{\textrm{RShadow}} using $\sf{Cl}_2^{\otimes n}$, if the state is prepared as some {product state} $\rho_{\bm 0}=\bigotimes_{i=1}^n\rho_{0,i}$ and the single-qubit state-preparation fidelity satisfies
	\begin{equation}
	    F(\ket{0}\bra{0},\rho_{0,i})\ge 1-\xi_\mr{SP},\quad\forall i\in[n],
	\end{equation}
	then with $R = \tilde \cO(3^k\varepsilon^{-2}F_Z^{-2})$ calibration samples, the subsequent estimation procedure with high probability satisfies
	\begin{equation}
    \left| \mbb E(\hat o^{(r)}) - \Tr(O\rho)  \right| \le (\varepsilon+2 k\xi_{SP})2^k \|O\|_\infty.
	\end{equation}
	up to the first order of $\varepsilon$ and $k\xi_{SP}$, for any $k$-local observable $O$.
	\end{theorem}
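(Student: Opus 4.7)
The plan is to follow the same two-step decomposition used in the proof of Theorem~\ref{th:sp_g}, but now adapted to the local Clifford setting where the error propagates through all Pauli fidelities $\{f_z\}_{|z|\le k}$ rather than a single scalar $f$. First I would invoke the deterministic estimate from the proof of Theorem~\ref{th:lc_stat} (specifically Eq.~\eqref{eq:c23}), which gives
\begin{equation}
\left| \mbb E(\hat o^{(r)}) - \Tr(O\rho)  \right| \;\le\; \max_{|z|\le k}\left|\hat f_{z,\mr{SP}}^{-1} f_z - 1\right|\cdot 2^k\|O\|_\infty.
\end{equation}
This reduces the task to controlling, uniformly in $z$, the relative error of each noise-corrupted fidelity estimate $\hat f_{z,\mr{SP}}$ against the true physical fidelity $f_z$.

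Next I would split this relative error into a statistical part and a state-preparation-bias part via the triangle inequality:
\begin{equation}
\left|\hat f_{z,\mr{SP}}^{-1} f_z - 1\right| \;=\; \frac{|f_z - \hat f_{z,\mr{SP}}|}{|\hat f_{z,\mr{SP}}|} \;\le\; \frac{|\hat f_{z,\mr{SP}} - \mbb E(\hat f_{z,\mr{SP}}^{(r)})| + |\mbb E(\hat f_{z,\mr{SP}}^{(r)}) - f_z|}{|\hat f_{z,\mr{SP}}|}.
\end{equation}
For the bias term, Lemma~\ref{le:sp_l} provides $|\mbb E(\hat f_{z,\mr{SP}}^{(r)}) - f_z| \le 2|z|\xi_{\mr{SP}}f_z \le 2k\xi_{\mr{SP}} f_z$. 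For the statistical term, the same lemma gives $\mr{Var}(\hat f_{z,\mr{SP}}^{(r)})\le 3^{-|z|}$, so by Lemma~\ref{le:median} choosing the median-of-means parameters $N = 34\cdot 3^{-|z|}/(\varepsilon f_z)^2$ and $K = 2\ln(2/\delta)$ ensures $|\hat f_{z,\mr{SP}} - \mbb E(\hat f_{z,\mr{SP}}^{(r)})|\le\varepsilon f_z$ with probability at least $1-\delta$.

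To make the bound uniform over all $|z|\le k$, I would take $K = 2\ln(2n^k/\delta)$ and a union bound across the at most $n^k$ relevant strings. Substituting $f_z = 3^{-|z|}\Gamma_\Lambda(z)$ and bounding $\Gamma_\Lambda(z)\ge (1-2\xi)^{|z|} = \Omega(F_Z)^{|z|}$ via Lemma~\ref{le:lc_weak_local}, the worst case occurs at $|z|=k$ and yields
\begin{equation}
N \;\le\; \frac{34\cdot 3^{k}}{\varepsilon^2\,\Gamma_\Lambda^{\min\,2}}, \qquad R = NK \;=\; \tilde{\mc O}\!\left(\frac{3^k}{\varepsilon^2 F_Z^2}\right),
\end{equation}
where $F_Z$ is understood as $\min_i F_Z(\Lambda_i)$. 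Combining the two bounded pieces gives $\max_z|\hat f_{z,\mr{SP}}^{-1}f_z -1| \le (\varepsilon + 2k\xi_{\mr{SP}})/(1 - \varepsilon - 2k\xi_{\mr{SP}})$, which to first order is $\varepsilon + 2k\xi_{\mr{SP}}$, delivering the claimed inequality.

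I do not expect any genuine obstacle: the two lemmas already carry out the technical work and the argument is structurally identical to the noiseless Theorem~\ref{th:lc_stat} plus the global SP-noise Theorem~\ref{th:sp_g}. The only minor care required is in handling the $|z|$-dependent bias coefficient $2|z|\xi_{\mr{SP}}$, which I would simply majorize by $2k\xi_{\mr{SP}}$ since only $|z|\le k$ enter the error bound of Eq.~\eqref{eq:c23}; this uniform pessimization costs nothing in the asymptotic scaling but gives a clean linearity in $k$ that matches the statement.
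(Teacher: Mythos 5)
Your proposal is correct and follows essentially the same route as the paper's proof: the reduction via Eq.~\eqref{eq:c23}, the triangle-inequality split into a statistical term and a state-preparation bias term controlled by Lemma~\ref{le:sp_l}, the median-of-means parameters from Lemma~\ref{le:median}, the union bound over the at most $n^k$ strings with $|z|\le k$, and the first-order expansion $\frac{\varepsilon+2k\xi_\mr{SP}}{1-\varepsilon-2k\xi_\mr{SP}} = \varepsilon+2k\xi_\mr{SP}+o(\cdot)$ are all exactly the paper's steps. The one deviation is your final lower bound on $\Gamma_\Lambda(z)$: you invoke Lemma~\ref{le:lc_weak_local}, which requires the gate/measurement noise to be a product channel $\Lambda=\bigotimes_i\Lambda_i$, and you accordingly reinterpret $F_Z$ as $\min_i F_Z(\Lambda_i)$; but the theorem places no locality assumption on $\Lambda$ (only the \emph{state-preparation} noise is assumed product), and $F_Z$ there means $F_Z(\Lambda)$. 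The paper instead uses the general bound $\Gamma_\Lambda(z)\ge 2F_Z(\Lambda)-1$ of Lemma~\ref{le:lc_weak}, valid for arbitrary gate-independent $\Lambda$, and only remarks that Lemma~\ref{le:lc_weak_local} is an optional strengthening when the noise happens to be local. Swapping Lemma~\ref{le:lc_weak} into your last step restores the full generality of the statement with no other change to your argument.
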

	
	\begin{proof}
	Suppose $O$ is a $k$-local observable for some $k$.
	Following exactly the same procedure as in the proof of Theorem~\ref{th:lc_stat} (see Eq.~\eqref{eq:c23}), we
	can bound our target function as follows,
	\begin{equation}
	\begin{aligned}
	\left| \mbb E(\hat o^{(r)}) - \Tr(O\rho)  \right| &\le 2^k\|O\|_\infty\cdot\max_{|z|\le k}\left|\hat f^{-1}_{z,\mr{SP}}f_z-1\right|.
	\end{aligned}
	\end{equation}
	
    \noindent According to Lemma~\ref{le:median}, by taking the parameters of the median of mean estimators as
    \begin{equation}
    \begin{aligned}
    N &= \max_{|z|\le k} 34\mr{Var}(\hat f^{(r)}_{z,\mr{SP}})\varepsilon^{-2}f_z^{-2},\\
    K &= 2\ln(2(\delta/n^k)^{-1}),
    \end{aligned}
    \end{equation}
    the following holds with probability at least $1-\delta/n^k$ for any $z$ whose weight is no larger than $k$, hence simultaneously holds for all such $z$ with probability at least $1-\delta$ by the union bound:
    \begin{equation}
        \left| \hat f_{z,\mr{SP}} - \mbb E(\hat f^{(r)}_{z,\mr{SP}}) \right| \le \varepsilon f_z,\quad \forall~z\in\{0,1\}^n:|z|\le k.
    \end{equation}
    We also have, from Lemma~\ref{le:sp_l}, that
    \begin{equation}
        \left| \mbb E(\hat f_{z,\mr{SP}}^{(r)}) - f_z \right| \le 2\xi_\mr{SP}|z|f_z,\quad \forall~z\in\{0,1\}^n.
    \end{equation}
    Therefore, our final bound is as claimed
    \begin{equation}
    \begin{aligned}
    \left| \mbb E(\hat o^{(r)}) - \Tr(O\rho)  \right| &\le 2^k\|O\|_\infty\cdot \max_{|z|\le k} \frac{|f_z-\hat f_\mr{z,SP}|}{|\hat f_\mr{z,SP}|}\\
    &\le 2^k\|O\|_\infty\cdot\frac{\varepsilon+2k\xi_\mr{SP}}{1-\varepsilon-2k\xi_\mr{SP}}\\
    &=2^k\|O\|_\infty\cdot(\varepsilon+2k\xi_\mr{SP}+ o(\varepsilon+2k\xi_\mr{SP})).
    \end{aligned}
    \end{equation}
    The sample complexity is
    \begin{equation}
    \begin{aligned}
        R = NK &\le 2\ln(2\delta^{-1}n^k) \cdot 34\cdot3^k \varepsilon^{-2} \Gamma_z(\Lambda)^{-2} \\
        &\le 2\ln(2\delta^{-1}n^k) \cdot 34\cdot3^k \varepsilon^{-2} F_Z(\Lambda)^{-2} \\
        &= \tilde\cO(3^k\varepsilon^{-2}F_Z^{-2}),
    \end{aligned}
    \end{equation}
    where we have used Lemma~\ref{le:sp_l} and Proposition~\ref{prop:lc_var} to bound $\mr{Var}(\hat f^{(r)}_{z,\mr{SP}})$ and $f_z$, respectively. The second inequality is by Lemma~\ref{le:lc_weak}. We remark that one can alternatively use a stronger bound given in Lemma~\ref{le:lc_weak_local} when the noise model is assumed to be local. 
	\end{proof}

		\comments{
		\section{The effect of gate dependent noise} \label{sec:gatedependent}
		
		In this section, we actually consider the case in which errors of gates cannot be approximated well using only side like Eq.~\eqref{gate_independent}. It can be gate independent, but maybe we can only simulate this gate independent noise with both sides. Or the noise is itself is gate dependent and we should try to estimate it first.
		\subsection{Gate dependent errors with conjugating manner}
		Shadow tomography~\cite{aaronson2018shadow} is an indirect estimation style of a quantum state $\rho$. 
		It attempts to gain a profile of this state by estimate result values $\{\tr[E_i\rho]\}$ of an ensemble of measurements.
		In~\cite{huang2020predicting}, Huang.et al proposed a sampling scalable algorithm to estimate the measurement result for arbitrary linear operator given an unknown state $\rho$. 
		The sample complexity of such an algorithm can surprisingly reach that it doesn't depend on the dimension of the state.
		Such a striking property inspires us to modify the protocol and generate a robust and practical one.
		The only possible experimental errors come from the realization of unitary implemented according to~\cite{huang2020predicting}.
		And to make our result more general, we want to figure out what should we do when the error is under gate-dependent and Markovian assumption.
		
		Since in~\cite{huang2020predicting}, the algorithm doesn't consider the noisy implementation of unitary gates, it directly applies the gate with a unitary conjugate. 
		In our case, we will employ a similar idea from \textit{Random Benchmarking} that we divide the unitary to be the product of a series unitary gates.
		The unitary gates in the middle are chosen uniformly random so we can manipulate the error by a gate set twirling.
		
		According to~\cite{wallman2018randomized}, gate-dependent noise can be represented through an approximately gate-dependent manner. 
		\begin{theorem}\label{thm:avnoise}
			Let $\bbG$ be a unitary two-design, $\{\tilde{\cG}:G\in\bbG\}$ be a corresponding set of Hermiticity-preserving maps, and $p$ and $t$ be the largest eigenvalues of $\bbE_{G\in\bbG} (\cG_{\mathrm{u}} \otimes \tilde{\cG})$ and $\bbE_{G\in\bbG} (\tilde{\cG})$ in absolute value respectively, where $\cG_{\mathrm{u}}(A) = \cG(A-\tr(A)/d)$.
			There exist linear maps $\cL$ and $\cR$ such that
			\begin{subequations}\label{eq:conditions}
				\begin{align}
				\bbE_{G\in\bbG}(\tilde{\cG}\cL\cG\ct) &= \cL \cD_{p,t} \label{eq:lcona}\\
				\bbE_{G\in\bbG}(\cG\ct \cR\tilde{\cG}) &= \cD_{p,t}\cR \label{eq:rcona}\\
				\bbE_{G\in\bbG}(\cG\cR\cL\cG\ct) &= \cD_{p,t} \label{eq:LRscale}.
				\end{align}
			\end{subequations}
		\end{theorem}
		And from~\cite{wallman2018randomized}, these $\cL$ and $\cR$ can be used to represent the realistic implementation of unitary gates as follows,
		\begin{gather}
		\tilde{\cG}=\cL\cG\cR+\Delta_G,
		\end{gather}
		where $\Delta_G$ is the only gate-dependent part of which the diamond norm can be bounded.
		Therefore, the calculation will be
		\begin{align*}
		\laa O|\rho\raa=&\laa O|\cM^{-1}\cM|\rho\raa\\
		=& \bbE_{\cU\in\bbG}[\laa O|\cM^{-1}\cU^{\dagger}\sum_{b}|b\raa\laa b|\tilde{\cU}|\rho\raa]\\
		=& \bbE_{\cU_1\in\bbG}\cdots\bbE_{\cU_m\in\bbG}[\laa O|\cM^{-1}\cU^{\dagger}\sum_{b}|b\raa\laa b|\tilde{\cU}_1\tilde{\cU}_2\cdots\tilde{\cU}_m|\rho\raa]\\
		=& \bbE_{\cU_1\in\bbG}\cdots\bbE_{\cU_m\in\bbG}[\laa O|\cM^{-1}\cU^{\dagger}\sum_{b}|b\raa\laa b|\tilde{\cU}_1\tilde{\cU}_2\cdots\tilde{\cU}_{m-1}(\cL\cU_m\cR+\Delta_m)|\rho\raa]\\
		=&\bbE_{\cU_1\in\bbG}\cdots\bbE_{\cU_m\in\bbG}[\laa O|\cM^{-1}\cU^{\dagger}\sum_{b}|b\raa\laa b|\tilde{\cU}_1\tilde{\cU}_2\cdots\tilde{\cU}_{m-1}(\cL\cU_{m-1}^{\dagger}\cdots\cU_1^{\dagger}\cU\cR+\Delta_m)|\rho\raa]\\
		=&\bbE_{\cU_1\in\bbG}\cdots\bbE_{\cU_m\in\bbG}[\laa O|\cM^{-1}\cU^{\dagger}\sum_{b}|b\raa\laa b|\tilde{\cU}_1\tilde{\cU}_2\cdots\tilde{\cU}_{m-1}\cL\cU_{m-1}^{\dagger}\cdots\cU_1^{\dagger}\cU\cR|\rho\raa\\
		&+\laa O|\cM^{-1}\cU^{\dagger}\sum_{b}|b\raa\laa b|\tilde{\cU}_1\tilde{\cU}_2\cdots\tilde{\cU}_{m-2}(\cL\cU_{m-2}^{\dagger}\cdots\cU_1^{\dagger}\cU\cU_m^{\dagger}\cR+\Delta_{m-1})\Delta_m|\rho\raa].
		\end{align*}
		Note the first term is the main result of our calculation and we now clarify that the remaining term is exponentially decaying with $m$.
		\begin{align}
		&\bbE_{\cU_1\in\bbG}\cdots\bbE_{\cU_m\in\bbG}[\laa O|\cM^{-1}\cU^{\dagger}\sum_{b}|b\raa\laa b|\tilde{\cU}_1\tilde{\cU}_2\cdots\tilde{\cU}_{m-2}(\cL\cU_{m-2}^{\dagger}\cdots\cU_1^{\dagger}\cU\cU_m^{\dagger}\cR+\Delta_{m-1})\Delta_m|\rho\raa]\notag\\
		=&\bbE_{\cU_1\in\bbG}\cdots\bbE_{\cU_m\in\bbG}[\laa O|\cM^{-1}\cU^{\dagger}\sum_{b}|b\raa\laa b|\tilde{\cU}_1\tilde{\cU}_2\cdots\tilde{\cU}_{m-2}\cL\cU_{m-2}^{\dagger}\cdots\cU_1^{\dagger}\cU\cU_m^{\dagger}\cR(\tilde{\cU}_m-\cL\cU_m\cR)|\rho\raa\notag\\
		&+\laa O|\cM^{-1}\cU^{\dagger}\sum_{b}|b\raa\laa b|\tilde{\cU}_1\tilde{\cU}_2\cdots\tilde{\cU}_{m-2}\Delta_{m-1}\Delta_m|\rho\raa]\label{eq:decay}.
		\end{align}
		According to~\eqref{eq:conditions}, the first term in~\eqref{eq:decay} equals to 0. Note this inferring doesn't depend on the index of unitary gate, we can claim that there is no cross term, the product os some $\Delta$ and some $\cU$, showing in the decaying term. We now show the overall result,
		\begin{align}
		\laa O|\rho\raa=&\laa O|\cM^{-1}\cM|\rho\raa\notag\\
		=&\bbE_{\cU_1\in\bbG}\cdots\bbE_{\cU_m\in\bbG}[\laa O|\cM^{-1}\cU^{\dagger}\sum_{b}|b\raa\laa b|\tilde{\cU}_1\tilde{\cU}_2\cdots\tilde{\cU}_{m-1}\cL\cU_{m-1}^{\dagger}\cdots\cU_1^{\dagger}\cU\cR|\rho\raa\notag\\
		&+\laa O|\cM^{-1}\cU^{\dagger}\sum_{b}|b\raa\laa b|\Delta_1\cdots\Delta_{m-1}\Delta_m|\rho\raa]\label{eq:overall result}.
		\end{align}
		Since the diamond norm of every $\Delta$ can be bounded, this product $\Delta^m$ is negligible. 
		
		There is still a problem when we implement Wallman's manner like~\eqref{eq:overall result}. When dealing the shadow tomography question using this $\cM$, the critical issue is to calculate $\cM^{-1}$. In~\eqref{eq:overall result}, regardless of the decaying term the processing operator $\cM$ can be calculated as a series of gate set twirling ($\cU,\cU_1,\cdots,\cU_{m-1}$). More specific, if we consider the gate set to be $n$-qubit Clifford set, the twirling will be depolarizing channel. \wy{Nevertheless, there is still an operator $\cR$ which we have no prior knowledge about}. This will directly affect the effect of this protocol. So it's necessary to find a fairly different representation of the gate-dependent implementation, of which the properties allow us to calculate $\cM^{-1}$ easily.
		
		\subsection{Gate dependent errors in one-side form}
		Note in the preceding discussion, the loophole exists when we need to consider what a redundant $\cR$ is. However, we can also use only one side to approximate the gate dependent noise with similar result to the error bound in~\cite{wallman2018randomized}. In this case, we approximate a gate dependent represent $R(g)$ by one side gate independent noise with an adjunctive noise, $\tilde{\cG}=\cE\cG+\Delta_G$. We make this $\cE$ satisfies several constraints that
		\begin{align}
		\mathbb{E}_G[\cG\cE\cG^{\dagger}]&=\cD_{p,t},\label{left-side1}\\
		\mathbb{E}_G[\tilde{\cG}\cE\cG^{\dagger}]&=\cE\cD_{p,t}.\label{left-side2}
		\end{align}
		Note these constraints are in subset of those of Theorem~\ref{thm:avnoise}, so we can prove similarly that there always exist some $\cE$ to satisfy those.
		\begin{lemma}
			Let $\bbG$ be a unitary two-design $\{\tilde{\cG}:G\in\bbG\}$ be a corresponding set of Hermiticity-preserving maps, and $p$ and $t$ be the largest eigenvalues of $\bbE_{G\in\bbG} (\cG_{\mathrm{u}} \otimes \tilde{\cG})$ and $\bbE_{G\in\bbG} (\tilde{\cG})$ in absolute value respectively, where $\cG_{\mathrm{u}}(A) = \cG(A-\tr(A)/d)$. There exist a linear map $\cE$ satisfies \eqref{left-side1} and \eqref{left-side2} such that
			\begin{align}
			\lV\tilde{\cG} - \cE\cG\rV \leq 
			\lV\tilde{\cG} - \cD_{p,t}\cG\rV + \frac{\lV \bbE_{G\in\bbG} (\tilde{\cG}\cG\ct) - \cD_{p,t}\rV }{1 - \bbE_{G\in\bbG} (\lV \tilde{\cG} - \cD_{p,t}\cG\rV)\lV\cD_{1/p,1/t}\rV}.
			\end{align}
		\end{lemma}
		\begin{proof}
			By the triangle inequality, sub-multiplicativity, and unitary invariance of the operator norm,
			\begin{align}\label{eq:bound_terms} 
			\lV \tilde{\cG} - \cE\cG\rV
			&\leq \lV \tilde{\cG} - \cD_{p,t}\cG\rV + \lV (\cD_{p,t}-\cE)\cG \rV \notag\\
			&\leq \lV \tilde{\cG} - \cD_{p,t}\cG\rV + \lV \cD_{p,t}-\cE \rV
			\end{align} 
			for all $G\in\bbG$.
			
			Now consider the preceding conditions, and expand $\cE=\cD_{p,t}+\cE_1$ and $\tilde{\cG}=\cD_{p,t}\cG+\tilde{\cG}_1$. Then we can substitute \eqref{eq:bound_terms} by these.
			\begin{align}\label{eq:pert_con}
			\cD_{p,t}^2 +\cE_1\cD_{p,t}
			&= \bbE_{G\in\bbG}(\cD_{p,t}\cG\cE\cG^{\dagger}) + \bbE_{G\in\bbG}(\tilde{\cG}_1\cE\cG^{\dagger}) \notag\\
			&= \cD_{p,t}^2 + \bbE_{G\in\bbG}(\tilde{\cG}_1\cE\cG^{\dagger}).
			\end{align}
			Canceling the common term in \eqref{eq:pert_con}, multiplying both sides by $\cD_{1/p,1/t}$ from the right, taking the operator norm and using the triangle inequality and sub-multiplicativity of the operator norm gives
			\begin{align}
			\lV \cE_1 \rV 
			&\leq 
			\lV \bbE_{G\in\bbG} (\tilde{\cG}_1\cG\ct)\rV 
			+ \lV \bbE_{G\in\bbG} ( \tilde{\cG}_1 \cE_1\cG\ct\cD_{1/p,1/t}) \rV \notag\\
			&\leq \lV \bbE_{G\in\bbG} (\tilde{\cG}_1\cG\ct)\rV + \bbE_{G\in\bbG} (\lV \tilde{\cG}_1\rV \lV\cE_1\rV \lV \cG\ct\cD_{1/p,1/t}\rV).
			\end{align}
			Rearranging and using the unitary invariance of the operator norm gives
			\begin{align}
			\lV \cE_1\rV &= \lV \cD_{p,t}-\cE \rV \notag\\
			&\leq \frac{\lV \bbE_{G\in\bbG} (\tilde{\cG}_1\cG\ct)\rV}{1 - \bbE_{G\in\bbG} (\lV \tilde{\cG}_1\rV) \lV\cD_{1/p,1/t}\rV} \notag\\
			&=\frac{\lV \bbE_{G\in\bbG} (\tilde{\cG}\cG\ct) - \cD_{p,t}\rV }{1 - \bbE_{G\in\bbG} (\lV \tilde{\cG} - \cD_{p,t}\cG\rV) \lV\cD_{1/p,1/t}\rV}.
			\end{align}
		\end{proof}
		Therefore, with this kind of approximation, we can avoid the dilemma~that there exists a right hand side error that cannot be eliminated. \wy{However, since we have relaxed the constraints comparing to the above section, it cannot hold any longer that the cross term of the combination of estimated channel $\cE\cG$ and error $\Delta$ will be 0.}
		
	}

	\section{More numerical results} \label{sec:app_numer}
		
	\subsection{Coherent and correlated noise with the local Clifford group}
		
	Here, we present more numerical results to show the performance of \textbf{\textrm{RShadow}} in the task of estimating the (average) 2-point ZZ correlation function of the GHZ state using the local Clifford group. More precisely, the quantity we want to estimate can be expressed as \begin{equation}\label{eq:Zcorr}
	    \frac{1}{n-1}\sum_{i=1}^{n-1}\bra{\mr{GHZ}_n}Z_iZ_{i+1}\ket{\mr{GHZ}_n},
	\end{equation}
	the true value of which is obviously $1$.
	The aim of this appendix is to close one gap between the theory we derived in the main text and practical needs: whether \textrm{\textbf{RShadow}} using the local Clifford group is still sample-efficient against qubitwise-correlated noises. Our numerical results give an affirmative answer.
	Here, we work with $5$-qubit GHZ state. We perform the estimation task under two different noise models: single-qubit $X$-axis rotation and two-qubit $XX$ cross-talk noises. When the single-qubit $X$-axis rotation error happens, each qubit will experience a coherent rotation after the implementation of the random unitary gate $R_X(\theta) = e^{-i\theta X}$,
	where $\theta = \frac{k\pi}{40}$, $k=0,1,2,3,4,5$. When the two-qubit $XX$ cross-talk noise happens, each two adjacent qubits will experience a coherent rotation after the implementation of the random unitary gate $R_{XX}(\theta) = e^{-i\theta XX}$, where $\theta = \frac{3k\pi}{100}$, $k=0,1,2,3,4,5$. 
	For clarity, we estimate the 2-point correlation function of the $5$-th qubit with every other qubit and plot the average values. 

\begin{figure}[!htbp]
\centering
\includegraphics[width = 0.5\columnwidth]{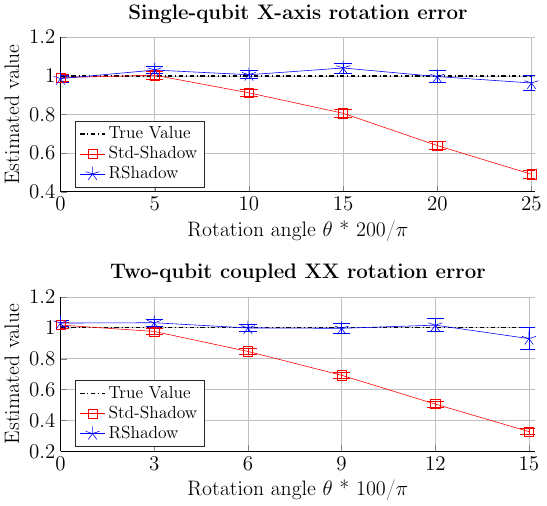}
\caption{$5$-qubit GHZ 2-point correlation function estimation under the coherent noise models, including single-qubit $X$-axis rotation and two-qubit $XX$-coupling noise.}
\label{fig:local_Xrot}
\end{figure}

\subsection{Gate-dependent noise: more details}\label{sec:app_gd}

Here, we present more details about the gate-dependent noise simulations in Sec.~\ref{sec:gatedependent} of the main text.

\medskip

\emph{Gate decomposition}: In our simulations, we decompose each single-qubit Clifford gate using the following three generators
\begin{equation}
    \left\{R_P\left(\frac\pi 2\right) = \exp(-i\frac{\pi}{4}P),~P=X,Y,Z\right\}.
\end{equation}
To see how this works, note that each single-qubit Clifford gate can be uniquely specified by its conjugation on Pauli Z and Pauli X. In notations, any single-qubit Clifford gate $C$ can be equivalently described by
\begin{equation}
    \{st[C] \coleq CZC^\dagger,\quad de[C] \coleq CXC^\dagger\},
\end{equation}
where $st[C]$ and $de[C]$ are both one of $\{\pm X,\pm Y,\pm Z\}$. They are usually called the stabilizer and de-stabilizer of $C$, respectively. 
Another way to say this is that $C$ rotates the 3d Cartesian coordinates so that the $+X$ ($+Z$) direction is now at the $st[C]$ ($de[C]$) direction.

Now, we can decompose $C$ into two subsequent rotations: First rotate the $+Z$ direction into the direction specified by $st[C]$ along either the $X$ or $Y$ axis. Then rotate the \emph{new} $+X$ direction into $de[C]$ along the $st[C]$ axis. Each rotation is of angle $\{0,\frac\pi2,\pi,\frac{3\pi}{2}\}$, so it can be implemented by concatenating up to three generators.

As an example mentioned in the main text, the Hadamard gate has the following stabilizer and de-stabilizer
$$
\{st[H] = +X,\quad de[H] = +Z\},
$$
so it can be decomposed as $H = R_X^2(\frac\pi2)R_Y(\frac{\pi}{2})$, as shown in Fig.~\ref{fig:hadamard}.

\begin{figure}[!htp]
    \centering
    \includegraphics[width=0.6\columnwidth]{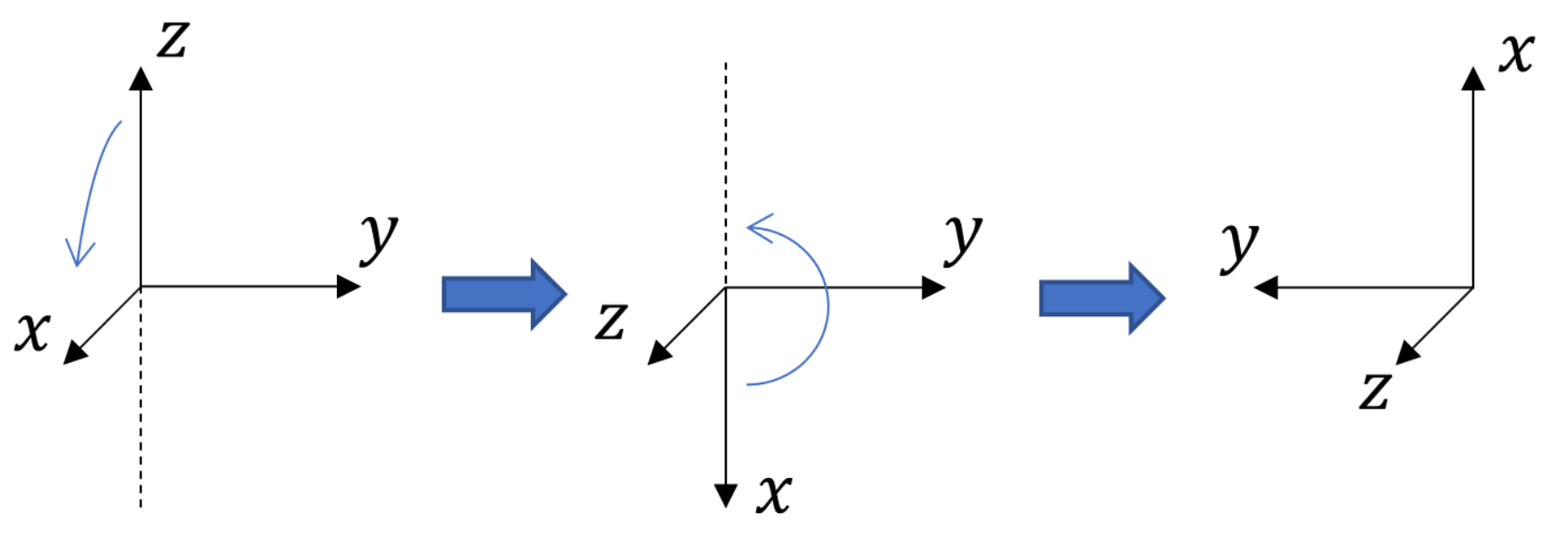}
    \caption{Rotational decomposition of the Hadamard gate.}
    \label{fig:hadamard}
\end{figure}

\medskip

\emph{Gate-dependence of the noise model}:
We claim in the main text that both \emph{pulse mis-calibration} and \emph{random over-rotation} are gate-dependent noise model. Here we explain in more details. The noisy generators for \emph{pulse mis-calibration} error are
$$
\{\widetilde{R}_P(\frac\pi2) = \exp(-i\frac12(\frac\pi2 P+\Delta_0)),~P=X.Y.Z\}.
$$
By the Baker-Campbell-Hausdorff formula, this can be expanded as
$$
\widetilde{R}_P(\frac\pi2) = \exp(-i\frac{\pi}{4}P)\exp(-i\frac12\Delta_0)\exp(\frac{\pi}{16}[{P},{\Delta_0}])\cdots.
$$
As long as $[P,\Delta_0]$ is different for different $P$, the noise channel is different for different generators. We also notice that the gate-dependent effect only appears in higher-order terms.

\medskip

For \emph{random over-rotation} error, the noisy generators are
$$
\widetilde{R}_P = \exp(-i\frac12(\frac\pi2+\delta)P).
$$
For a generator $R_P$, the noise channel for this noise model can be written as
\begin{equation}
    \begin{aligned}
        \Lambda_P(\rho) &= \int d\delta~ p(\delta;\sigma)R_P(\delta)\rho R_P^\dagger(\delta)\\
        &= \int d\delta~ p(\delta;\sigma)e^{-\frac{i}{2}\delta P}\rho e^{\frac{i}{2}\delta P}
    \end{aligned}
\end{equation}
where $p(\delta;\sigma) = \frac{1}{\sigma\sqrt{2\pi}}\exp(-\frac{\delta^2}{2\sigma^2})$ is the Gaussian distribution. Let $\ket{P\pm}$ denote the $\pm 1$ eigen-vectors of $P$ and carry out the Gaussian integral, we get
\begin{equation}
    \begin{aligned}
    \Lambda_P(\rho) &= \ket{P+}\bra{P+}\bra{P+}\rho\ket{P+} + \ket{P-}\bra{P-}\bra{P-}\rho\ket{P-} \\&+ e^{-\frac12\sigma^2}\ket{P+}\bra{P-}\bra{P+}\rho\ket{P-} + e^{-\frac12\sigma^2}\ket{P-}\bra{P+}\bra{P-}\rho\ket{P+}
    \end{aligned}
\end{equation}
This is a dephasing channel on the $\ket{P\pm}$ basis. Since this basis is different for different Pauli $P$, this noise model is gate-dependent by definition.

\medskip

\emph{Different directions of $\Delta_0$}:
In the numerical results of Fig.~\ref{fig:gd_main}, we fixed a random direction for $\Delta_0$ and only varied its magnitude. Here we show that there is nothing special about the direction we picked. We uniformly sample $10$ unit vectors $\bm v$ and set $\Delta_0$ to be $0.1\pi \times (\bm v_1X+\bm v_2Y+\bm v_3Z)$. For each of these noise settings, we use $R = 30000~(N=3000,~K=10)$ calibration samples and $R = 10000~(N=1000,~K=10)$ estimation samples for \textbf{RShadow}, and $R = 10000~(N=1000,~K=10)$ samples for standard \textbf{Shadow}. The performance is shown in Fig.~\ref{fig:sphere} in which the average and standard deviation is calculated over $10$ independent runs. For all cases, \textbf{RShadow} gives a more accurate estimate than standard \textbf{Shadow}.

\begin{table}[!htp]
    \centering
    \begin{tabular}{|c|c|}
    \hline
    Label of samples & $\bm v$ (direction of $\Delta_0$)\\
    \hline
    1&$(-0.550, +0.288, +0.784)$ \\
    2&$(-0.086, +0.987, -0.136)$ \\
    3&$(+0.652, -0.396, +0.647)$ \\
    4&$(+0.589, +0.183, +0.787)$ \\
    5&$(+0.543, -0.781, -0.309)$ \\
    6&$(-0.666, -0.720, -0.196)$ \\
    7&$(+0.174, -0.512, -0.841)$ \\
    8&$(+0.908, +0.417, +0.034)$ \\
    9&$(-0.183, +0.935, -0.304)$ \\
    10&$(+0.599, -0.704, -0.382)$ \\
    \hline
    \end{tabular}
    \caption{Directions of $\Delta_0$ that are uniformly randomly sampled from a sphere}
    \label{tab:my_label}
\end{table}

\begin{figure}[!htbp]
	\centering
    \includegraphics[width = 0.5\columnwidth]{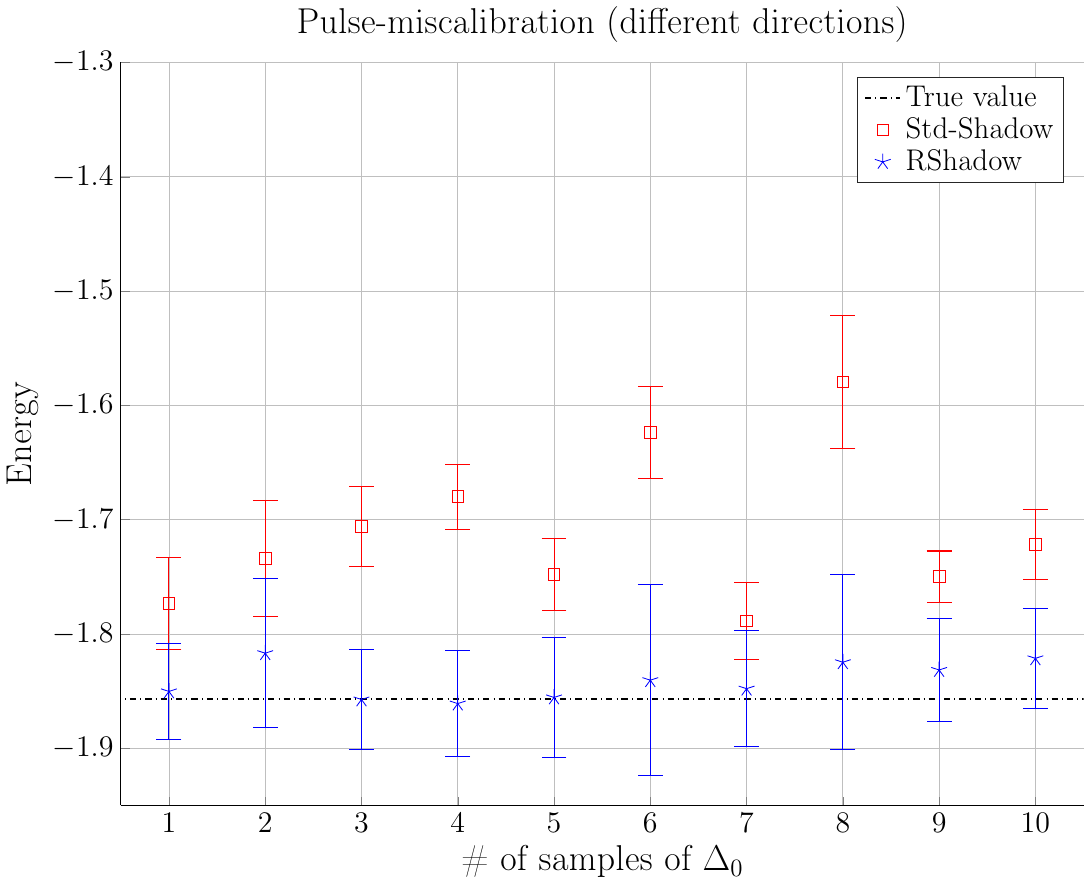}
    \caption{Ground-state energy estimation of $\mathrm{H}_2$ with pulse mis-calibration noise. Here we choose $\Delta_0 = 0.1\pi \times (\bm v_1X+\bm v_2Y+\bm v_3Z)$, where each different label of x axis corresponds to a different sample of $\bm v$ given in Table~\ref{tab:my_label}}.
	\label{fig:sphere}
\end{figure}

\end{appendix}
	
\end{document}